\documentclass[11pt]{article}
\usepackage[letterpaper, left=0.6in, right=0.6in, top=1in, bottom=1in]{geometry}
\usepackage{graphicx}
\usepackage{physics}
\usepackage{amsmath}
\usepackage{amssymb}
\usepackage{amsthm}
\usepackage{enumitem}
\usepackage{framed}
\usepackage{caption}
\usepackage{subfig}
\usepackage{adjustbox}
\usepackage{multirow}
\usepackage{multicol}
\usepackage{listings}
\usepackage{tikz}
\usepackage{xcolor}
\usepackage{bm}
\usepackage{mathtools}
\usepackage{algorithm}
\usepackage{algpseudocode,eqparbox}
\usepackage[T1]{fontenc}
\usepackage{wrapfig}
\usetikzlibrary{arrows,decorations.pathmorphing,decorations.footprints,decorations.pathreplacing,fadings,calc,trees,mindmap,shadows,decorations.text,patterns,positioning,shapes,matrix,fit}
\usetikzlibrary{shapes.misc}

\mathtoolsset{showonlyrefs}
\allowdisplaybreaks
\usepackage[colorlinks,citecolor=red,urlcolor=blue,bookmarks=false,hypertexnames=true]{hyperref}

\usepackage{hyperref}

\usepackage{letltxmacro}
\LetLtxMacro{\originaleqref}{\eqref}
\renewcommand{\eqref}{Eq.~\originaleqref}

\definecolor{codegreen}{rgb}{0,0.6,0}
\definecolor{codegray}{rgb}{0.5,0.5,0.5}
\definecolor{codepurple}{rgb}{0.58,0,0.82}
\definecolor{backcolour}{rgb}{0.95,0.95,0.92}

\definecolor{channelcolor}{rgb}{0.67,0.88,0.69}
\definecolor{infocolor}{rgb}{0.82,0.62,0.91}
\definecolor{cqcolor}{rgb}{0.99,0.56,0.67}
\lstdefinestyle{mystyle}{
   backgroundcolor=\color{backcolour},   
   commentstyle=\color{codegreen},
   keywordstyle=\color{magenta},
   numberstyle=\tiny\color{codegray},
   stringstyle=\color{codepurple},
   basicstyle=\ttfamily\footnotesize,
   breakatwhitespace=false,         
   breaklines=true,                 
   captionpos=b,                    
   keepspaces=true,                 
   numbers=left,                    
   numbersep=5pt,                  
   showspaces=false,                
   showstringspaces=false,
   showtabs=false,                  
   tabsize=2
}

\usepackage{stmaryrd}

\usepackage[style=ieee,backend=biber,doi=true,url=false,eprint=true]{biblatex}

\AtEveryBibitem{%
   \iffieldundef{doi}
      {}
      {\clearfield{eprint}%
       \clearfield{eprinttype}%
       \clearfield{eprintclass}}%
}

\bibliography{ref}

\usepackage{ifthen}

\newif\ifarxiv
\arxivtrue

\newif\ifisit
\isitfalse

\newif\ifextra
\extrafalse

\usetikzlibrary{shapes.geometric, arrows}
\tikzstyle{startstop} = [rectangle,  minimum width=3cm, minimum height=1cm,text centered,text width=10cm, draw=black ,fill=gray!20]
\tikzstyle{process} = [rectangle, minimum width=3cm, minimum height=1cm, text centered,text width=10cm, draw=black,fill=orange!20]
\tikzstyle{arrow} = [thick,->,>=stealth]
\tikzstyle{state}=[shape=circle,draw=blue!50,fill=blue!20]
\tikzstyle{observation}=[shape=rectangle,draw=orange!50,fill=orange!20]
\tikzstyle{lightedge}=[<-,dotted]
\tikzstyle{mainstate}=[state,thick]
\tikzstyle{mainedge}=[<-,thick]
\definecolor{bitcolor}{rgb}{1,0.84314,0}
\definecolor{checkcolor}{rgb}{0.52941,0.80784,1}
\usepackage{pgfplots}
\pgfplotsset{compat=1.18}
\renewcommand{\epsilon}{\varepsilon}

\newtheorem{theorem}{Theorem}

\newtheorem{lem}[theorem]{Lemma}
\newtheorem{defn}[theorem]{Definition}
\newtheorem{corol}[theorem]{Corollary}

\newcommand{\cA}{\mathcal{A}}

\newcommand{\cC}{\mathcal{C}}

\newcommand{\cG}{\mathcal{G}}
\newcommand{\cH}{\mathcal{H}}

\newcommand{\cM}{\mathcal{M}}
\newcommand{\cN}{\mathcal{N}}

\newcommand{\cX}{\mathcal{X}}
\newcommand{\cY}{\mathcal{Y}}

\newcommand{\bA}{\bm{A}}

\newcommand{\ba}{\bm{a}}

\newcommand{\bc}{\bm{c}}

\newcommand{\be}{\bm{e}}

\newcommand{\bz}{\bm{z}}

\newcommand{\mI}{\mathbb{I}}

\renewcommand{\triangleq}{\coloneqq}

\newcommand{\sym}[1]{S_N}

\newcommand{\E}{\ensuremath{\mathbb{E}}}
\renewcommand{\Pr}{\ensuremath{\mathbb{P}}}

\global\long\def\Pr{\mathbb{P}}%

\newcommand{\vnop}{\varoast}

\newcommand{\cnop}{\boxast}

\newcommand{\blambda}{\bm{\lambda}}

\newcommand{\perr}{\Pr_{\text{err}}}
\newcommand{\pblk}{\Pr_{\text{block}}}

\newcommand{\lambdaa}{\lambda^{(1)}}
\newcommand{\lambdab}{\lambda^{(2)}}

\newcommand{\bmu}{\bm{\mu}}

\newcommand{\mone}{\bm{1}}
\newcommand{\mzero}{\bm{0}}

\newcommand{\err}{\text{Err}}

\usepackage[title]{appendix}
\usepackage{authblk}

\title{Quantum Message Passing Convergence and Vanishing Block-Error Probability for Random LDPC Codes}
\author[1,2]{Avijit Mandal\footnote{avijit.mandal@duke.edu}}
\author[4]{Christophe Piveteau}
\author[5]{Joseph M. Renes}
\author[1,2,3]{Henry D. Pfister}
\hypersetup{
   pdftitle={Quantum Message Passing Convergence and Vanishing Block-Error Probability for Random LDPC Codes},
   pdfauthor={Avijit Mandal, Christophe Piveteau, Joseph M. Renes, and Henry D. Pfister},
   pdfsubject={Belief propagation with quantum messages for random LDPC codes},
   pdfkeywords={BPQM, LDPC codes, block-error probability, pure-state channels, density evolution}
}
\affil[1]{Department of Electrical and Computer Engineering, Duke University}
\affil[2]{Duke Quantum Center, Duke University}
\affil[3]{Department of Mathematics, Duke University}
\affil[4]{IBM Research Europe -- Zurich}
\affil[5]{Institute for Theoretical Physics, ETH Zurich}

\date{}

\begin{document}
\onecolumn

\maketitle

\begin{abstract}
Belief propagation with quantum messages (BPQM) is a quantum algorithm that decodes classical codes
transmitted over classical--quantum channels. It realizes optimal decoding on
tree factor graphs over pure-state classical-quantum channels.  However, this tree-based analysis does not ensure vanishing block-error probability for LDPC Tanner
graphs with cycles.  
In this work, we construct a two-stage BPQM decoder for random $q$-ary
LDPC codes over symmetric $q$-ary pure-state channels, where $q$ is prime, and
prove that its ensemble-average block-error probability vanishes as the
blocklength $N$ tends to infinity.  For regular ensembles with $d_v\geq3$,
fidelity bounds yield double-exponential decay of the average symbol-error
probability throughout the BPQM success region.  We apply depth-$\ell$ BPQM to
coordinates with tree neighbourhoods and treat the remaining coordinates as
erasures.  With a suitable $\ell=\Theta(\log\log N)$, a noncommutative union bound controls
the BPQM decoding errors, while the minimum-distance property guarantees
erasure recovery.  We also extend the analysis to finite-support irregular
ensembles.  These results are relevant to quantum algorithms based on Regev's reduction,
where coherent decoding uncomputes a codeword register. Decoded quantum interferometry (DQI) uses a closely related Fourier-based framework that reduces sparse max-LINSAT optimization problems to LDPC decoding problems on pure-state channels. Our results justify the use of BPQM in the decoding step of DQI and of coding-theoretic algorithms based on Regev’s reduction whenever the code is drawn from one of the random LDPC ensembles analyzed here and the induced memoryless symmetric pure-state channel lies in the BPQM success region.
% Density evolution (DE) tracks asymptotic symbol-error performance of BPQM decoding by propagating message distributions through the tree. 
\end{abstract}
\section{Introduction}

Decoding linear codes over classical--quantum channels is an important
problem in quantum communication
\cite{wilde2013towards,Renes-njp17}
and quantum algorithms
\cite{chailloux_quantumdecoding_2023,jordan2025optimization,shutty2026lqd}.
The study of coding for classical-quantum channels began with foundational
results by Holevo and by Schumacher and Westmoreland, who showed that rates
governed by Holevo information can be attained through joint measurements of
block outputs~\cite{holevo1998capacity,schumacher1997sending}. Subsequent work
investigated whether specific code families could attain the
Holevo information on classical-quantum channels. Wilde and Guha established channel polarization
for binary-input classical--quantum channels and proved that polar codes
achieve the symmetric Holevo information under quantum successive-cancellation
decoding~\cite{wilde2012polar}.

Here we consider symmetric \(q\)-ary pure-state channels, a
special class of classical--quantum channels relevant in algorithmic applications. Let
\begin{align*}
   W\colon x\longmapsto \ket{\psi_x}
\end{align*}
be such a channel with $x\in \mathbb{F}_q$, and let
\(\cC\subseteq\mathbb F_q^N\) be an \([N,K]\) linear code. When a codeword
$\bc\in\cC$ is transmitted,
the decoder receives $N$ quantum systems in the state 
\begin{align*}
   \ket{\Psi_{\bc}}
   \coloneqq
   \bigotimes_{i=0}^{N-1}\ket{\psi_{c_i}}
\end{align*}
and attempts to recover $\bc$. In general, the states associated with
different codewords are nonorthogonal, so decoding requires a joint
measurement that discriminates among the codeword states. Although the
optimal measurement is well defined, its implementation can require
operations whose complexity grows exponentially with the blocklength. This
motivates decoders that exploit the algebraic and graphical structure of
the code.

Such decoders have application in quantum algorithms, for example in Regev's reduction, a foundational result in lattice-based cryptography that
relates worst-case lattice problems to the learning-with-errors problem through
a quantum procedure~\cite{regev_reduction_2009}. This reduction was later
adapted to linear codes by Debris-Alazard, Remaud, and Tillich, who showed that
a decoder for a random linear code \(\cC\) can be used in a quantum algorithm
to find short codewords in its dual code
\(\cC^\perp\)~\cite{thomas_reduction_2024}. Chen, Liu, and Zhandry subsequently
considered a quantum version of learning with errors in which the errors are
provided in superposition~\cite{chen_quantumalgo_2022}. Chailloux and Tillich
adapted this formulation to codes through the quantum decoding problem, which
asks for the recovery of an unknown codeword from a superposition of its noisy
versions~\cite{chailloux_quantumdecoding_2023}. In the
coding-theoretic form of Regev's reduction, a decoder that recovers the
codeword exactly can be used to uncompute the codeword register before applying
a quantum Fourier transform. The quantum Fourier transform then produces a
state supported on a coset of the dual code. Related work has developed further connections among decoding, Fourier
sampling, and quantum algorithms for coding and algebraic
problems~\cite{yamakawa_quantumadvantage_2024,
chailloux_quantumadvanteage_2025,chailloux_opixsoftdecoders_2025,
briaud_quantumadvantage_2025}.

Jordan et al.\ introduced decoded quantum interferometry (DQI), a quantum
algorithmic framework that uses the quantum Fourier transform to reduce
classical optimization problems to decoding
problems~\cite{jordan2025optimization}. DQI associates an optimization instance
with a linear code and uses a decoder within the quantum computation to
increase the probability of sampling high-quality solutions. For sparse
max-LINSAT instances, the associated codes are LDPC codes determined by the
problem instances. Shutty et al.\ subsequently developed locally quantum
decoders for the LDPC codes arising from regular sparse max-XORSAT
instances~\cite{shutty2026lqd}. 

In DQI, and more generally in the coding-theoretic formulation of Regev's reduction, the decoder is used to coherently erase information produced during an intermediate stage of the algorithm.
To describe this uncomputation step more precisely, let \(f:\mathbb F_q^N\to\mathbb R_{\geq0}\) satisfy
\(\sum_{\be\in\mathbb F_q^N}|f(\be)|^2=1\), and define the quantum states
\begin{align}
   \ket{\Phi_{\bc}^{f}}
   \coloneqq
   \sum_{\be\in\mathbb F_q^N}
   f(\be)\ket{\bc+\be}.
\end{align}
For \(v\in\mathbb F_q^N\) and
\(\omega=e^{2\pi\mathrm{i}/q}\), the objective is to implement the transformation
\begin{align}
   \frac{1}{\sqrt{q^K}}
   \sum_{\bc\in\cC}
   \omega^{v\cdot\bc}
   \ket{\Phi_{\bc}^{f}}\ket{\bc}
   \longmapsto
   \frac{1}{\sqrt{q^K}}
   \sum_{\bc\in\cC}
   \omega^{v\cdot\bc}
   \ket{\Phi_{\bc}^{f}}\ket{\mzero}.
\end{align}
Clearly, this transformation can be achieved given a circuit that acts as $\ket{\Phi_{\bc}^{f}}\ket{\bc} \mapsto \ket{\Phi_{\bc}^{f}}\ket{0}$ for all $\bc\in\cC$.
This task can be viewed as a coherent version of a state discrimination problem: given a state $\ket{\Phi_{\bc}^{f}}$, the goal is to identify the corresponding codeword $\bc\in\cC$.
This problem is precisely the quantum decoding problem.
Perfect decoding is generally impossible, since the ideal transformation above need not be unitary.
Nevertheless, Regev's reduction remains successful provided that the decoder achieves a sufficiently large success probability~\cite{chailloux_opixsoftdecoders_2025}.

When the noise amplitude factors as
\(f(\be)=\prod_{i=0}^{N-1}a(e_i)\) for some
\(a:\mathbb F_q\to\mathbb R_{\geq0}\) satisfying
\(\sum_{z\in\mathbb F_q}|a(z)|^2=1\), the received state satisfies
\begin{align}
   \ket{\Phi_{\bc}^{f}}
   =
   \bigotimes_{i=0}^{N-1}\ket{\psi_{c_i}},
   \qquad
   \ket{\psi_x}
   \coloneqq
   \sum_{z\in\mathbb F_q}a(z)\ket{x+z}.
\end{align}
Thus, under the i.i.d.\ amplitude assumption, the quantum decoding problem
becomes the decoding of \(\cC\) over a memoryless symmetric \(q\)-ary
pure-state channel. Instances of algorithms based on Regev's reduction and DQI
in this regime therefore require reliable algorithms for decoding the codes
determined by the underlying problem instances over the induced pure-state
channels. Specifically, they require the decoder to exhibit a small block-error probability.

In this paper, we consider belief propagation with quantum messages
(BPQM), a quantum analogue of classical belief propagation (BP) for decoding classical LDPC codes transmitted over classical-quantum channels. Classical BP represents the information about a
code symbol by a likelihood function and passes such functions along the
edges of a factor graph. BPQM replaces these classical messages by quantum
systems. The information contained in the channel outputs is combined
recursively according to the variable-node and check-node constraints of
the code, and a measurement of the final quantum message produces an
estimate of the desired code symbol.  When the factor graph is a tree, this recursive decomposition exactly
follows the conditional-independence structure of the decoding problem.
BPQM can then compress the incoming quantum messages into a quantum
sufficient statistic for the root symbol. The
resulting measurement achieves the optimal symbol-error probability, and
the same operations can be incorporated into a sequential procedure for
recovering the codeword~\cite{Renes-njp17,piveteau2022quantum}.

The asymptotic performance of BPQM on computation trees can be studied by
density evolution.  Density evolution tracks the distribution of the effective
channel for estimating the root symbol after each BPQM iteration on a tree
Tanner graph.  For a nonbinary
LDPC ensemble, the effective channel depends on the nonzero edge
coefficients.  For an irregular ensemble, it also depends on the random node
degrees.  Averaging over these variables determines the BPQM success region
for the specified LDPC ensemble.

Earlier BPQM density-evolution analyses characterize symbol-error performance
on computation trees~\cite{brandsen2022belief,mandal2026belief}, but they do
not establish vanishing block-error probability for finite LDPC codes. Finite
LDPC Tanner graphs contain cycles, and the classical computation-tree
interpretation cannot be implemented exactly in the quantum setting.
Classical BP reuses a channel observation whenever the corresponding variable
node appears more than once in the computation tree. The analogous quantum
construction would require several copies of the same nonorthogonal channel
output, which is impossible by the no-cloning
theorem~\cite{wootters1982single}. Approximate-cloning and node-merging
approaches have been proposed to handle repeated quantum
messages~\cite{piveteau2022quantum,piveteau2025efficient}, but they make a
rigorous analysis on general LDPC Tanner graphs more difficult. In this work,
we prove vanishing ensemble-average block-error probability for codes drawn
from random $q$-ary LDPC ensembles over symmetric pure-state channels
in the BPQM success region. Such codes arise in coherent-decoding formulations
of Regev's reduction and DQI when the induced parity-check matrices are drawn
from the corresponding ensembles. The BPQM node operations and their adjoints
permit a coherent implementation of the decoder. For these instances, the
vanishing block-error guarantee therefore ensures that the codeword register
is erased with probability approaching one before the subsequent
Fourier-processing step.

\paragraph{Related work on quantum message passing.}

BPQM was introduced in 2017 for decoding binary linear codes over pure-state
channels~\cite{Renes-njp17}.  Later it was shown  that BPQM is simultaneously
optimal for bit- and block-error probability on tree factor graphs and 
the BPQM based decoding complexity can be reduced using a quantum reliability
register~\cite{piveteau2022quantum}. Subsequent work developed density
evolution for binary symmetric classical--quantum
channels~\cite{brandsen2022belief}, BPQM polar decoders achieving the Holevo
bound on binary PSCs~\cite{mandal2023belief,mandal2024polar}, and trellis-based
BPQM constructions for convolutional and turbo
codes~\cite{piveteau2025efficient}. BPQM was then generalized to prime
\(q\)-ary alphabets and abelian
groups~\cite{mandal2026belief,mandal2026qmp}. These generalizations established
the \(q\)-ary node operations, the heralded-mixture message representation,
eigen-list update rules, and fidelity bounds for the resulting channel
combinations. We extend these tools to random-coefficient LDPC ensembles and use them to
establish the block-error results summarized below.

\paragraph{Contributions.}
The main technical contributions are as follows.

\begin{itemize}
   \item Starting from the \(q\)-ary BPQM combining rules and fidelity bounds
   of~\cite{mandal2026belief,mandal2026qmp}, we derive the exact
   density-evolution recursion for random-coefficient regular computation
   trees with variable-node degree \(d_v\) and check-node degree \(d_c\).
   In Theorem~\ref{thm:double-exponential-error-rate}, we use
   coefficient-uniform fidelity bounds to prove that, throughout the BPQM
   success region, the symbol-error probability averaged over the random edge
   coefficients decays doubly exponentially with the number of BPQM
   iterations.
\item We construct a two-stage decoder for codes drawn from the random
   regular LDPC ensemble.
   \begin{enumerate}[label=(\roman*)]
   \item The first stage applies depth-\(\ell\) BPQM to the coordinates whose
      depth-\(\ell\) Tanner-graph neighbourhoods are trees.  With
      \(\ell=\Theta(\log\log N)\), the double-exponential symbol-error bound
      and the noncommutative union bound imply that the ensemble-average
      probability of any BPQM decoding error in this stage tends to zero.
   \item The second stage treats the coordinates with cyclic
      depth-\(\ell\) neighbourhoods as erasures.  With probability tending to
      one, their number is smaller than the minimum distance of the sampled
      code.  The restricted parity-check system therefore has a unique
      solution, which is obtained by Gaussian elimination.
   \end{enumerate}
   In Theorem~\ref{thm:vanishing_block_error_rate}, we combine these two
   stages and prove that the ensemble-average block-error probability of the
   complete decoder tends to zero as \(N\to\infty\) for every channel in the
   BPQM success region.
\end{itemize}

The node unitaries and their adjoints also permit a coherent implementation
of the two-stage decoder.  By the principle of deferred measurement, each
intermediate BPQM measurement can be replaced by an isometry that coherently
encodes its outcome in an ancilla register, with subsequent operations
controlled by that register.  The Gaussian-elimination stage can likewise be
implemented as a reversible linear-algebra circuit.  The decoder can then
coherently compute the codeword estimate and subtract it from the codeword
register.  Theorem~\ref{thm:vanishing_block_error_rate} shows that the
ensemble-average probability of obtaining a nonzero value when this register
is checked tends to zero.  This provides the asymptotic block-error guarantee
required when BPQM is used in the decoding step of Regev's reduction and DQI.

Classical block-error analyses often impose a large-girth condition through
expurgation or a suitable graph construction
\cite{gallager1962low,hu2002irregular,pradhan_2016}. In conventional channel
coding, the code can be selected as part of the system design. In
coding-theoretic formulations of Regev's reduction and in DQI, the code is
determined by the computational problem instance, so imposing a large-girth
condition would restrict the class of instances covered by the analysis. The
BPQM decoder considered here works directly with codes drawn from
random LDPC ensembles. It applies BPQM to coordinates with tree neighbourhoods
and recovers the remaining coordinates using the parity constraints.
Appendix~\ref{app:irregular-ldpc} extends the argument to finite-support
irregular ensembles satisfying a linear-distance condition. In that case,
the density-evolution average includes both the random degrees and the
nonzero coefficients. A likelihood-ratio comparison between neighbourhoods
in the finite Tanner graph and the independent computation tree
extends the tree estimate to the finite ensemble.

\paragraph{Organization.}
The remainder of the paper is organized as follows.
Section~\ref{sec:preliminaries} reviews symmetric $q$-ary PSCs and the
error and fidelity quantities used in the analysis.
Section~\ref{sec:bpqm-tree-factor-graphs} develops BPQM on tree factor graphs,
including the local node operations, heralded-mixture representation, and
density-evolution recursion. Section~\ref{sec:bpqm-ldpc} introduces the
two-stage BPQM and erasure-recovery decoder for random regular LDPC codes.
Section~\ref{sec:double-exponential-decay} proves double-exponential decay of
the tree symbol-error probability, and
Section~\ref{sec:vanishing-block-error} combines this estimate with the local
tree structure and minimum-distance properties of the LDPC ensemble to prove
vanishing block-error probability. 

\section{Preliminaries}\label{sec:preliminaries}
For completeness, we recall the results on symmetric \(q\)-ary PSCs used
for the analysis.  Results taken directly from the \(q\)-ary BPQM
paper~\cite{mandal2026belief} are identified by citations in their
statements.  These results include the Gram-matrix characterization of
symmetric \(q\)-ary PSCs, formulas for the error probability of the pretty good
measurement and channel fidelity, the check- and bit-node channel-combining
rules, and the associated fidelity bounds used later.  We restate these results in the
notation of this paper and omit their existing proofs.  
\subsection{Notation}
For a positive integer $m$, we use
$[m]\coloneqq\{0,\ldots,m-1\}$.  Boldface
lowercase letters denote vectors, such as codewords $\bc$ and error vectors
$\be$.  We use $\mzero$
for an all-zero vector or matrix
when its dimension is clear from context.  For an event $\mathcal E$,
$\mone\{\mathcal E\}$ denotes its indicator.  The identity matrix is
denoted by $\mI$. For a set $\mathcal A$, $\mathcal A^m$
denotes the Cartesian product of $m$ copies of $\mathcal A$. For a matrix $M$, the entry in row $i$ and column $j$
is denoted by $M_{i,j}$.  We write $\ket{j}$ for the $j$th standard basis
vector whenever the ambient dimension is clear.

Throughout the paper, we assume that $q$ is prime and identify the input alphabet with $[q]=\{0,\ldots,q-1\}$. We also identify $[q]$ with the finite field $\mathbb{F}_q$ in the standard manner. Unless stated otherwise, all additions, subtractions, multiplications, and inverses of input symbols are over $\mathbb{F}_q$. In Fourier expressions, field elements are represented by their standard representatives in $[q]$.
\subsection{Symmetric Pure State Channels}

\begin{defn}
   A classical--quantum (CQ) channel $W\colon x\mapsto\rho_x$ maps each
   input $x$ in a finite alphabet $\cX$ to a density matrix $\rho_x$.
\end{defn}

\begin{defn}
   Two CQ channels $W$ and $W'$ with the same input alphabet are
   isometrically equivalent if there is an isometry $V$ such that
   $W'(j)=VW(j)V^\dagger$ for every input $j$.  If $V$ is unitary, the
   channels are unitarily equivalent.
\end{defn}

\begin{defn}
   A CQ channel $W\colon j\mapsto\rho_j$ is a pure-state channel (PSC) if
   $\rho_j=\ketbra{\psi_j}{\psi_j}$ for every input $j$.
\end{defn}

For states $\{\ket{\psi_u}\}_{u\in[q]}$, define the Gram matrix by
$G_{i,j}=\braket{\psi_i}{\psi_j}$, and denote its first row by
$[g_0,\dots,g_{q-1}]$.  If $G$ is circulant, then
$G_{i,j}=g_{j-i}$, where the index is evaluated in $\mathbb F_q$ and
represented in $[q]$.

\begin{defn}
   A PSC $W\colon u\mapsto\ketbra{\psi_u}{\psi_u}$ is a symmetric
   $q$-ary PSC if the Gram matrix of
   $\{\ket{\psi_u}\}_{u\in[q]}$ is circulant.
\end{defn}
Let $\omega\coloneqq e^{2\pi\mathrm{i}/q}$ be a primitive
$q$\textsuperscript{th} root of unity.  For all $i,j\in[q]$,
\begin{align}\label{Eq:fourier-orthogonality}
   \sum_{k\in[q]}\omega^{(i-j)k}=q\delta_{i,j}.
\end{align}
For $m\in[q]$, define the Fourier vector $\ket{v_m}$ by
$\braket{j}{v_m}=q^{-1/2}\omega^{jm}$ for every $j\in[q]$.
\eqref{Eq:fourier-orthogonality} implies
$\braket{v_m}{v_{m'}}=\delta_{m,m'}$.

\begin{lem}[{\cite[Lemma~5]{mandal2026belief}}]\label{lem:gram eigenvector}
   For $m\in[q]$, the Fourier vector $\ket{v_{m}}$ is an eigenvector of $G$ satisfying $G \ket{v_m} = \lambda_m \ket{v_m}$ where $\lambda_{m}=\sum_{j\in [q]}g_{j}\omega^{jm}$ $\forall m\in [q]$.
\end{lem}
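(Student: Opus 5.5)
The plan is a direct computation of $G\ket{v_m}$ in the standard basis, using only the circulant structure $G_{i,j}=g_{j-i}$ (indices in $\mathbb F_q$) and the fact that $\omega^q=1$.

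\textbf{Step 1.} Fix $m\in[q]$ and a row index $i\in[q]$. Expand the $i$th component:
\begin{align}
   \bra{i}G\ket{v_m}
   =\sum_{j\in[q]}G_{i,j}\braket{j}{v_m}
   =q^{-1/2}\sum_{j\in[q]}g_{j-i}\,\omega^{jm}.
\end{align}

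\textbf{Step 2.} Reindex with $k=j-i$, computed in $\mathbb F_q$. As $j$ runs over $[q]$, so does $k$, because translation is a bijection of $\mathbb F_q$. This reindexing is the only point that needs care. Writing $j\equiv i+k \pmod q$, the representative of $j$ in $[q]$ may differ from the integer $i+k$ by a multiple of $q$. This is harmless, since $\omega^{jm}$ depends only on $jm \bmod q$, so $\omega^{jm}=\omega^{(i+k)m}$.

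\textbf{Step 3.} Factor out the $i$-dependence:
\begin{align}
   \bra{i}G\ket{v_m}
   =q^{-1/2}\omega^{im}\sum_{k\in[q]}g_k\,\omega^{km}
   =\lambda_m\braket{i}{v_m}.
\end{align}
Since this holds for every $i$, we get $G\ket{v_m}=\lambda_m\ket{v_m}$ with $\lambda_m=\sum_{j}g_j\omega^{jm}$.

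\textbf{Step 4.} $\ket{v_m}$ is nonzero, indeed normalized by \eqref{Eq:fourier-orthogonality}. Hence it is a genuine eigenvector, which completes the proof.
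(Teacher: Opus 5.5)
Your proof is correct: computing $\bra{i}G\ket{v_m}$ entrywise from $G_{i,j}=g_{j-i}$, reindexing by the translation $k=j-i$ in $\mathbb F_q$ (where $\omega^q=1$ takes care of the choice of representative), and factoring out $\omega^{im}$ is the standard circulant-diagonalization argument. The paper does not reprove this lemma; it restates it from the earlier $q$-ary BPQM work with the proof omitted, and your computation is the natural argument for it.
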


We refer to the ordered eigenvalue list
$\blambda=[\lambda_0,\dots,\lambda_{q-1}]$ as the \emph{eigen list} of $G$; its
ordering is fixed by the Fourier vectors above.  The corresponding
canonical states are given by
\begin{align}\label{eq:psi-fourier-form}
   \ket{\psi_u} &\coloneqq \frac{1}{\sqrt{q}}\sum_{j\in [q]}\sqrt{\lambda_{j}}\omega^{-uj}\ket{v_{j}}.
\end{align}
For \(d\in[q]\), define
\begin{align}
   U_d
   \coloneqq
   \sum_{j\in[q]}
   \omega^{-dj}\ketbra{v_j}{v_j}.
   \label{eq:symmetric-psc-unitary}
\end{align}
It follows from~\eqref{eq:psi-fourier-form} that
\begin{align}
   U_d\ket{\psi_u}
   &=
   \ket{\psi_{u+d}},\\
   W(u+d)
   &=
   U_dW(u)U_d^\dagger .
   \label{eq:symmetric-psc-covariance}
\end{align}
\begin{lem}[{\cite[Lemma~6]{mandal2026belief}}]\label{lem:gram-trace-relation}
   For Gram matrix $G$ with eigen list $\blambda=[\lambda_0,\dots,\lambda_{q-1}]$, it holds $\sum_{u\in [q]}\lambda_u  =q$.
\end{lem}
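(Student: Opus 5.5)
The claim is that $\sum_{u\in[q]}\lambda_u=q$. The plan is to read the sum of the eigenvalues as the trace of $G$, and then use the fact that the diagonal of a Gram matrix of normalized states is all ones.

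By Lemma~\ref{lem:gram eigenvector}, the Fourier vectors $\{\ket{v_m}\}_{m\in[q]}$ are eigenvectors of $G$ with eigenvalues $\lambda_m$. They are orthonormal by \eqref{Eq:fourier-orthogonality}, so there are $q$ of them in a $q$-dimensional space and they form an orthonormal basis. Hence
\begin{align}
   G=\sum_{m\in[q]}\lambda_m\ketbra{v_m}{v_m}.
\end{align}
Taking the trace of this decomposition gives $\operatorname{Tr}G=\sum_{m\in[q]}\lambda_m$.

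Next, each state is normalized, so $G_{i,i}=\braket{\psi_i}{\psi_i}=1$ for every $i\in[q]$. Therefore $\operatorname{Tr}G=q$, and combining the two expressions for $\operatorname{Tr}G$ yields $\sum_{u\in[q]}\lambda_u=q$.

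A check that avoids the trace works directly from the formula in Lemma~\ref{lem:gram eigenvector}:
\begin{align}
   \sum_{m\in[q]}\lambda_m=\sum_{j\in[q]}g_j\sum_{m\in[q]}\omega^{jm}=q\,g_0=q.
\end{align}
Here the inner sum over $m$ equals $q\delta_{j,0}$ by \eqref{Eq:fourier-orthogonality}, and $g_0=G_{0,0}=1$ by normalization.

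There is no real obstacle. The only point to state explicitly is that the channel outputs $\ket{\psi_u}$ are unit vectors, which is what makes the diagonal entries of $G$ (equivalently $g_0$) equal to one.
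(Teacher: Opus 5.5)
Your proof is correct. The Fourier vectors form an orthonormal eigenbasis of $G$, so $\operatorname{Tr}(G)=\sum_{m\in[q]}\lambda_m$, and normalization of the $\ket{\psi_u}$ gives unit diagonal, hence $\operatorname{Tr}(G)=q$; equivalently, $\sum_m\lambda_m=q\,g_0=q$ by Fourier orthogonality. The paper gives no proof of its own and defers to \cite[Lemma~6]{mandal2026belief}, and your trace argument is the standard justification for that statement.
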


It follows from Lemma~\ref{lem:gram-trace-relation} that $\bmu=[\mu_0,\dots,\mu_{q-1}]$, where $\mu_j=\lambda_j/q$ for every $j\in[q]$, is a probability distribution on $[q]$, which we call the \emph{normalized eigen list}.

\begin{lem}[{\cite[Lemma~7]{mandal2026belief}}]\label{lem:canonical state representation}
   All symmetric $q$-ary pure-state channels $W\colon u\rightarrow \ketbra{\psi_{u}}{\psi_u}$ with $u\in [q]$ are determined (up to isometric equivalence) by their eigen list $\blambda$.
\end{lem}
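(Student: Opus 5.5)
The plan is to reduce the pure-state channel to its Gram matrix, since the Gram matrix of a family of vectors determines that family up to an isometry. I would first invoke this standard fact. If $\{\ket{\psi_u}\}$ and $\{\ket{\phi_u}\}$ have the same Gram matrix, define $V$ on $\mathrm{span}\{\ket{\psi_u}\}$ by linear extension of $\ket{\psi_u}\mapsto\ket{\phi_u}$. This map is well defined and inner-product preserving: for coefficients $\alpha_u$,
\begin{align}
   \Bigl\|\sum_u\alpha_u\ket{\phi_u}\Bigr\|^2=\sum_{u,u'}\overline{\alpha_u}\alpha_{u'}G_{u,u'}=\Bigl\|\sum_u\alpha_u\ket{\psi_u}\Bigr\|^2 .
\end{align}
In particular, vanishing combinations map to vanishing combinations. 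Extending $V$ to an isometry on the whole input space, after embedding both systems in a space of sufficiently large dimension if needed, gives $V\ketbra{\psi_u}{\psi_u}V^\dagger=\ketbra{\phi_u}{\phi_u}$.

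Next I would show that the eigen list determines the circulant Gram matrix. By Lemma~\ref{lem:gram eigenvector}, the Fourier vectors $\ket{v_m}$ form an orthonormal eigenbasis of $G$ with eigenvalues $\lambda_m$. Hence $G=\sum_m\lambda_m\ketbra{v_m}{v_m}$, so $G$ is fixed by $\blambda$. Equivalently, $g_j=q^{-1}\sum_m\lambda_m\omega^{-jm}$ by Fourier inversion via \eqref{Eq:fourier-orthogonality}.

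Finally I would verify that the canonical states \eqref{eq:psi-fourier-form} realize this Gram matrix. One must first note that each $\lambda_j\ge0$, since $G$ is positive semidefinite, so the square roots are legitimate. Then orthonormality of the $\ket{v_j}$ gives
\begin{align}
   \braket{\psi_u}{\psi_{u'}}=\frac1q\sum_j\lambda_j\omega^{(u-u')j}=g_{u'-u}=G_{u,u'} .
\end{align}
Combining the three steps, any symmetric $q$-ary PSC with eigen list $\blambda$ is isometrically equivalent to the canonical channel built from $\blambda$. Two channels with the same eigen list are therefore isometrically equivalent to each other, because composing the isometry into the canonical channel with the adjoint of the other (restricted to the relevant span) is again an isometry.

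The main obstacle is the bookkeeping in the isometric-extension step: making $V$ a genuine isometry when the output Hilbert spaces differ in dimension. The sign and index conventions relating $g_j$ to $\lambda_m$ are only a minor check.
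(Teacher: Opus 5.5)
Your proof is correct and follows the route the paper relies on (the paper cites this lemma and omits its proof, but builds on the canonical states in \eqref{eq:psi-fourier-form}). The ordered eigen list fixes the circulant Gram matrix through the Fourier eigenbasis of Lemma~\ref{lem:gram eigenvector}, the canonical states realize that Gram matrix, and equal Gram matrices give isometric equivalence, with your span-based construction handling the dimension bookkeeping.
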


Throughout the rest of the paper, every symmetric $q$-ary PSC is used with
the uniform input distribution; that is, each symbol has probability
$1/q$.
For this input distribution, let \(I(W)\) denote the symmetric Holevo
information of \(W\).

For a CQ channel $W$, the channel fidelity is defined by~\cite{mandal2026belief}
\begin{align*}
   F(W)
   & =\frac{1}{q(q-1)}
   \sum_{\substack{u,u'\in[q]\\u\neq u'}}
   \Tr\left(\sqrt{\sqrt{W(u)}W(u')\sqrt{W(u)}}\right).
\end{align*}
For a symmetric PSC $W$, this reduces to the average absolute pairwise
overlap
\begin{align*}
    F(W)=\frac{1}{q(q-1)}\sum_{u,u'\in[q]:u\neq u'}\left|\braket{\psi_u}{\psi_{u'}}\right|.
\end{align*}

\begin{lem}[{\cite[Lemma~8]{mandal2026belief}}]\label{lem:Holevo information and channel fidelity}
Consider the symmetric  $q$-ary pure-state channel $W\colon u\rightarrow \ketbra{\psi_{u}}{\psi_u}$ for $u\in [q]$ whose Gram matrix $G$ is circulant with eigen list $\blambda=[\lambda_0,\dots,\lambda_{q-1}]$. 
The symmetric Holevo information and channel fidelity satisfy
\begin{align*}
 I(W) & = H(\bmu),\\
 % F(W)  &\  =\frac{1}{q(q-1)}\left(\sum_{u\in [q]}\lambda_{u}^2-q\right)
 F(W) & =\frac{1}{q-1}\sum_{u=1}^{q-1}|g_{u}|,
\end{align*}
where $\bmu=[\mu_0,\dots,\mu_{q-1}]$ is the normalized eigen list with
$\mu_j=\lambda_j/q$ for every $j\in[q]$, and \(H\) denotes the
base-\(2\) entropy.
\end{lem}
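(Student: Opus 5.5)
The plan is to compute both quantities directly in the Fourier basis, using the canonical form \eqref{eq:psi-fourier-form}. This is legitimate by Lemma~\ref{lem:canonical state representation}: isometric equivalence preserves the Gram matrix, the pairwise overlaps, and the spectrum of the average state. So $I(W)$ and $F(W)$ depend only on $\blambda$, equivalently on $G$, and we may assume $\ket{\psi_u}=q^{-1/2}\sum_j\sqrt{\lambda_j}\,\omega^{-uj}\ket{v_j}$.

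\textbf{Holevo information.} For pure states with uniform inputs, $I(W)=S(\bar\rho)-\frac1q\sum_u S(\ketbra{\psi_u}{\psi_u})=S(\bar\rho)$, where $\bar\rho=\frac1q\sum_u\ketbra{\psi_u}{\psi_u}$.

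Expanding $\bar\rho$ in the Fourier basis gives
\begin{align}
\bar\rho=\frac{1}{q^2}\sum_{j,k}\sqrt{\lambda_j\lambda_k}\Bigl(\sum_u\omega^{-u(j-k)}\Bigr)\ketbra{v_j}{v_k}.
\end{align}
By \eqref{Eq:fourier-orthogonality} the inner sum equals $q\delta_{j,k}$. Hence $\bar\rho=\sum_j(\lambda_j/q)\ketbra{v_j}{v_j}=\sum_j\mu_j\ketbra{v_j}{v_j}$.

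This state is diagonal in the orthonormal basis $\{\ket{v_j}\}$ with eigenvalues $\mu_j$. Therefore $S(\bar\rho)=H(\bmu)$, measured in bits. Lemma~\ref{lem:gram-trace-relation} guarantees that $\bmu$ is a genuine probability vector.

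\textbf{Fidelity.} We start from the pure-state reduction of $F(W)$ stated just before the lemma. For completeness, note why it holds: for pure states, $\sqrt{W(u)}=W(u)$, so
\begin{align}
\sqrt{W(u)}\,W(u')\,\sqrt{W(u)}=|\braket{\psi_u}{\psi_{u'}}|^2\,\ketbra{\psi_u}{\psi_u},
\end{align}
whose trace of the square root is $|\braket{\psi_u}{\psi_{u'}}|$.

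Circulance gives $\braket{\psi_u}{\psi_{u'}}=G_{u,u'}=g_{u'-u}$. For each fixed $u$, the map $u'\mapsto u'-u$ is a bijection from $[q]\setminus\{u\}$ onto $\{1,\dots,q-1\}$ in $\mathbb F_q$. Consequently the double sum equals $q\sum_{d=1}^{q-1}|g_d|$, and dividing by $q(q-1)$ yields $F(W)=\frac{1}{q-1}\sum_{d=1}^{q-1}|g_d|$.

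\textbf{Main obstacle.} The only step needing care is the first one: confirming that the Holevo information and the fidelity are invariant under the isometric equivalence, which is what justifies working with the canonical states. For the Holevo information, conjugation by an isometry preserves the nonzero spectrum of $\bar\rho$. For the fidelity, isometries preserve inner products. Everything after that is the routine Fourier orthogonality computation above.
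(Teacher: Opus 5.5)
Your proposal is correct. The paper does not prove this lemma itself: it imports it from \cite[Lemma~8]{mandal2026belief}, and the preliminaries say explicitly that existing proofs are omitted, so there is no in-paper argument to match. What you have written is a valid self-contained derivation. It uses exactly the tools the paper sets up: the canonical Fourier form \eqref{eq:psi-fourier-form}, the orthogonality relation \eqref{Eq:fourier-orthogonality}, and the pure-state reduction of $F$ stated just before the lemma.

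The step you call the main obstacle can be avoided entirely. Your fidelity computation never needed the canonical form. It uses only $\braket{\psi_u}{\psi_{u'}}=G_{u,u'}=g_{u'-u}$, which holds for the given states.

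For the Holevo part, let $\Psi$ be the matrix whose columns are the $\ket{\psi_u}$. Then $q\bar\rho=\Psi\Psi^\dagger$ and $\Psi^\dagger\Psi=G$, so $\bar\rho$ and $G/q$ have the same nonzero eigenvalues. By Lemma~\ref{lem:gram eigenvector} these are the nonzero $\mu_j$, and zero eigenvalues contribute nothing to the entropy. Hence $S(\bar\rho)=H(\bmu)$ directly, for any realization of the states, with no appeal to Lemma~\ref{lem:canonical state representation}.

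Your route has the advantage of exhibiting $\bar\rho$ explicitly as diagonal in the Fourier basis, which is the structure the paper later exploits for the PGM. The Gram-matrix route is shorter and basis-free.
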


For a $q$-ary CQ channel $W\colon j\mapsto\rho_j$ with uniform input,
let $\bar\rho=q^{-1}\sum_{j\in[q]}\rho_j$.  The pretty good measurement
(PGM) (also called the square-root measurement) has operators
\begin{align*}
   M_j
   =
   \frac{1}{q}\bar\rho^{-1/2}\rho_j\bar\rho^{-1/2},
   \qquad j\in[q],
\end{align*}
where the inverse is taken on the support of $\bar\rho$
\cite{hausladen1994pretty,holevo1978asymptotically}. These operators form a
POVM on that support.  Since the output states of a symmetric PSC $W$ are
geometrically uniform, the PGM is optimal for minimum-error state
discrimination~\cite{eldar2002quantum}.
The following lemma expresses this optimal
symbol-error probability in terms of the eigen list of the Gram matrix.
\begin{lem}[{\cite[Lemma~9]{mandal2026belief}}]\label{lem:pgm}
Let $W\colon u\mapsto\ketbra{\psi_u}{\psi_u}$ be a symmetric $q$-ary
PSC with circulant Gram matrix $G$ and eigen list
$\blambda=[\lambda_0,\dots,\lambda_{q-1}]$.  For a uniform input, the
error probability $\perr(W)$ of the PGM satisfies
\begin{align*}
   \perr(W)=1-\left(\frac{1}{q}\sum_{u\in [q]}\sqrt{\lambda_u}\right)^{2}.
\end{align*}
\end{lem}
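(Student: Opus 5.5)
The plan is to compute the PGM success probability directly in the Fourier basis, using the canonical form \eqref{eq:psi-fourier-form}; Lemma~\ref{lem:canonical state representation} allows this choice without loss of generality. With uniform priors, the average state is
\begin{align}
   \bar\rho=\frac{1}{q}\sum_{u\in[q]}\ketbra{\psi_u}{\psi_u}.
\end{align}
Expanding each $\ket{\psi_u}$ in the Fourier vectors gives
\begin{align}
   \ketbra{\psi_u}{\psi_u}=\frac{1}{q}\sum_{j,k}\sqrt{\lambda_j\lambda_k}\,\omega^{-u(j-k)}\ketbra{v_j}{v_k}.
\end{align}
Summing over $u$ and applying \eqref{Eq:fourier-orthogonality} removes the off-diagonal terms. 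This yields
\begin{align}
   \bar\rho=\frac{1}{q}\sum_{j}\lambda_j\ketbra{v_j}{v_j},
\end{align}
which is diagonal in the Fourier basis. Its pseudo-inverse square root on the support $\{j:\lambda_j>0\}$ is therefore
\begin{align}
   \bar\rho^{-1/2}=\sqrt{q}\sum_{j:\lambda_j>0}\lambda_j^{-1/2}\ketbra{v_j}{v_j}.
\end{align}

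Next I compute the success probability of the PGM:
\begin{align}
   P_{\mathrm{succ}}=\frac{1}{q}\sum_u \bra{\psi_u}M_u\ket{\psi_u}=\frac{1}{q^2}\sum_u\bigl|\bra{\psi_u}\bar\rho^{-1/2}\ket{\psi_u}\bigr|^2.
\end{align}
The key computation is
\begin{align}
   \bra{\psi_u}\bar\rho^{-1/2}\ket{\psi_u}=\frac{1}{q}\sum_{j:\lambda_j>0}\lambda_j\cdot\sqrt{q}\,\lambda_j^{-1/2}=\frac{1}{\sqrt q}\sum_{j}\sqrt{\lambda_j}.
\end{align}
Here the phases $\omega^{\pm uj}$ cancel because the operator is diagonal, and the terms with $\lambda_j=0$ contribute nothing on either side. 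This quantity is independent of $u$, as expected from the covariance \eqref{eq:symmetric-psc-covariance}, since $U_d$ commutes with $\bar\rho$.

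Substituting back, I get
\begin{align}
   P_{\mathrm{succ}}=\frac{1}{q^2}\cdot q\cdot\frac{1}{q}\Bigl(\sum_j\sqrt{\lambda_j}\Bigr)^2=\Bigl(\frac{1}{q}\sum_j\sqrt{\lambda_j}\Bigr)^2,
\end{align}
and hence $\perr(W)=1-P_{\mathrm{succ}}$ is exactly the stated formula. Two small points remain to check. First, the PGM operators sum to the projector onto the support of $\bar\rho$; this is immediate, and every $\ket{\psi_u}$ lies in that support, so no probability is lost off-support. Second, Lemma~\ref{lem:gram-trace-relation} is consistent with the normalization $\braket{\psi_u}{\psi_u}=\frac1q\sum_j\lambda_j=1$. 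The only delicate step is handling zero eigenvalues in the pseudo-inverse, and it is resolved by restricting every sum to the support as above.
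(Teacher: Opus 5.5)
Your proof is correct and follows essentially the same route as the cited proof. Diagonalizing $\bar\rho$ in the Fourier basis gives $\bar\rho^{-1/2}\ket{\psi_u}=\sqrt{q}\,\ket{\Gamma_u}$ after projection onto the support of $\bar\rho$, so your key quantity $\bra{\psi_u}\bar\rho^{-1/2}\ket{\psi_u}$ is exactly $\sqrt{q}\,\braket{\Gamma_u}{\psi_u}=\frac{1}{\sqrt q}\sum_m\sqrt{\lambda_m}$, where $\Pi_u=\ketbra{\Gamma_u}{\Gamma_u}$ is the eigen-list-independent PGM stated after Definition~\ref{defn:heralded-mixture-class}; making this identification explicit is worthwhile, because the paper later uses the fact that the PGM operators do not depend on $\blambda$ in Lemma~\ref{lem:heralded-error-fidelity}.
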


\begin{lem}\label{lem:pgm-error-fidelity}
Let $W$ be a symmetric $q$-ary PSC with Gram matrix first row
$[1,g_1,\dots,g_{q-1}]$. Then, we have
\begin{align*}
   \frac{F(W)^2}{4}
   \leq \perr(W)
   \leq \min\left\{(q-1)F(W),\,1-\frac{1}{q}\right\}.
\end{align*}
\end{lem}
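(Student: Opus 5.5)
The plan is to prove the three inequalities separately. Throughout, write $s\coloneqq\frac1q\sum_{u\in[q]}\sqrt{\lambda_u}$, so that $\perr(W)=1-s^2$ by Lemma~\ref{lem:pgm}.

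\emph{Trivial upper bound.} Every eigenvalue satisfies $0\le\lambda_u\le q$, because $G$ is positive semidefinite and $\sum_u\lambda_u=q$ by Lemma~\ref{lem:gram-trace-relation}. Hence $\sqrt{\lambda_u}\ge\lambda_u/\sqrt q$, and summing gives $\sum_u\sqrt{\lambda_u}\ge\sqrt q$. This yields $s^2\ge 1/q$, that is, $\perr(W)\le 1-1/q$.

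\emph{Upper bound $(q-1)F(W)$.} I would invoke the Barnum--Knill bound for the pretty good measurement. For an ensemble $\{p_u,\rho_u\}$ it states
\begin{align*}
\perr^{\mathrm{PGM}}\le\sum_{u\neq u'}\sqrt{p_up_{u'}}\,\Tr\left(\sqrt{\sqrt{\rho_u}\rho_{u'}\sqrt{\rho_u}}\right).
\end{align*}
For pure states the fidelity term equals $|\braket{\psi_u}{\psi_{u'}}|$. With $p_u=1/q$ the right-hand side becomes $\frac1q\sum_{u\ne u'}|\braket{\psi_u}{\psi_{u'}}|$. By the definition of $F(W)$, this equals $(q-1)F(W)$.

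If a self-contained argument is preferred, one can write $s=(\sqrt G)_{0,0}$, since $\sqrt G$ is circulant with diagonal $\frac1q\sum_j\sqrt{\lambda_j}$. The bound $1-s^2\le\sum_{k\ne0}|g_k|$ is then the pure-state case of the same operator-inequality argument. I expect this step to be the main obstacle if done from scratch: the naive estimate $1-s^2\le 2(1-s)$ loses a factor of $2$. For that reason I would rely on the cited Barnum--Knill bound.

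\emph{Lower bound $F(W)^2/4$.} Choose $k\neq0$ maximizing $|g_k|$, so that $|g_k|\ge F(W)$ by Lemma~\ref{lem:Holevo information and channel fidelity}. The PGM is covariant under the unitaries $U_d$ of~\eqref{eq:symmetric-psc-covariance}. Therefore its conditional error probability $\Pr(\text{err}\mid u)$ does not depend on $u$ and equals $\perr(W)$.

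Now coarse-grain the $q$-outcome PGM into a binary test between $\ket{\psi_0}$ and $\ket{\psi_k}$. Declare $0$ on outcome $0$ and $k$ on every other outcome. This binary test errs only when the original measurement errs, so its average error with equal priors is at most $\frac12(\Pr(\text{err}\mid0)+\Pr(\text{err}\mid k))=\perr(W)$. By the Helstrom bound, the binary error is at least
\begin{align*}
\tfrac12\left(1-\sqrt{1-|g_k|^2}\right)\ge \tfrac{|g_k|^2}{4}.
\end{align*}
Hence $\perr(W)\ge |g_k|^2/4\ge F(W)^2/4$, which completes the statement.
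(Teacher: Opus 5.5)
Your proof is correct, but it takes a different, operational route, whereas the paper's proof stays entirely inside the eigen-list picture. With $\mu_j=\lambda_j/q$ and $A=\frac1q\sum_j\sqrt{\lambda_j}$ (your $s$), the paper uses the identity $2(1-A)=\sum_j(\sqrt{\mu_j}-q^{-1/2})^2$. It then compares this Hellinger-type quantity with the $\ell_1$ distance $\sum_j|\mu_j-q^{-1}|$.

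For the upper bound, $(\sqrt x-\sqrt y)^2\le|x-y|$ and Fourier inversion $\mu_j-q^{-1}=q^{-1}\sum_{u\ne0}g_u\omega^{uj}$ give $\perr(W)=1-A^2\le2(1-A)\le\sum_j|\mu_j-q^{-1}|\le\sum_{u\ne0}|g_u|=(q-1)F(W)$. For the lower bound, the paper uses $|g_u|\le\sum_j|\mu_j-q^{-1}|$, then applies Cauchy--Schwarz to the factorization $|\sqrt{\mu_j}-q^{-1/2}|(\sqrt{\mu_j}+q^{-1/2})$. This gives $F(W)\le2\sqrt{1-A^2}$.

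So your expectation that the naive estimate $1-s^2\le2(1-s)$ loses a factor of $2$ is mistaken. It is exactly the step the paper uses, and the self-contained proof of the $(q-1)F(W)$ bound takes only a few lines, with no need for Barnum--Knill.

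What your route buys is generality and a slightly sharper lower bound. The Barnum--Knill bound holds for arbitrary mixed-state ensembles and priors. Your coarse-graining plus Helstrom argument is valid, because covariance makes the conditional PGM error symbol-independent. It gives $\perr(W)\ge\frac12\bigl(1-\sqrt{1-\max_{k\ne0}|g_k|^2}\bigr)$, which is at least $\max_{k\ne0}|g_k|^2/4$. The paper's argument also yields this latter bound implicitly, since its triangle-inequality step bounds each $|g_u|$ individually.

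The paper's route, in exchange, is self-contained. It uses only Lemma~\ref{lem:pgm} and the Fourier relations, and exhibits the two bounds as dual comparisons of $\bmu$ with the uniform distribution. For the $1-1/q$ bound, the paper simply invokes optimality of the PGM against random guessing. Your inequality $\sqrt{\lambda_u}\ge\lambda_u/\sqrt q$ is a direct verification of the same fact.
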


\begin{proof}
Let $\bmu=[\mu_0,\dots,\mu_{q-1}]$ be the normalized eigen list, where
\(\mu_j=\lambda_j/q\), and set
\(A\coloneqq\sum_{j=0}^{q-1}\sqrt{\mu_j/q}\).  Since
\(\sum_{j=0}^{q-1}\mu_j=1\),
we have \(0\leq A\leq1\), and Lemma~\ref{lem:pgm} gives
\(\perr(W)=1-A^2\leq2(1-A)\).  Furthermore,
\(\lvert\sqrt{x}-\sqrt{y}\rvert^2\leq\lvert x-y\rvert\).  The
inverse Fourier relation gives
\(\mu_j-q^{-1}=q^{-1}\sum_{u=1}^{q-1}g_u\omega^{uj}\).  The triangle
inequality then gives
\begin{align*}
   2(1-A)
   &=\sum_{j=0}^{q-1}
     \left(\sqrt{\mu_j}-q^{-1/2}\right)^2\\
   &\leq\sum_{j=0}^{q-1}\left|\mu_j-q^{-1}\right|
    \leq\sum_{u=1}^{q-1}|g_u|
    =(q-1)F(W).
\end{align*}
For the uniform input distribution, we have
\(\perr(W)\leq1-q^{-1}\).  Conversely,
the Fourier relation
\(g_u=\sum_{j=0}^{q-1}(\mu_j-q^{-1})\omega^{-uj}\) and the triangle
inequality give
\(F(W)\leq\sum_{j=0}^{q-1}|\mu_j-q^{-1}|\).  Factoring each summand and
applying Cauchy--Schwarz yields
\begin{align*}
   F(W)
   &\leq\sum_{j=0}^{q-1}
   \left|\sqrt{\mu_j}-q^{-1/2}\right|
   \left(\sqrt{\mu_j}+q^{-1/2}\right)\\
   &\leq
   \sqrt{2(1-A)}\sqrt{2(1+A)}
   =2\sqrt{1-A^2}
   =2\sqrt{\perr(W)},
\end{align*}
which proves the lower bound.
\end{proof}

\section{BPQM on tree factor graphs}\label{sec:bpqm-tree-factor-graphs}
In this section, we formulate \(q\)-ary BPQM for symbol-wise decoding on a
tree Tanner graph.  For a selected code symbol, BPQM combines the incoming
quantum messages toward its variable node and applies the PGM to the output
of the resulting effective channel.  We recall the local channel-combining
rules from~\cite{mandal2026belief} and their unitary implementation using
check, bit, and multiplication nodes.  We identify heralded mixtures of
symmetric PSCs as a message class preserved by these operations and derive
the density-evolution recursion that characterizes the symbol-error
probability.

A factor graph is a bipartite graph associated with a codeword
\(\bc=(c_0,\ldots,c_{N-1})\).  It represents a function
\(f:[q]^N\rightarrow\mathbb R_{\geq0}\) in the form
\begin{align*}
   f(c_0,\ldots,c_{N-1})=\prod_{S\in\mathcal S}f_S(\bc_S),
\end{align*}
where \(\mathcal S\) is a collection of subsets of \([N]\), \(\bc_S\) is
the restriction of \(\bc\) to the coordinates in \(S\), and \(f_S\) depends
only on \(\bc_S\)~\cite{Kschischangjsac98}.  It contains a variable node for
each \(c_i\), a factor node for each \(f_S\), and an edge joining \(c_i\) to
\(f_S\) whenever \(i\in S\).  The graph describes the local dependence among
the variables and permits the computation of marginals from the factorized
function.
If the graph is a tree, removing an edge separates it into two
components.  Conditioned on the variable associated with that edge, the
factorization separates over these components.  Belief propagation computes
the desired marginal by combining the corresponding messages and this computation is exact on a tree.

For a linear code
\(\cC=\{\bc\in\mathbb F_q^N:H\bc=0\}\), the standard factor-graph
representation is its Tanner graph.
It has one variable node for every code symbol \(c_i\), one check factor
\(f_s\) for every row \(s\) of \(H\), and an edge between \(f_s\) and \(c_i\)
whenever \(H_{s,i}\ne0\).  Define the support of row \(s\) by
\begin{align*}
   S_s\coloneqq\{i\in[N]:H_{s,i}\ne0\}.
\end{align*}
Then \(f_s\) depends on \(\bc_{S_s}\) and enforces
\begin{align*}
   \sum_{i\in S_s}H_{s,i}c_i=0
   \qquad\text{in }\mathbb F_q.
\end{align*}
Thus, \(\partial f_s=S_s\) is the set of variable indices connected to
\(f_s\).
We represent the nonzero coefficient \(H_{s,i}\) by a multiplication
factor \(c\mapsto H_{s,i}c\) on the corresponding edge.  Hence the Tanner
graph used below contains check and multiplication nodes.  Bit-node
combining is a decoding operation: after the incoming check-to-variable
messages for a target symbol have been formed, it combines the resulting channels with the
physical channel output for that symbol.  All arithmetic below is over
\(\mathbb F_q\), identified with \([q]\).

\begin{wrapfigure}{r}{0.27\textwidth}
   \centering
   \vspace{-2mm}
   \begin{tikzpicture}[baseline=(check.base)]
  \node[draw,thick,rectangle,fill=checkcolor,minimum size=7mm,inner sep=0pt] (check) {$+$};
  \draw[thick] (check) -- ++(-1.25,0.65) node[left] {$c_1$};
  \draw[thick] (check) -- ++(-1.25,-0.65) node[left] {$c_2$};
  \draw[thick] (check) -- ++(1.35,0) node[right] {$c_3$};
\end{tikzpicture}
   \vspace{-3mm}
\end{wrapfigure}
A check node represents a parity constraint.  For the degree-three node shown
here,
the three edge variables are \(c_1,c_2,c_3\), and the local constraint is
\begin{align*}
   c_1-c_2=c_3.
\end{align*}
For a fixed \(c_3=l\), the pairs satisfying this constraint are
\((c_1,c_2)=(u,u-l)\), where \(u\in[q]\).  The check-node channel is defined
by averaging the corresponding channel outputs over \(u\).

\begin{wrapfigure}{r}{0.27\textwidth}
   \centering
   \vspace{-2mm}
   \begin{tikzpicture}[baseline=(bit.base)]
  \node[draw,thick,circle,fill=bitcolor,minimum size=7mm,inner sep=0pt] (bit) {$=$};
  \draw[thick] (bit) -- ++(-1.25,0.65) node[left] {$c_1$};
  \draw[thick] (bit) -- ++(-1.25,-0.65) node[left] {$c_2$};
  \draw[thick] (bit) -- ++(1.35,0) node[right] {$c_3$};
\end{tikzpicture}
   \vspace{-3mm}
\end{wrapfigure}
A bit node represents an equality constraint.  For the degree-three node
shown here, the three edge variables satisfy
\begin{align*}
   c_1=c_2=c_3.
\end{align*}
For a fixed \(c_3=l\), both input variables are fixed to
\(c_1=c_2=l\).  At a variable node, this operation combines the physical
channel output with the incoming check messages.

\begin{wrapfigure}{r}{0.27\textwidth}
   \centering
   \vspace{-2mm}
   \begin{tikzpicture}[baseline=(aut.base)]
  \node[draw,thick,isosceles triangle,isosceles triangle apex angle=60,
        fill=infocolor,minimum width=11mm,
        minimum height=12mm,inner sep=1pt] (aut) {$a$};
  \draw[thick] ($(aut.west)+(-1.05,0)$)
        node[left] {$c_1$} -- (aut.west);
  \draw[thick] (aut.apex) -- ++(1.05,0) node[right] {$c_2$};
\end{tikzpicture}
   \vspace{-3mm}
\end{wrapfigure}
A multiplication node is represented by an oriented triangle and applies
multiplication by a nonzero field element from its base to its tip.  For
\(a\in[q]\setminus\{0\}\), the node shown here maps \(c_1\) to \(c_2\), where
\begin{align*}
   c_2=ac_1.
\end{align*}
In the reverse direction, it maps \(c_2\) to \(c_1=a^{-1}c_2\).
Multiplication nodes describe the nonzero coefficients in a \(q\)-ary parity
check.

Following the check-node convention of~\cite{mandal2026belief}, we use
\(c_1-c_2=c_3\) above so that the associated channel-combining rule has the
form stated below.  Multiplication nodes make this convention equivalent to
the standard coding-theoretic form of a linear parity check.  For example,
\(c_1+c_2+c_3=0\) becomes
\(c_1-(-c_2)=-c_3\), with multiplication by \(-1\) on the second and third
edges.

We next define the channel-combining operations used by the decoder.

\begin{defn}[Check-node combining]\label{def:check node}
  For symmetric \(q\)-ary PSCs \(W_1\) and \(W_2\), the check-node combined
  CQ channel \(W_1\cnop W_2\) is defined, for every \(l\in[q]\), by
  \begin{align*}
      [W_1\cnop W_2](l)
      \coloneqq
      \frac{1}{q}\sum_{u\in[q]}W_1(u)\otimes W_2(u-l).
  \end{align*}
\end{defn}

\begin{defn}[Bit-node combining]\label{def:bit node}
  For symmetric \(q\)-ary PSCs \(W_1\) and \(W_2\), the bit-node combined CQ
  channel \(W_1\vnop W_2\) is defined, for every \(l\in[q]\), by
  \begin{align*}
      [W_1\vnop W_2](l)\coloneqq W_1(l)\otimes W_2(l).
  \end{align*}
\end{defn}

\begin{defn}[Multiplication Node]\label{def:multiplication node}
  For a symmetric \(q\)-ary PSC \(W\) and \(a\in[q]\setminus\{0\}\), the CQ
  channel \(W^{(a)}\) associated with the multiplication node is defined, for
  every \(l\in[q]\), by
  \begin{align*}
      W^{(a)}(l)\coloneqq W(a^{-1}l).
  \end{align*}
\end{defn}

BPQM associates each node operation with a unitary acting on the output
systems of its combined CQ channel.  At a check node, the unitary separates the output into
a symmetric \(q\)-ary PSC and an orthogonal herald register; the herald value
specifies the resulting PSC.  At a bit node, the two outputs conditioned on
the same input symbol are compressed into a symmetric \(q\)-ary PSC and an
input-independent ancilla.  At a multiplication node, a permutation of the
Fourier basis maps the output of \(W^{(a)}\) to its canonical symmetric-PSC
form and no herald register is introduced.  These outputs are instances of the heralded
mixtures defined in Section~\ref{sec:heralded-psc}, with the bit and
multiplication outputs corresponding to trivial heralds.  Up to an isometry,
the check-node output is characterized by its herald probabilities and
conditional eigen lists, whereas the bit- and multiplication-node outputs
are characterized by a single eigen list.  We next recall the check- and
bit-node eigen list updates, given as Lemmas~12 and~13 in the \(q\)-ary BPQM
paper~\cite{mandal2026belief}.  We then state the
multiplication-node update used for nonzero entries of the parity-check
matrix.

\begin{lem}[{\cite[Lemma~12]{mandal2026belief}}]\label{lem:check node}
   Let $W_1$ and $W_2$ be symmetric $q$-ary PSCs with eigen lists
   $\blambda_1=[\lambda_0^{(1)},\dots,\lambda_{q-1}^{(1)}]$ and
   $\blambda_2=[\lambda_0^{(2)},\dots,\lambda_{q-1}^{(2)}]$,
   respectively.  The channel $W_1\cnop W_2$ is isometrically equivalent
   to a heralded ensemble
   $\{p_m^{\cnop},W_m^{\cnop}\}_{m\in[q]}$ of symmetric $q$-ary PSCs.
   If
   $\blambda_m^{\cnop}=[\lambda_0^{(\cnop,m)},\dots,
   \lambda_{q-1}^{(\cnop,m)}]$ is the eigen list of $W_m^{\cnop}$, then
   \begin{align*}
       p_{m}^{\cnop} & =\frac{1}{q^2}\sum_{j\in [q]}\lambda_{(m+j)}^{(1)}\lambda_{-j}^{(2)}\\
       \lambda_{j}^{(\cnop,m)} & = \frac{1}{qp_{m}^{\cnop}}\lambda_{(m+j)}^{(1)}\lambda_{-j}^{(2)}.
   \end{align*}
   For the parity-check equation \(c_1+c_2=l\), the induced channel is
   \(W_1\cnop W_2^{(-1)}\)
   \cite{mandal2026qmp}, where $-1$ denotes edge label multiplied by $-1$.  Denote its herald probabilities and conditional
   eigen lists by
   \(\widetilde p_m^{\cnop}\) and
   \(\widetilde{\blambda}_m^{\cnop}\), respectively.  Then
   \begin{align*}
      \widetilde p_m^{\cnop}
      &=
      \frac{1}{q^2}
      \sum_{j\in[q]}
      \lambda_{m+j}^{(1)}\lambda_j^{(2)},\\
      \widetilde\lambda_j^{(\cnop,m)}
      &=
      \frac{
         \lambda_{m+j}^{(1)}\lambda_j^{(2)}
      }{
         q\widetilde p_m^{\cnop}
      }.
   \end{align*}
\end{lem}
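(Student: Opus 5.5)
Working in the Fourier basis \(\{\ket{v_j}\}\), I will prove the first part by computing the joint output state of \(W_1\cnop W_2\) directly and reading off an orthogonal decomposition. Using \eqref{eq:psi-fourier-form},
\begin{align*}
   \ket{\psi^{(1)}_u}\otimes\ket{\psi^{(2)}_{u-l}}
   =\frac1q\sum_{j,k\in[q]}\sqrt{\lambda^{(1)}_j\lambda^{(2)}_k}\,
   \omega^{-uj-(u-l)k}\ket{v_j}\ket{v_k}.
\end{align*}
The phase is \(\omega^{-u(j+k)}\omega^{lk}\). I reparametrize by \(m\coloneqq j+k\) and \(k=-j'\), i.e.\ \(j=m+j'\), so the phase becomes \(\omega^{-um}\omega^{-lj'}\). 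The subspaces \(\mathcal H_m\coloneqq\mathrm{span}\{\ket{v_{m+j'}}\ket{v_{-j'}}:j'\in[q]\}\) are mutually orthogonal and span the two-qudit space. The Fourier phase \(\omega^{-um}\) depends on \(u\) only through the label \(m\) of the subspace.

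For the averaged state \(\frac1q\sum_u\ketbra{\cdot}{\cdot}\), the cross terms between \(\mathcal H_m\) and \(\mathcal H_{m'}\) carry the factor \(\frac1q\sum_u\omega^{-u(m-m')}=\delta_{m,m'}\) by \eqref{Eq:fourier-orthogonality}. Hence \([W_1\cnop W_2](l)=\sum_m P_m\,\rho_l\,P_m\) is block diagonal, where \(P_m\) is the projector onto \(\mathcal H_m\). Within block \(m\), the \(u\)-dependence reduces to a global phase, so the block equals \(p_m\ketbra{\phi_{m,l}}{\phi_{m,l}}\) with
\begin{align*}
   \ket{\phi_{m,l}}\propto\frac1q\sum_{j'}\sqrt{\lambda^{(1)}_{m+j'}\lambda^{(2)}_{-j'}}\,\omega^{-lj'}\ket{v_{m+j'}}\ket{v_{-j'}}.
\end{align*}
Its squared norm, \(p_m=q^{-2}\sum_{j'}\lambda^{(1)}_{m+j'}\lambda^{(2)}_{-j'}\), is independent of \(l\). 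This independence is the key point, because it makes \(m\) a genuine input-independent herald.

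Next I will apply the isometry \(\ket{v_{m+j'}}\ket{v_{-j'}}\mapsto\ket{m}\otimes\ket{v_{j'}}\). This is a unitary relabeling of the product Fourier basis, since \((j,k)\mapsto(j+k,-k)\) is a bijection. It maps the state to \(\sum_m p_m\ketbra{m}{m}\otimes\ketbra{\psi^{\cnop,m}_l}{\psi^{\cnop,m}_l}\). Comparing with \eqref{eq:psi-fourier-form}, the normalized state has the canonical form with \(\sqrt{\lambda^{(\cnop,m)}_{j'}}/\sqrt q\) as coefficients, where \(\lambda^{(\cnop,m)}_{j'}=\lambda^{(1)}_{m+j'}\lambda^{(2)}_{-j'}/(qp_m)\). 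The normalization is consistent with Lemma~\ref{lem:gram-trace-relation}, since \(\sum_{j'}\lambda^{(\cnop,m)}_{j'}=q\). By the circulant structure and Lemma~\ref{lem:canonical state representation}, each conditional channel \(W^{\cnop}_m\) is therefore a symmetric \(q\)-ary PSC with the claimed eigen list. Blocks with \(p_m=0\) are simply omitted.

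For the second part, I first compute the eigen list of \(W^{(a)}\) from Definition~\ref{def:multiplication node}. By \eqref{eq:psi-fourier-form}, \(\ket{\psi_{a^{-1}l}}=q^{-1/2}\sum_j\sqrt{\lambda_j}\,\omega^{-la^{-1}j}\ket{v_j}\). Relabeling \(j=ak\) gives eigen list \(\lambda^{(a)}_k=\lambda_{ak}\). For \(a=-1\), this yields \(\lambda^{(2),(-1)}_k=\lambda^{(2)}_{-k}\). The remaining step is to check that \(c_1+c_2=l\) is equivalent to \(c_1-(-c_2)=l\), so the induced channel is \(W_1\cnop W_2^{(-1)}\). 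Substituting the reversed eigen list into the first formulas replaces \(\lambda^{(2)}_{-j}\) by \(\lambda^{(2)}_{j}\), which gives the stated \(\widetilde p_m\) and \(\widetilde\lambda\). The main obstacle is bookkeeping of signs and index conventions, namely \(\omega^{-uj}\) versus \(\omega^{uj}\) and the direction of the multiplication node; I would verify these on a \(q=2\) or \(q=3\) example.
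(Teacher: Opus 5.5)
Your proposal is correct and follows essentially the same route as the paper. The relabeling $\ket{v_{m+j'}}\ket{v_{-j'}}\mapsto\ket{m}\otimes\ket{v_{j'}}$ is exactly the check-node unitary $U^{\cnop}=(\mI\otimes F^\dagger)\operatorname{SWAP}\widetilde U^{\cnop}$ of Section~\ref{sec:BPQM_Unitary}, up to the order of the two output registers. The second part is the $a=-1$ case of Lemma~\ref{lem:multiplication-node}, and the sign conventions you were worried about do check out: $W_2^{(-1)}(u-l)=W_2(l-u)$ encodes $c_1+c_2=l$, and $\lambda^{(2),(-1)}_{-j}=\lambda^{(2)}_{j}$.
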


   \begin{lem}[{\cite[Lemma~13]{mandal2026belief}}]\label{lem:bit node}
       Consider symmetric $q$-ary PSCs $W_{1}$ and $W_{2}$ with Gram matrices $G^{(1)}$ and $G^{(2)}$ having eigen lists $\blambda_{1}=[\lambda^{(1)}_{0},\dots,\lambda_{q-1}^{(1)}]$ and $\blambda_{2}=[\lambda^{(2)}_{0},\dots,\lambda_{q-1}^{(2)}]$, respectively. After bit-node combining, the channel $W_{1}\vnop W_{2}$ is isometrically equivalent to a symmetric $q$-ary PSC $W^{\vnop}$ whose Gram matrix has eigen list $\blambda^{\vnop}=[\lambda^{\vnop}_{0},\dots, \lambda^{\vnop}_{q-1}]$, where
       \begin{align*}
           \lambda^{\vnop}_{j}=\frac{1}{q}\sum_{k\in [q]}\lambdaa_{k}\lambdab_{j-k}.
       \end{align*}
   \end{lem}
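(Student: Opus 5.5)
Since the statement is cited from \cite[Lemma~13]{mandal2026belief}, I would reprove it directly through Gram matrices, relying on Lemma~\ref{lem:canonical state representation}: a symmetric PSC is determined up to isometry by its eigen list. The strategy is to compute the Gram matrix of the combined states, show that it is circulant, and read off its Fourier eigenvalues. The proof then reduces to three steps: (i) show the combined family is a symmetric $q$-ary PSC; (ii) compute its eigen list via Lemma~\ref{lem:gram eigenvector}; (iii) invoke Lemma~\ref{lem:canonical state representation} to obtain the isometry.

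For step (i), Definition~\ref{def:bit node} gives the output states $\ket{\Psi_l}=\ket{\psi^{(1)}_l}\otimes\ket{\psi^{(2)}_l}$. These are pure, and their Gram matrix factors entrywise as
\begin{align*}
   G^{\vnop}_{i,j}
   =\braket{\psi^{(1)}_i}{\psi^{(1)}_j}\braket{\psi^{(2)}_i}{\psi^{(2)}_j}
   =g^{(1)}_{j-i}g^{(2)}_{j-i}.
\end{align*}
Thus $G^{\vnop}$ is the Hadamard product of two circulant matrices and is therefore itself circulant, with first row $g^{\vnop}_u=g^{(1)}_ug^{(2)}_u$. This already shows that $W_1\vnop W_2$ is a symmetric $q$-ary PSC.

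For step (ii), Lemma~\ref{lem:gram eigenvector} gives $\lambda^{\vnop}_j=\sum_u g^{(1)}_ug^{(2)}_u\omega^{uj}$. I would substitute the inverse Fourier relation $g^{(i)}_u=\frac1q\sum_k\lambda^{(i)}_k\omega^{-uk}$ into each factor and then use \eqref{Eq:fourier-orthogonality} to collapse the sum over $u$. This Hadamard-product-to-convolution step is the only real computation, and the main risk is sign and indexing conventions, so I would verify them carefully. Expanding,
\begin{align*}
   \lambda^{\vnop}_j
   =\frac{1}{q^2}\sum_{k,k'}\lambda^{(1)}_k\lambda^{(2)}_{k'}\sum_u\omega^{u(j-k-k')}
   =\frac1q\sum_k\lambda^{(1)}_k\lambda^{(2)}_{j-k}.
\end{align*}
The sign convention is consistent: $\lambda_m=\sum_j g_j\omega^{jm}$ inverts to $g_j=\frac1q\sum_m\lambda_m\omega^{-jm}$.

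For step (iii), applying Lemma~\ref{lem:canonical state representation} to the circulant family $\{\ket{\Psi_l}\}$ gives the claimed isometric equivalence. Concretely, the isometry maps $\ket{\Psi_l}$ to the canonical states \eqref{eq:psi-fourier-form} built from $\blambda^{\vnop}$, with the remaining dimensions forming an input-independent ancilla. Equality of Gram matrices is exactly the condition for such an isometry between the two spans to exist. As a sanity check, Lemma~\ref{lem:gram-trace-relation} holds for the new list, since $\sum_j\lambda^{\vnop}_j=\frac1q\cdot q\cdot q=q$.
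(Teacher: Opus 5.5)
Your proof is correct and complete for the lemma as stated. The Gram matrix of $\ket{\psi^{(1)}_l}\otimes\ket{\psi^{(2)}_l}$ is the entrywise product of two circulant matrices. Your inversion $g_u=\frac1q\sum_m\lambda_m\omega^{-um}$ is the correct inverse of Lemma~\ref{lem:gram eigenvector}. \eqref{Eq:fourier-orthogonality} collapses the $u$-sum to $k'=j-k$, and Lemma~\ref{lem:canonical state representation} then supplies the isometry.

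The paper does not reprove this lemma; it imports it from \cite{mandal2026belief}. So there is no in-paper proof to compare against, although your Gram-matrix route follows the same pattern the paper uses for Lemma~\ref{lem:multiplication-node}.

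A more constructive alternative is to expand both states in the canonical form \eqref{eq:psi-fourier-form} and regroup by $m=j+j'$:
\begin{align*}
   \ket{\psi^{(1)}_u}\otimes\ket{\psi^{(2)}_u}
   =\frac{1}{\sqrt q}\sum_{m\in[q]}\sqrt{\lambda^{\vnop}_m}\,\omega^{-um}\ket{\phi_m},
   \qquad
   \ket{\phi_m}\propto\sum_{k\in[q]}\sqrt{\lambda^{(1)}_k\lambda^{(2)}_{m-k}}\,\ket{v_k}\otimes\ket{v_{m-k}}.
\end{align*}
The $\ket{\phi_m}$ are orthonormal because they have disjoint supports in the product Fourier basis, so this reads off $\blambda^{\vnop}$ with no Fourier inversion. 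It also yields an explicit unitary $U^{\vnop}_{\blambda_1,\blambda_2}$ that sends each $\operatorname{span}\{\ket{v_k}\otimes\ket{v_{m-k}}\}_{k}$ onto $\ket{v_m}\otimes\mathbb{C}^q$.

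On each such span, $U_d\otimes U_d$ acts as the scalar $\omega^{-dm}$. Hence this choice satisfies $U^{\vnop}(U_d\otimes U_d)=(U_d\otimes\mI)U^{\vnop}$ as a full operator identity, which is what Lemma~\ref{lem:local-bpqm-projector-covariance} invokes. Your existence argument fixes the isometry only on the span of the output states, where this covariance holds automatically but need not extend beyond it. None of this is needed for the eigen-list formula itself.
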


Now, we describe how the multiplication node in
Definition~\ref{def:multiplication node} acts on the eigen list of a symmetric
\(q\)-ary PSC.

\begin{lem}\label{lem:multiplication-node}
   Let $W\colon u\rightarrow \ketbra{\psi_u}{\psi_u}$ be a symmetric $q$-ary PSC with Gram matrix first row $[g_0,\dots,g_{q-1}]$ and eigen list $\blambda=[\lambda_0,\dots,\lambda_{q-1}]$. For $a\in [q]\setminus\{0\}$, the multiplication channel $W^{(a)}$ is also a symmetric $q$-ary PSC. Its Gram matrix first row is given by
   \begin{align*}
       g^{(a)}_l = g_{a^{-1}l},\qquad l\in [q],
   \end{align*}
   and its eigen list $\blambda^{(a)}=[\lambda^{(a)}_0,\dots,\lambda^{(a)}_{q-1}]$ satisfies
   \begin{align*}
       \lambda^{(a)}_m=\lambda_{am},\qquad m\in [q].
   \end{align*}
   In particular, $\blambda^{(a)}$ is a permutation of $\blambda$, and
   \begin{align*}
      F(W^{(a)})=F(W),
      \qquad
      \perr(W^{(a)})=\perr(W).
   \end{align*}
\end{lem}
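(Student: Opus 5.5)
The plan is to compute everything directly from the definitions. The Gram matrix of $W^{(a)}$ has entries
\[
G^{(a)}_{i,j}=\braket{\psi_{a^{-1}i}}{\psi_{a^{-1}j}}=g_{a^{-1}j-a^{-1}i}=g_{a^{-1}(j-i)} .
\]
Here we use that $W$ is symmetric, so $G_{i,j}=g_{j-i}$, together with distributivity in $\mathbb F_q$. This entry depends only on $j-i$, so $G^{(a)}$ is circulant with first row $g^{(a)}_l=g_{a^{-1}l}$. Hence $W^{(a)}$ is a symmetric $q$-ary PSC by definition, and its output states are the same set of pure states, only relabeled.

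For the eigen list, I will apply Lemma~\ref{lem:gram eigenvector} to $G^{(a)}$:
\[
\lambda^{(a)}_m=\sum_{l\in[q]}g_{a^{-1}l}\,\omega^{lm}.
\]
Substitute $l=ak$. Since $a\neq 0$ and $q$ is prime, $k\mapsto ak$ is a bijection of $\mathbb F_q$. This gives $\sum_k g_k\omega^{(ak)m}$. The one point needing care, and the only real obstacle, is that the exponent $lm$ is computed with integer representatives in $[q]$, while $l=ak$ is reduced mod $q$. Because $\omega^q=1$, $\omega^{lm}$ depends only on $lm \bmod q$, and $(ak)m\equiv k(am)\pmod q$. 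So the sum equals $\sum_k g_k\omega^{k(am)}=\lambda_{am}$. Since $m\mapsto am$ is a permutation of $[q]$, $\blambda^{(a)}$ is a permutation of $\blambda$.

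For the last two claims I will use two earlier lemmas:
\begin{itemize}
\item By Lemma~\ref{lem:Holevo information and channel fidelity}, $F(W^{(a)})=\frac{1}{q-1}\sum_{u=1}^{q-1}|g_{a^{-1}u}|$. Multiplication by $a^{-1}$ permutes the nonzero elements, so this equals $F(W)$.
\item By Lemma~\ref{lem:pgm}, $\perr$ depends on the eigen list only through the symmetric function $\sum_u\sqrt{\lambda_u}$. This sum is invariant under the permutation $m\mapsto am$, so $\perr(W^{(a)})=\perr(W)$.
\end{itemize}

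As an alternative, one can note that $W^{(a)}$ outputs the same ensemble of states with relabeled inputs, and the input distribution is uniform, so both quantities are invariant. The direct formula route above is cleaner.
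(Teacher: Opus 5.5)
Your proposal is correct and follows the paper's proof. The paper also computes $G^{(a)}_{i,j}=g_{a^{-1}(j-i)}$ and applies Lemma~\ref{lem:gram eigenvector} with the change of variables $l=ak$ to get $\lambda^{(a)}_m=\lambda_{am}$, then derives the invariance of $F$ and $\perr$ from the permutation of the nonzero-index $g_l$ and of $\sum_m\sqrt{\lambda_m}$. Your remark that $\omega^{lm}$ depends only on $lm \bmod q$ makes explicit a step the paper leaves implicit.
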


\begin{proof}
   The output states of $W^{(a)}$ are $\{\ket{\psi_{a^{-1}l}}\}_{l\in [q]}$. Hence the Gram matrix $G^{(a)}$ of $W^{(a)}$ satisfies
   \begin{align*}
       G^{(a)}_{i,j}
       &=
       \braket{\psi_{a^{-1}i}}{\psi_{a^{-1}j}}\\
       &=
       g_{a^{-1}(j-i)}.
   \end{align*}
   Thus $G^{(a)}$ is circulant, and its first row satisfies $g^{(a)}_l=g_{a^{-1}l}$. By Lemma~\ref{lem:gram eigenvector}, the $m\textsuperscript{th}$ eigenvalue of $G^{(a)}$ is
   \begin{align*}
       \lambda^{(a)}_m
       &=
       \sum_{l\in [q]}g^{(a)}_l\omega^{lm}\\
       &=
       \sum_{l\in [q]}g_{a^{-1}l}\omega^{lm}.
   \end{align*}
   Applying the change of variables $r=a^{-1}l$, equivalently $l=ar$, gives
   \begin{align*}
       \lambda^{(a)}_m
       &=
       \sum_{r\in [q]}g_r\omega^{arm}\\
       &=
       \lambda_{am}.
   \end{align*}
   Since $a\in [q]\setminus\{0\}$, the map $m\mapsto am$ is a permutation of $[q]$. Thus, $\blambda^{(a)}$ is a permutation of $\blambda$.
   The fidelity is unchanged because
   \(g_l^{(a)}=g_{a^{-1}l}\) permutes the nonzero-index terms in its
   definition.  The PGM error formula in Lemma~\ref{lem:pgm} depends on
   \(\blambda\) only through \(\sum_m\sqrt{\lambda_m}\), which is also
   invariant under this permutation.
\end{proof}

\subsection{BPQM unitaries}\label{sec:BPQM_Unitary}
The $q$-ary BPQM unitaries for check- and bit-node combining were
constructed in~\cite{mandal2026belief}.  We recall the required operators
to specify the forward transformations and their adjoints.  Define
\(\widetilde U^{\cnop}\) in the
Fourier basis by
\begin{align}
   \widetilde U^{\cnop}
   \left(\ket{v_j}\otimes\ket{v_{j'}}\right)
   =
   \ket{v_{j+j'}}\otimes\ket{v_{-j'}},
   \qquad j,j'\in[q].
   \label{eq:checknode unitary relation}
\end{align}
All indices in this relation are modulo \(q\).  Let \(F\) be the DFT satisfying
\(\ket{v_j}=F\ket{j}\).  With the symmetric-PSC output in the first
register and the check-node herald in the computational basis of the
second register, the check-node unitary is
\begin{align}
   U^{\cnop}
   =
   (\mI\otimes F^\dagger)
   \operatorname{SWAP}\widetilde U^{\cnop}.
   \label{eq:checknode-unitary}
\end{align}
Thus \(U^{\cnop}\) is independent of the input eigen lists.  For input
eigen lists \(\blambda_1,\blambda_2\), the bit-node unitary is
characterized on the input states by
\begin{align}\label{eq:bitnode unitary relation}
   U^{\vnop}_{\blambda_1,\blambda_2}
   \left(\ket{\psi^{(1)}_u}\otimes\ket{\psi^{(2)}_u}\right)
   =
   \ket{\psi^{\vnop}_u}\otimes\ket{0},
   \qquad u\in[q].
\end{align}
Thus, unlike the check-node unitary, the bit-node unitary depends on the
conditional input eigen lists.  For a multiplication node with
\(a\in[q]\setminus\{0\}\), we use the Fourier-basis permutation
\begin{align}
   U^{\times a}\ket{v_j}
   =
   \ket{v_{a^{-1}j}},
   \qquad j\in[q].
   \label{eq:multiplication-node-unitary}
\end{align}
It satisfies
\begin{align*}
   U^{\times a}\ket{\psi_{a^{-1}l}}
   =\ket{\psi^{(a)}_l},
   \qquad l\in[q].
\end{align*}
It implements the eigen list permutation
\(\lambda_m^{(a)}=\lambda_{am}\), depends only on \(a\), and introduces no
herald. For heralded mixtures, these operations are controlled by the incoming
herald values.  In particular, the bit-node operation is
\begin{align*}
   U^{\vnop}_{\mathrm{control}}
   =
   \sum_{\substack{x_1\in\cX_1\\x_2\in\cX_2}}
   U^{\vnop}_{\blambda_{x_1},\blambda_{x_2}}
   \otimes\ketbra{x_1}{x_1}\otimes\ketbra{x_2}{x_2},
\end{align*}
where \(\cX_1,\cX_2\) are the incoming herald alphabets.  The check- and
multiplication-node operations act on the PSC registers as
\(U^{\cnop}\) and \(U^{\times a}\), respectively, while preserving the
incoming herald registers.  These controlled operations are the local
unitaries used in Section~\ref{sec:operational-tree-bpqm}.

\subsection{Heralded Mixtures of Symmetric PSCs}\label{sec:heralded-psc}
A check-node operation generally produces classical side information
rather than a single symmetric PSC.  A closed message class permits
iteration of the node operations and expresses the reliability of the
resulting channel using finitely many eigen lists.  The
heralded-mixture representation provides this closure and also identifies
the PGM as the final optimal symbol measurement.  Specifically, the BPQM
check-node unitary produces a
symmetric PSC conditioned on an orthogonal classical value.  The
probability of each value and the eigen list of its conditional PSC specify
the message.  Bit and multiplication nodes correspond to the special case
with a single herald value.

\begin{defn}[Heralded mixture class]\label{defn:heralded-mixture-class}
Let $\cM_q$ denote the class of CQ channels $W_H$ that are isometrically
equivalent to a channel of the following form: there is a finite set $\cX$, a
probability distribution $\{p_x\}_{x\in\cX}$, and symmetric $q$-ary PSCs
$\{W_x\}_{x\in\cX}$ such that, for all $j\in[q]$,
\begin{align*}
   W_H(j)=\sum_{x\in\cX}p_x W_x(j)\otimes\ketbra{x}{x}.
\end{align*}
Here $W_x:j\mapsto\ketbra{\psi_j^x}{\psi_j^x}$ has circulant Gram matrix $G_x$ with eigen list $\blambda_x$, and $\{\ket{x}\}_{x\in\cX}$ is an orthonormal basis for the herald register. The distribution $\{p_x\}_{x\in\cX}$ is independent of $j$. We call every channel $W_H\in\cM_q$ a heralded mixture of symmetric $q$-ary PSCs.
\end{defn}
\noindent For such a channel, the channel fidelity $F(W_{H})$ \cite{mandal2026belief} satisfies
   \begin{align*}
     &  F(W_{H}) = \sum_{x\in \cX}p_x F(W_x) = \mathbb{E}_{x}(F(W_{x}))
   \end{align*}
For the canonical representation in~\eqref{eq:psi-fourier-form}, the PGM is
the projective measurement \(\{\Pi_j\}_{j\in[q]}\), where
\begin{align}
   \Pi_j=\ketbra{\Gamma_j}{\Gamma_j},
   \qquad
   \ket{\Gamma_j}
   =\frac{1}{\sqrt q}\sum_{m\in[q]}\omega^{-jm}\ket{v_m}.
   \label{eq:qary-pgm}
\end{align}
Thus, its operators do not depend on the eigen list
\cite[Proof of Lemma~9]{mandal2026belief}.

\begin{lem}
\label{lem:heralded-error-fidelity}
For a channel \(W_H\in\cM_q\) written in the canonical form of
Definition~\ref{defn:heralded-mixture-class} and with uniform input, the
measurement \(\{\Pi_j\otimes\mI\}_{j\in[q]}\) is optimal.  Its error probability
satisfies
\begin{align}
   \perr(W_H)
   =
   \sum_{x\in\cX}p_x\perr(W_x).                    \label{eq:heralded-optimal-error}
\end{align}
Moreover,
\begin{align*}
   \frac{F(W_H)^2}{4}
   \leq
   \perr(W_H)
   \leq
   (q-1)F(W_H).
\end{align*}
For an isometrically equivalent realization
\(W_H'(j)=V W_H(j)V^\dagger\), the corresponding optimal measurement on
the support of \(W_H'\) is
\(\{V(\Pi_j\otimes\mI)V^\dagger\}_{j\in[q]}\).
\end{lem}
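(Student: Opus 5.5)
The plan is to reduce everything to the per-herald PSCs $W_x$, using the block-diagonal structure of $W_H$ in the herald basis. Throughout, work in the canonical form $W_H(j)=\sum_x p_x W_x(j)\otimes\ketbra{x}{x}$, with each $W_x$ written as in~\eqref{eq:psi-fourier-form} in the common Fourier basis $\{\ket{v_m}\}$. The eigen list depends on $x$, but the basis does not.

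\textbf{Error of the product measurement.} Since $\Pi_j$ in~\eqref{eq:qary-pgm} does not depend on the eigen list, the measurement $\{\Pi_j\otimes\mI\}$ acts on each herald block as the PGM for $W_x$. We compute
\begin{align*}
\frac1q\sum_j\Tr\big[(\Pi_j\otimes\mI)W_H(j)\big]=\sum_x p_x\,\frac1q\sum_j\bra{\psi_j^x}\Pi_j\ket{\psi_j^x}.
\end{align*}
The inner term is the success probability of the PGM on $W_x$. Taking complements and using Lemma~\ref{lem:pgm} gives \eqref{eq:heralded-optimal-error}.

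\textbf{Optimality.} This is the step needing the most care. Take an arbitrary POVM $\{E_j\}$ on the system--herald space. Let $\mathcal P(X)=\sum_x(\mI\otimes\ketbra{x}{x})X(\mI\otimes\ketbra{x}{x})$ denote the block-pinching map. Because every $W_H(j)$ is block diagonal, $\Tr[E_jW_H(j)]=\Tr[\mathcal P(E_j)W_H(j)]$. Also $\{\mathcal P(E_j)\}$ is again a POVM, with blocks $E_j^{x}\otimes\ketbra{x}{x}$ where each $\{E_j^x\}_j$ is a POVM on the PSC register. The success probability is therefore
\begin{align*}
\sum_x p_x\,\frac1q\sum_j\Tr\big[E^x_j W_x(j)\big]\le\sum_x p_x\big(1-\perr(W_x)\big).
\end{align*}
The inequality holds because the PGM is optimal for each geometrically uniform $W_x$~\cite{eldar2002quantum}. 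By the first step this bound is attained by $\{\Pi_j\otimes\mI\}$, so that measurement is optimal.

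\textbf{Fidelity bounds.} Apply Lemma~\ref{lem:pgm-error-fidelity} to each $W_x$ to get $F(W_x)^2/4\le\perr(W_x)\le(q-1)F(W_x)$. Average these with weights $p_x$ and use $F(W_H)=\sum_xp_xF(W_x)$. The upper bound follows by linearity. For the lower bound, Jensen's inequality (convexity of $t\mapsto t^2$) gives
\begin{align*}
F(W_H)^2/4\le\sum_xp_xF(W_x)^2/4\le\perr(W_H).
\end{align*}

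\textbf{Isometric realizations.} For $W_H'(j)=VW_H(j)V^\dagger$, any POVM $\{E_j'\}$ on the support of $W_H'$ pulls back to the POVM $\{V^\dagger E_j'V\}$ with the same success probability, and the reverse direction also holds. So optimal error probabilities coincide. The optimal measurement transforms to $\{V(\Pi_j\otimes\mI)V^\dagger\}$, which is a POVM on the range of $V$, and that range contains the support.
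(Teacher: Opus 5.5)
Your proposal is correct and follows essentially the same route as the paper. Your block pinching $\mathcal{P}$ is the paper's compression $N_{j|x}=(\mI\otimes\bra{x})N_j(\mI\otimes\ket{x})$ to herald blocks followed by per-block optimality of the PGM, the fidelity bounds come from Lemma~\ref{lem:pgm-error-fidelity} via linearity and Jensen's inequality as in the paper, and your closing paragraph justifies the isometric-realization claim, which the paper states without explicit proof.
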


\begin{proof}
For any POVM \(\{N_j\}_{j\in[q]}\), define
\(N_{j|x}=(\mI\otimes\bra{x})N_j(\mI\otimes\ket{x})\).
For each \(x\), \(\{N_{j|x}\}_{j\in[q]}\) is a POVM.  Optimality of
\(\{\Pi_j\}\) for every \(W_x\) gives
\begin{align*}
   \frac{1}{q}\sum_{j\in[q]}\Tr\!\left[N_jW_H(j)\right]
   &=
   \sum_{x\in\cX}p_x
   \frac{1}{q}\sum_{j\in[q]}
   \Tr\!\left[N_{j|x}W_x(j)\right] \\
   &\leq
   \sum_{x\in\cX}p_x
   \frac{1}{q}\sum_{j\in[q]}
   \Tr\!\left[\Pi_jW_x(j)\right] \\
   &=
   \frac{1}{q}\sum_{j\in[q]}
   \Tr\!\left[(\Pi_j\otimes\mI)W_H(j)\right].
\end{align*}
Hence, \(\{\Pi_j\otimes\mI\}\) is optimal and
\eqref{eq:heralded-optimal-error} follows.  By
Lemma~\ref{lem:pgm-error-fidelity},
\begin{align*}
   \perr(W_H)
   \leq(q-1)\sum_{x\in\cX}p_xF(W_x)
   =(q-1)F(W_H).
\end{align*}
For the other direction, the lower bound in
Lemma~\ref{lem:pgm-error-fidelity} and Jensen's inequality give
\begin{align*}
   \perr(W_H)
   &=
   \sum_{x\in\cX}p_x\perr(W_x)\\
   &\geq
   \frac{1}{4}\sum_{x\in\cX}p_xF(W_x)^2\\
   &\geq
   \frac{1}{4}
   \left(\sum_{x\in\cX}p_xF(W_x)\right)^2
   =
   \frac{F(W_H)^2}{4}.\qedhere
\end{align*}
\end{proof}

Moreover, the PGM in Lemma~\ref{lem:heralded-error-fidelity} has the same
conditional error probability for every input symbol.  Let
\(\blambda_x=[\lambda_{x,0},\ldots,\lambda_{x,q-1}]\) be the eigen list of
the Gram matrix of \(W_x\).  From~\eqref{eq:psi-fourier-form} and
\eqref{eq:qary-pgm}, for every \(j\in[q]\),
\begin{align}
   1-\Tr\!\left[(\Pi_j\otimes\mI)W_H(j)\right]
   &=
   \sum_{x\in\cX}p_x
   \left[
      1-
      \left(
         \frac{1}{q}\sum_{m\in[q]}\sqrt{\lambda_{x,m}}
      \right)^2
   \right] \notag\\
   &=
   \perr(W_H).
   \label{eq:heralded-symbol-independent-error}
\end{align}
Thus, the optimal average error probability equals the conditional error
probability for each input symbol.

Combining the three node updates gives the closure property required below.
\begin{theorem}[Closure of BPQM on trees]\label{thm:qary-tree-closure}
Let $\cM_q$ be the class of heralded mixtures of symmetric $q$-ary PSCs from Definition~\ref{defn:heralded-mixture-class}. Let $T$ be a tree factor graph whose local message updates are obtained from the check-node, bit-node, and multiplication-node BPQM updates in Lemmas~\ref{lem:check node}, \ref{lem:bit node}, and~\ref{lem:multiplication-node}, composed finitely many times. Fix a directed edge $e$ of $T$. If every leaf message in the computation tree associated with $e$ belongs to $\cM_q$, then the BPQM message produced on $e$ also belongs to $\cM_q$.
Equivalently, every directed message generated by exact BPQM on such a tree is a finite heralded mixture of symmetric $q$-ary PSCs.
For the message associated with a selected symbol, applying the PGM to the
quantum state associated with the corresponding root variable node is
optimal for estimating that symbol.
\end{theorem}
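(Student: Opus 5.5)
The plan is to argue by structural induction on the computation tree rooted at the directed edge \(e\), ordered by depth. The claim for the full subtree then follows by repeatedly applying a single-node statement: if the incoming messages of a node lie in \(\cM_q\), so does its outgoing message. The base case is the hypothesis that the leaf messages belong to \(\cM_q\). Because isometric equivalence is preserved under tensor products and composition, at each node we may replace every incoming message by its canonical form from Definition~\ref{defn:heralded-mixture-class}: \(W_i(j)=\sum_{x_i}p^{(i)}_{x_i}W_{i,x_i}(j)\otimes\ketbra{x_i}{x_i}\).

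\textbf{Single-node step.} For a multiplication node, Definition~\ref{def:multiplication node} acts on the input label only, so
\begin{align*}
W^{(a)}(l)=\sum_x p_x W_x^{(a)}(l)\otimes\ketbra{x}{x}.
\end{align*}
Lemma~\ref{lem:multiplication-node} shows that each \(W_x^{(a)}\) is a symmetric PSC, and the unitary \(U^{\times a}\) puts it in canonical form.

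For a check node, the herald registers are orthogonal and their distributions do not depend on the input. Expanding Definition~\ref{def:check node} by multilinearity therefore gives
\begin{align*}
[W_1\cnop W_2](l)=\sum_{x_1,x_2}p^{(1)}_{x_1}p^{(2)}_{x_2}\,[W_{1,x_1}\cnop W_{2,x_2}](l)\otimes\ketbra{x_1x_2}{x_1x_2}.
\end{align*}
By Lemma~\ref{lem:check node}, each \(W_{1,x_1}\cnop W_{2,x_2}\) is isometric, via the input-independent \(U^{\cnop}\), to a heralded ensemble \(\{p_m^{\cnop},W_m^{\cnop}\}\). The new herald is \((x_1,x_2,m)\) with probability \(p^{(1)}_{x_1}p^{(2)}_{x_2}p^{\cnop}_m(x_1,x_2)\), which is independent of \(l\). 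The bit node is handled the same way, using Definition~\ref{def:bit node}, Lemma~\ref{lem:bit node}, and the controlled unitary \(U^{\vnop}_{\mathrm{control}}\). The auxiliary \(\ket0\) ancilla is input independent, so it can be discarded by an isometry. Higher-degree nodes decompose into finitely many degree-three operations, and the herald alphabets stay finite.

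\textbf{Main obstacle: assembling the local isometries.} The delicate point is checking that these local isometries assemble into a global isometric equivalence on the whole tree. This works because the subtrees hanging off a node act on disjoint registers, so their channels are tensor products. Conditioning on the herald alphabets is valid because the heralds are orthogonal classical registers whose distributions do not depend on the input. Hence a controlled unitary acting blockwise, as \(\sum_x U_x\otimes\ketbra{x}{x}\), is again a unitary that preserves the heralded-mixture form.

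\textbf{Optimality of the final measurement.} Take the root message \(W_H\in\cM_q\) in canonical form. Lemma~\ref{lem:heralded-error-fidelity} shows that \(\{\Pi_j\otimes\mI\}\), the PGM of \eqref{eq:qary-pgm} applied conditionally on the herald, is optimal for \(W_H\). It is also optimal for the realized channel after conjugating by the global isometry \(V\).

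It remains to show that nothing is lost relative to measuring all channel outputs jointly. Each BPQM step is either a unitary or the addition of an input-independent ancilla, and the tree structure makes the root symbol's channel equal to the combined channel built from these node operations. So the root message is isometrically equivalent to the full output channel for the root symbol, with the other symbols marginalized uniformly as in the check-node averaging. Optimal discrimination is invariant under isometries, so the PGM on the root state is optimal for estimating that symbol.
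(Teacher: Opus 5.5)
Your proposal is correct and follows essentially the same route as the paper: induction over the computation tree, conditioning on the orthogonal, input-independent incoming heralds so that Lemmas~\ref{lem:check node}, \ref{lem:bit node}, and~\ref{lem:multiplication-node} apply to the conditional PSCs, forming the product herald, and then invoking Lemma~\ref{lem:heralded-error-fidelity} together with isometric invariance for optimality of the PGM. Your node-by-node multilinear expansion, your assembly of the local (controlled) isometries, and your argument that the full tree output is isometrically equivalent to the root message make explicit steps that the paper's proof states only briefly.
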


\begin{proof}
We prove the claim by induction on the depth of the computation tree for
$e$.  At depth zero, $e$ is a leaf edge, and the claim holds by assumption.

Assume the claim holds for all directed edges whose computation trees have depth at most $d$. Let $e$ have computation-tree depth $d+1$, and let the message on $e$ be computed from incoming directed edges $e_1,\dots,e_r$. By the induction hypothesis, each incoming message has the form
\begin{align*}
   W_i(l_i)
   =
   \sum_{x_i\in\cX_i}
   p_i(x_i) W_{i,x_i}(l_i)\otimes\ketbra{x_i}{x_i},
   \qquad l_i\in[q],
\end{align*}
where $\cX_i$ is finite and each $W_{i,x_i}$ is a symmetric $q$-ary PSC. Let
\begin{align*}
   \cX=\cX_1\times\cdots\times\cX_r,
   \qquad
   p(x)=\prod_{i=1}^{r}p_i(x_i),
\end{align*}
for $x=(x_1,\dots,x_r)\in\cX$. Conditioned on the herald value $x$, the incoming messages are symmetric $q$-ary PSCs. By Lemmas~\ref{lem:check node}, \ref{lem:bit node}, and~\ref{lem:multiplication-node}, and by finite composition of these local updates, the outgoing message conditioned on $x$ is a finite heralded mixture of symmetric $q$-ary PSCs. Hence, for each $x\in\cX$, there are a finite set $\cY_x$, a probability distribution $q_x$ on $\cY_x$, and symmetric $q$-ary PSCs $\{V_{x,y}\}_{y\in\cY_x}$ such that
\begin{align*}
   \widetilde{W}_{x}(l)
   =
   \sum_{y\in\cY_x}
   q_x(y)V_{x,y}(l)\otimes\ketbra{y}{y},
   \qquad l\in[q].
\end{align*}
The unconditional outgoing message on $e$ is
\begin{align*}
   W_e(l)
   &=
   \sum_{x\in\cX}
   p(x)\widetilde{W}_{x}(l)\otimes\ketbra{x}{x} \\
   &=
   \sum_{x\in\cX}\sum_{y\in\cY_x}
   p(x)q_x(y)V_{x,y}(l)\otimes\ketbra{x,y}{x,y},
   \qquad l\in[q].
\end{align*}
This is a finite heralded mixture of symmetric $q$-ary PSCs, so
\(W_e\in\cM_q\), which completes the induction.  For the message associated
with the selected symbol, write
\begin{align*}
   W_e(l)
   =
   \sum_{h\in\cH}p_hW_h(l)\otimes\ketbra{h}{h},
   \qquad l\in[q].
\end{align*}
The BPQM node unitaries implement the exact channel combinations in
Lemmas~\ref{lem:check node}, \ref{lem:bit node}, and
\ref{lem:multiplication-node}; hence \(W_e\) is the effective CQ channel for
the selected symbol.  Lemma~\ref{lem:heralded-error-fidelity} shows that the
PGM from Lemma~\ref{lem:pgm} on the PSC output register attains its
minimum average error probability.
\end{proof}

\subsection{Operational BPQM on a tree factor graph}
\label{sec:operational-tree-bpqm}
To connect the channel-combining rules with the finite-length decoder used
later, we describe their operational implementation on a $q$-ary tree
factor graph.  BPQM realizes these channel combinations by applying node
unitaries along the graph.  Detailed descriptions of the BPQM message-passing
procedure and its unitary implementation on tree factor graphs are available
in~\cite{Renes-njp17,rengaswamy2021belief,piveteau2022quantum}.  We use the
\(q\)-ary node operations developed
in~\cite{mandal2026belief,mandal2026qmp}.  Related binary and code-specific
constructions appear in
\cite{brandsen2022belief,mandal2023belief,mandal2024polar,piveteau2025efficient}.
Operationally, a BPQM message consists of a qudit output system, any
orthogonal herald registers introduced by check-node operations, and a
classical description of the corresponding conditional channels.
Conditional on the herald values, the qudit is the output of a symmetric
\(q\)-ary PSC specified up to isometric equivalence by its eigen list, as
stated in Lemma~\ref{lem:canonical state representation}.  These eigen
lists determine the subsequent bit-node unitaries.  The herald registers
remain part of the quantum message and are used coherently as controls.
For the factor graph in Fig.~\ref{fig:operational-bpqm-base}, consider
the variable-to-check computation associated with \(c_1\).  As in the
standard computation-tree construction, the variable-to-check message is
the outgoing message, and the other \(d_v-1\) messages at \(c_1\) are
incoming check-to-variable messages.
Normalize each check factor \(f_s\), whose
output is indexed by \(c_k\), as
\begin{align}
   c_k+\sum_{i\in\partial f_s\setminus\{k\}}a_i c_i=0,
   \qquad a_i\in[q]\setminus\{0\}.                            \label{eq:root-check}
\end{align}
The operation \(U^{\times a_i}\) permutes the eigen list indices by
\(\lambda_m\mapsto\lambda_{a_i m}\).  We absorb it into the adjacent
\(U^{\cnop}\).  Figure~\ref{fig:operational-bpqm-base} shows this computation,
with BPQM combining messages from the leaf variables at the bottom toward
the target variable \(c_1\) at the top.

\newcommand{\bpqmtree}[1]{%
  \node[var] (#1x1) at (0,8.0) {\(c_1\)};
  \node[factor] (#1c1) at (-2.3,6.5) {\(f_1\)};
  \node[factor] (#1c2) at (2.3,6.5) {\(f_2\)};

  \node[mult] (#1m2) at (-3.3,5.35) {\(a_2\)};
  \node[mult] (#1m3) at (-1.3,5.35) {\(a_3\)};
  \node[mult] (#1m4) at (1.3,5.35) {\(a_4\)};
  \node[mult] (#1m5) at (3.3,5.35) {\(a_5\)};
  \node[var] (#1x2) at (-3.3,4.2) {\(c_2\)};
  \node[var] (#1x3) at (-1.3,4.2) {\(c_3\)};
  \node[var] (#1x4) at (1.3,4.2) {\(c_4\)};
  \node[var] (#1x5) at (3.3,4.2) {\(c_5\)};

  \node[factor] (#1c3) at (-1.3,2.7) {\(f_3\)};
  \node[factor] (#1c4) at (3.3,2.7) {\(f_4\)};
  \node[mult] (#1m6) at (-2.2,1.55) {\(a_6\)};
  \node[mult] (#1m7) at (-0.4,1.55) {\(a_7\)};
  \node[mult] (#1m8) at (2.4,1.55) {\(a_8\)};
  \node[mult] (#1m9) at (4.2,1.55) {\(a_9\)};
  \node[var] (#1x6) at (-2.2,0.4) {\(c_6\)};
  \node[var] (#1x7) at (-0.4,0.4) {\(c_7\)};
  \node[var] (#1x8) at (2.4,0.4) {\(c_8\)};
  \node[var] (#1x9) at (4.2,0.4) {\(c_9\)};

  \path[edge] (#1x1)--(#1c1);
  \path[edge] (#1x1)--(#1c2);
  \path[edge] (#1c1)--(#1m2);
  \path[edge] (#1m2)--(#1x2);
  \path[edge] (#1c1)--(#1m3);
  \path[edge] (#1m3)--(#1x3);
  \path[edge] (#1c2)--(#1m4);
  \path[edge] (#1m4)--(#1x4);
  \path[edge] (#1c2)--(#1m5);
  \path[edge] (#1m5)--(#1x5);
  \path[edge] (#1x3)--(#1c3);
  \path[edge] (#1x5)--(#1c4);
  \path[edge] (#1c3)--(#1m6);
  \path[edge] (#1m6)--(#1x6);
  \path[edge] (#1c3)--(#1m7);
  \path[edge] (#1m7)--(#1x7);
  \path[edge] (#1c4)--(#1m8);
  \path[edge] (#1m8)--(#1x8);
  \path[edge] (#1c4)--(#1m9);
  \path[edge] (#1m9)--(#1x9);
}

\begin{figure}[H]
   \centering
   \resizebox{0.37\textwidth}{!}{\begin{tikzpicture}[scale=.92,transform shape,font=\small,
 var/.style={draw,thick,circle,fill=bitcolor,minimum size=8mm,inner sep=0pt},
 factor/.style={draw,thick,rectangle,fill=checkcolor,minimum size=7mm,inner sep=0pt},
 mult/.style={draw,thick,isosceles triangle,isosceles triangle apex angle=60,
 shape border rotate=90,fill=infocolor,minimum width=7.2mm,minimum height=6mm,inner sep=.2pt,
 font=\scriptsize},edge/.style={draw,thick}]
\path[use as bounding box] (-4.8,-0.1) rectangle (4.8,8.6);
\bpqmtree{A}
\end{tikzpicture}}
   \caption{Tree factor graph for decoding \(c_1\).  The yellow circles are
   variable nodes, the blue squares are check factors, and each purple triangle
   applies the multiplication \(c_i\mapsto a_ic_i\).  BPQM begins with the
   physical channel outputs at the leaf variable nodes and combines messages
   toward the node associated with \(c_1\).}
   \label{fig:operational-bpqm-base}
\end{figure}

Removing an edge of the tree separates the factor graph into two
components.  The effective channel associated with the message across that
edge is induced by the component not containing \(c_1\), with input given by
the code symbol associated with that edge.  A node operation is applied after
all its incoming messages are available.  Operations on disjoint components
act on disjoint registers and can be applied in parallel.

At each check factor, the check-node unitary combines the incoming systems
from its child edges.  At a variable node, the incoming check-to-variable
channels and the physical channel have the same input code symbol.  BPQM
combines their output systems by successive applications of the bit-node
unitary.  In the notation below, \(\blambda_i\) denotes the eigen list of the
physical channel at \(c_i\), and \(\blambda_{f_s}\) denotes the eigen list of
the conditional PSC associated with the message from \(f_s\).  Concatenated
subscripts denote the eigen list obtained by combining the indicated messages.
For example, the
normalized equation at \(f_3\) is
\begin{align*}
   c_3+a_6c_6+a_7c_7=0.
\end{align*}
Hence, the physical systems for \(c_6,c_7\) are combined into a
\(c_3\)-indexed message.  The corresponding systems at \(f_4\) are combined
into a \(c_5\)-indexed message.  These messages are combined with the
physical systems for \(c_3,c_5\) using
\(U^{\vnop}_{\blambda_3,\blambda_{f_3}}\) and
\(U^{\vnop}_{\blambda_5,\blambda_{f_4}}\), respectively, with the herald
values selecting the conditional eigen lists.  Conditioned on these values,
each bit-node output is a symmetric \(q\)-ary PSC.

The resulting messages enter the check operations at \(f_1,f_2\), as shown
in Fig.~\ref{fig:operational-bpqm-forward}.  After applying
\(U^{\cnop}\) at these checks, both outputs are indexed by \(c_1\).  They
are combined by \(U^{\vnop}_{\blambda_{f_1},\blambda_{f_2}}\), followed by
\(U^{\vnop}_{\blambda_1,\blambda_{f_1f_2}}\) with the physical system for
\(c_1\).  These are the controlled node operations described in
Section~\ref{sec:BPQM_Unitary}.

\begin{figure}[H]
   \centering
   \resizebox{0.37\textwidth}{!}{\begin{tikzpicture}[scale=.92,transform shape,font=\small,
 var/.style={draw,thick,circle,fill=bitcolor,minimum size=8mm,inner sep=0pt},
 factor/.style={draw,thick,rectangle,fill=checkcolor,minimum size=7mm,inner sep=0pt},
 mult/.style={draw,thick,isosceles triangle,isosceles triangle apex angle=60,
 shape border rotate=90,fill=infocolor,minimum width=7.2mm,minimum height=6mm,inner sep=.2pt,
 font=\scriptsize},edge/.style={draw,thick},
 up/.style={->,line width=1.5pt,red!75!black},
 active/.style={draw=red!75!black,thick,rounded corners=3pt,inner sep=6pt}]
\path[use as bounding box] (-4.8,-0.1) rectangle (4.8,8.6);
\bpqmtree{D}
\node[active,fit=(Dc1)(Dm2)(Dm3)(Dx2)(Dx3)] {};
\node[active,fit=(Dc2)(Dm4)(Dm5)(Dx4)(Dx5)] {};
\draw[up] (Dc1.north)--(Dx1.south west);
\draw[up] (Dc2.north)--(Dx1.south east);
\end{tikzpicture}}
   \caption{Check-node stage at \(f_1\) and \(f_2\).  First, the incoming
   messages from the lower subtrees are combined with the physical systems
   at \(c_3\) and \(c_5\) by the corresponding bit-node unitaries.  Second,
   the multiplication permutations are incorporated into the incoming
   channels at \(f_1\) and \(f_2\).  Third, \(U^{\cnop}\) is applied at both
   checks in parallel.  The red arrows show the two resulting messages,
   each indexed by \(c_1\).}
   \label{fig:operational-bpqm-forward}
\end{figure}

The final bit-node operation
\(U^{\vnop}_{\blambda_1,\blambda_{f_1f_2}}\) includes the physical system
for \(c_1\).  The PGM of Lemma~\ref{lem:heralded-error-fidelity} is then
applied to the final qudit, as shown in
Fig.~\ref{fig:operational-bpqm-measurement}.  Since the final effective channel
belongs to \(\cM_q\), this measurement is optimal and acts only on the final
qudit.  The herald registers are retained together with the remaining output
systems.  After the measurement outcome is obtained, these systems are the
inputs to the reverse circuit.

To decode another symbol, apply the node unitaries in reverse order with
each replaced by its adjoint.  The order is
\(\bigl(U^{\vnop}_{\blambda_1,\blambda_{f_1f_2}}\bigr)^\dagger\),
\(\bigl(U^{\vnop}_{\blambda_{f_1},\blambda_{f_2}}\bigr)^\dagger\), the checks
\(f_1,f_2\), the bit nodes at \(c_3,c_5\), and the checks
\(f_3,f_4\).  Adjoint operations within each pair can be applied in
parallel.  This undoes the forward basis transformation and restores the
register layout used for the next symbol.  It does not undo the disturbance
caused by the PGM measurement.

\begin{figure}[H]
   \centering
   \resizebox{0.37\textwidth}{!}{\begin{tikzpicture}[scale=.92,transform shape,font=\small,
 var/.style={draw,thick,circle,fill=bitcolor,minimum size=8mm,inner sep=0pt},
 factor/.style={draw,thick,rectangle,fill=checkcolor,minimum size=7mm,inner sep=0pt},
 mult/.style={draw,thick,isosceles triangle,isosceles triangle apex angle=60,
 shape border rotate=90,fill=infocolor,minimum width=7.2mm,minimum height=6mm,inner sep=.2pt,
 font=\scriptsize},edge/.style={draw,thick},
 up/.style={->,line width=1.5pt,red!75!black},
 active/.style={draw=blue!75!black,thick,rounded corners=3pt,inner sep=6pt}]
\path[use as bounding box] (-4.8,-0.1) rectangle (4.8,8.6);
\bpqmtree{F}
\node[active,fit=(Fx1)(Fc1)(Fc2)] {};
\node[draw,thick,rounded corners=2pt,fill=green!18,
 minimum width=28mm,minimum height=8mm] (pgm) at (3.15,8.0)
 {PGM \(\{\Pi_l\}_{l\in[q]}\)};
\draw[up] (Fx1.east)--(pgm.west);
\node[above=1mm of pgm] {\(\widehat{c}_1\)};
\end{tikzpicture}}
   \caption{Final operations for estimating \(c_1\).  First,
   \(U^{\vnop}_{\blambda_{f_1},\blambda_{f_2}}\) combines the messages from
   \(f_1\) and \(f_2\).  Second,
   \(U^{\vnop}_{\blambda_1,\blambda_{f_1f_2}}\) combines the result with the
   physical channel output at \(c_1\).  Finally, the PGM is applied to the
   retained qudit to obtain \(\widehat c_1\).}
   \label{fig:operational-bpqm-measurement}
\end{figure}

\subsection{Density evolution on a regular tree}
\label{sec:bpqm-density-evolution}
Density evolution tracks the distribution of messages under the
computation-tree recursion.  The incoming branches are independent, so the
check-node and variable-node updates determine the message distribution at
the next iteration.  Iterating these updates gives the asymptotic
symbol-error probability and identifies the BPQM success region in the
space of channels.

Consider a $(d_v,d_c)$-regular computation tree with physical channel $W$.
An outgoing message from a check node combines its $d_c-1$ incoming
variable-to-check messages.  The root variable node combines the physical
channel with its $d_v-1$ incoming check-to-variable messages.  With general
nonzero edge coefficients, the effective channels on these incoming
branches need not be identical because their descendant computation trees
can have different edge coefficients.  To retain this dependence, orient
every edge of the computation tree toward its root.  At each check
node, let $a_{\mathrm{out}}$ be the coefficient on the outgoing edge toward
the root and let $a_s$ be the coefficient on the $s$th incoming edge for
$s\in[d_c-1]$.  Dividing the parity-check equation by
$a_{\mathrm{out}}$ assigns the relative coefficient
$a_{\mathrm{out}}^{-1}a_s\in[q]\setminus\{0\}$ to the $s$th incoming
edge.  These relative coefficients label the incoming edges of the check
nodes in the depth-$t$ computation tree associated with a variable-to-check
message.

\begin{defn}[Coefficient trajectory]
\label{defn:coefficient-trajectory}
For each $t\geq0$, let $\cA_t$ denote the set of all assignments of
normalized nonzero edge labels to the incoming edges of the check
nodes in the depth-$t$ computation tree associated with a variable-to-check
message after $t$ BPQM iterations on a $(d_v,d_c)$-regular tree.  An element
$\ba_t\in\cA_t$ is one such edge-label assignment and is called a coefficient
trajectory.  At depth zero, no check-node update occurs, so set
$\cA_0\coloneqq\{\emptyset\}$.  For $t\geq0$, define
\begin{align*}
\cA_{t+1}
\coloneqq
\left(
   \left(
      \bigl([q]\setminus\{0\}\bigr)\times\cA_t
   \right)^{d_c-1}
\right)^{d_v-1}.
\end{align*}
The outer Cartesian power corresponds to the $d_v-1$ check-to-variable
messages entering the root variable node.  The inner Cartesian power
corresponds to the $d_c-1$ incoming variable-to-check messages at each
check node.
An element $\ba_{t+1}\in\cA_{t+1}$ is written as
\begin{align*}
\ba_{t+1}
&=
\left[
   \ba_{t+1,0},\ldots,\ba_{t+1,d_v-2}
\right],
\end{align*}
where, for each $r\in[d_v-1]$,
\begin{align*}
\ba_{t+1,r}
&=
\left[
   \left(a_{t+1,r,0},\ba_t^{(r,0)}\right),
   \ldots,
\left(a_{t+1,r,d_c-2},\ba_t^{(r,d_c-2)}\right)
\right].
\end{align*}
Here $r$ indexes one of the $d_v-1$ check-to-variable messages entering
the root variable node, and $s$ indexes one of the $d_c-1$ incoming
variable-to-check messages at the corresponding check node.  The first
component $a_{t+1,r,s}\in[q]\setminus\{0\}$ is the relative coefficient
defined above.  The second component $\ba_t^{(r,s)}\in\cA_t$ is the
label array of the depth-$t$ descendant subtree that produces the same
message.

An element of $\cA_1$ can therefore be identified with a
$(d_v-1)\times(d_c-1)$ array of normalized nonzero coefficients. At
greater depths, each array entry also specifies the coefficient
labels on its descendant subtree.
\end{defn}

For fixed $t$, sample independently and uniformly from
$[q]\setminus\{0\}$ all edge coefficients encountered during the first
$t$ BPQM iterations.  Define the random variable $\bA_t$ to be the nested
array obtained by replacing, at every check node, each incoming coefficient
$a_s$ with the relative coefficient $a_{\mathrm{out}}^{-1}a_s$ and arranging
these relative coefficients according to the recursion defining $\cA_t$.
Then $\bA_t$ takes values in $\cA_t$, and $\bA_0\coloneqq\emptyset$.

Write $A_{t+1,r,s}$ and $\bA_t^{(r,s)}$ for the corresponding components
of $\bA_{t+1}$.  At each check node, the map
\begin{align*}
\left(a_{\mathrm{out}},(a_s)_{s\in[d_c-1]}\right)
\longmapsto
\left(a_{\mathrm{out}},
\left(a_{\mathrm{out}}^{-1}a_s\right)_{s\in[d_c-1]}\right)
\end{align*}
is a bijection on
$\bigl([q]\setminus\{0\}\bigr)^{d_c}$.  Hence the relative coefficients
$\{A_{t+1,r,s}\}$ are independent and uniform on
$[q]\setminus\{0\}$.  The descendant branches are disjoint, so
$\{\bA_t^{(r,s)}\}$ are independent copies of $\bA_t$ and are independent
of these relative coefficients.  It follows by induction that $\bA_t$ is
uniform on $\cA_t$.  In particular, $\cA_t$ is the support of $\bA_t$.

For a fixed coefficient trajectory $\ba_t\in\cA_t$, let
$W_t^{\ba_t}$ denote the effective channel for estimating the root symbol
after $t$ rounds of BPQM.  For the unique depth-zero trajectory, set
\(W_0^{\emptyset}\coloneqq W\).
Suppose that $W_t^{\ba_t}$ has been defined for every
$\ba_t\in\cA_t$. Fix $t\geq0$, $\ba_{t+1}\in\cA_{t+1}$, and
$r\in[d_v-1]$. On branch $(r,s)$, the incoming variable-to-check channel
is the multiplication channel
$\left(W_t^{\ba_t^{(r,s)}}\right)^{(a_{t+1,r,s})}$ from
Definition~\ref{def:multiplication node}, with input
$u_s=a_{t+1,r,s}c_s$, where $c_s$ is the unscaled branch symbol.
Conditional on the symbol $c$ on the edge directed toward the root, the
$r$th check-to-variable channel is
\begin{equation*}
\begin{aligned}
&W_{t+1,r}^{\cnop\ba_{t+1,r}}(c)\\
&\quad\coloneqq
\frac{1}{q^{d_c-2}}
\sum_{\substack{u_0,\ldots,u_{d_c-2}\in[q]\\
c+\sum_{s=0}^{d_c-2}u_s=0}}
\bigotimes_{s=0}^{d_c-2}
\left(W_t^{\ba_t^{(r,s)}}\right)^{(a_{t+1,r,s})}(u_s),
\qquad c\in[q].
\end{aligned}
\end{equation*}
The constraint in the sum is the normalized parity-check equation
$c+\sum_{s=0}^{d_c-2}a_{t+1,r,s}c_s=0$, written using
$u_s=a_{t+1,r,s}c_s$.
For each $c\in[q]$, there are $q^{d_c-2}$ admissible tuples
$(u_0,\ldots,u_{d_c-2})$, and the factor $q^{-(d_c-2)}$ is the uniform
average over these tuples. The tensor product combines the channel outputs
from the disjoint descendant subtrees.
It is implemented by a sequence of the binary check-node and multiplication
operations in Lemma~\ref{lem:check node} and
Definition~\ref{def:multiplication node}.
The depth-$(t+1)$ effective channel associated with $\ba_{t+1}$ is
\begin{align*}
W_{t+1}^{\ba_{t+1}}
\coloneqq{}&
W \vnop
W_{t+1,0}^{\cnop\ba_{t+1,0}}
\vnop
\cdots
\vnop
W_{t+1,d_v-2}^{\cnop\ba_{t+1,d_v-2}}.
\end{align*}
% Removed line break -HP
Consequently, $W_t^{\bA_t}$ is the random effective channel for estimating
the root symbol after $t$ rounds of BPQM.  To display the dependence on the
physical channel, define
its average fidelity and symbol-error probability by
\begin{align*}
F_t(W)
&\coloneqq
\E_{\bA_t}\!\left[F\!\left(W_t^{\bA_t}\right)\right],\\
P_t(W)
&\coloneqq
\E_{\bA_t}\!\left[\perr\!\left(W_t^{\bA_t}\right)\right].
\end{align*}
When \(W\) is fixed, we write \(F_t\) and \(P_t\) for these quantities.
For every $\ba_t\in\cA_t$, Theorem~\ref{thm:qary-tree-closure} gives
$W_t^{\ba_t}\in\cM_q$, and Lemma~\ref{lem:heralded-error-fidelity} gives
\begin{align*}
\frac{
F\left(
W_t^{\ba_t}
\right)^2
}{4}\leq \perr\left(
W_t^{\ba_t}
\right)\ \leq (q-1) F\left(W_t^{\ba_t}
\right).
\end{align*}
Averaging over $\bA_t$ and applying Jensen's inequality to the lower
bound yields
\begin{align}\label{Eq:average-fidelity-error-bounds}
\frac{F_t^2}{4}
&\leq
P_t
\leq
(q-1)F_t.
\end{align}

For the $(d_v,d_c)$-regular computation tree with random nonzero edge
coefficients, define the \emph{BPQM success region}
\begin{align}
\mathrm{Reg}_{\mathrm{BPQM}}(d_v,d_c)
\coloneqq
\left\{
W:\,
\begin{array}{l}
W\text{ is a symmetric }q\text{-ary PSC},\\[1mm]
\displaystyle\lim_{t\to\infty}F_t(W)=0,\quad
\displaystyle\lim_{t\to\infty}P_t(W)=0
\end{array}
\right\}.
\label{eq:regular-bpqm-success-region}
\end{align}
The bounds in~\eqref{Eq:average-fidelity-error-bounds} show that the two
convergence conditions in the definition are equivalent.  Thus either
condition can be used to verify membership in
\(\mathrm{Reg}_{\mathrm{BPQM}}(d_v,d_c)\).  This ensemble-dependent
success region does not require a scalar ordering of the \(q\)-ary PSCs.

Appendix~\ref{app:numerical-bpqm-success-region} gives a numerical
visualization of \(\mathrm{Reg}_{\mathrm{BPQM}}(3,12)\) for symmetric
ternary PSCs.

\section{BPQM Decoding of Regular LDPC Codes}
\label{sec:bpqm-ldpc}
In this section, we extend the tree-level BPQM construction to finite
regular LDPC codes.  We first explain why the usual computation-tree
interpretation of classical BP does not directly define a quantum decoder
on a Tanner graph with cycles.  We then introduce a two-stage decoder that
applies exact BPQM on cycle-free neighbourhoods and recovers the remaining
coordinates from the parity constraints.

\subsection{Regular LDPC codes}

Let \(\cG=(V,C,E)\) be a bipartite Tanner graph with variable-node set
\(V=[N]\) and check-node set \(C=[m]\).  The socket construction below
produces a bipartite multigraph, so distinct edges can connect the same
variable node and check node.  Each edge \(e\in E\) connecting variable
node \(i\) and check node \(s\) has a coefficient
\(a_e\in[q]\setminus\{0\}\).  The corresponding entry of the parity-check
matrix is the sum of the coefficients of all edges connecting these two
nodes:
\begin{align}
   H_{s,i}
   \coloneqq
   \sum_{\substack{e\in E:\\e\text{ joins }i\text{ and }s}}a_e.
   \label{eq:aggregated-parity-check}
\end{align}
The associated code is
\begin{align}
   \cC=\left\{\bc\in\mathbb F_q^N:H\bc=0\right\}.
   \label{eq:regular-ldpc-code}
\end{align}
The Tanner graph is \((d_v,d_c)\)-regular if every variable node
has degree \(d_v\) and every check node has degree \(d_c\), with parallel
edges counted according to their multiplicity.  Attach \(d_v\) sockets to every
variable node and \(d_c\) sockets to every check node, and match the
\(Nd_v\) sockets on each side uniformly at random.  Independently assign
to every matched pair a coefficient sampled uniformly from
\([q]\setminus\{0\}\).  Hence
\(m=Nd_v/d_c\) and
\begin{align*}
   R=1-\frac{d_v}{d_c}.
\end{align*}
Thus, \(R\) is the design rate.  We consider ensembles with positive
design rate, so \(d_v<d_c\).
This is the standard regular LDPC construction
\cite{gallager1962low,richardson2008modern}.

BPQM is exact when the Tanner graph is a tree.  A regular LDPC graph typically
contains cycles, so its node operations do not in general decompose into a
single tree circuit.

\subsection{Cycles and approximate cloning}
Classical BP is applied to a Tanner graph with cycles by iterating the same
local message updates used on a tree.  After a fixed number of iterations,
each outgoing message is represented by a computation tree.  If the same
Tanner-graph variable occurs at several vertices of this tree, its channel
likelihood is used at every occurrence
\cite{richardson2008modern}.

A direct BPQM analogue would supply the single quantum channel output
associated with that variable to several branches of the computation tree.
For nonorthogonal channel outputs, the no-cloning theorem precludes this
operation~\cite{wootters1982single}.  In~\cite{piveteau2022quantum}, an adjoint
equality-node operation instead maps a repeated binary PSC output to
approximate product outputs with modified channel parameters.  Reversing
the operations after each BPQM measurement permits sequential decoding of
the remaining symbols.  The number and quality of the approximate outputs
depend on the occurrences of the variable in the computation tree.

A different construction is given in~\cite{piveteau2025efficient} for a
factor graph with a single cycle.  The variables on the cycle are merged
into larger factor nodes, producing an acyclic graph with enlarged message
alphabets.  On a general LDPC Tanner graph, approximate cloning changes the
channels supplied to the computation tree, whereas repeated node merging
increases the message dimension as the cyclic components grow.  These
constructions do not yield the exact tree-channel recursion used in the
BPQM density-evolution analysis.  This motivates a BPQM-based LDPC decoder
whose block-error probability can be related directly to that recursion.

\subsection{BPQM with erasure recovery}
In this work, we propose a two-stage BPQM-based LDPC decoder.  In the first
stage, the decoder applies BPQM to coordinates whose depth-$\ell$
neighbourhoods are cycle-free.  In the second stage, it treats the remaining
coordinates as erasure locations, substitutes the BPQM estimates into the
parity-check equations, and recovers the omitted values by Gaussian
elimination.  This construction separates errors in the local quantum
measurements from failures of the erasure-recovery step.

Fix the computation depth \(\ell\), and fix one outgoing edge $e_i$ at
every variable node $i\in[N]$.  Let \(\cN_\ell(i)\) be the
depth-\(\ell\) computation graph associated with the variable-to-check
message along $e_i$.  At variable node \(i\), this message combines the
physical observation $W(c_i)$ with the \(d_v-1\) incoming
check-to-variable messages along its remaining edges.
Edge multiplicities are inherited from \(\cG\); in
particular, two parallel edges form a cycle.  Define good and bad sets of symbols
\begin{align}
   G_\ell
   \triangleq
   \{i\in[N]:\cN_\ell(i)\text{ is a tree}\},\qquad
   B_\ell
   \triangleq
   [N]\setminus G_\ell .
   \label{eq:good-bad-coordinates}
\end{align}
In the first stage, the decoder estimates the coordinates in \(G_\ell\)
sequentially.  For each \(i\in G_\ell\), it applies the depth-\(\ell\)
BPQM circuit on \(\cN_\ell(i)\), measures the final qudit with the PGM to
obtain \(\hat c_i\), and applies the adjoints of the BPQM node unitaries in
reverse order before proceeding to the next coordinate.  This procedure ensures that
each physical channel output is used only once within any BPQM
computation, while channel outputs shared by different neighbourhoods are
available for successive computations.

For $i\in G_\ell$, consider the messages in $\cN_\ell(i)$ directed toward
variable node $i$ and normalize every
check equation so
that the coefficient of the variable associated with its outgoing
message is one.  The resulting
multiplication factors define a coefficient trajectory
$\ba_{\ell,i}\in\cA_\ell$ in the sense of
Definition~\ref{defn:coefficient-trajectory}.  The exact effective channel
used by the depth-$\ell$ BPQM circuit to estimate $c_i$ is
$W_\ell^{\ba_{\ell,i}}$.

In the second stage, the coordinates in \(B_\ell\) are treated as
erasures.  Let
\(H_{G_\ell}\) and \(H_{B_\ell}\) denote the submatrices of \(H\) formed
by the columns indexed by \(G_\ell\) and \(B_\ell\), respectively.  After
BPQM has been applied to every coordinate in \(G_\ell\), the remaining
symbols are determined by solving
\begin{align}
   H_{B_\ell}\bc_{B_\ell}
   =
   -H_{G_\ell}\widehat{\bc}_{G_\ell}.
   \label{eq:bpqm-ge-recovery}
\end{align}
If this system has a unique solution and the BPQM estimates on \(G_\ell\)
are correct, Gaussian elimination recovers the transmitted codeword.
In Section~\ref{sec:vanishing-block-error}, we show that, for every channel
\(W\in\mathrm{Reg}_{\mathrm{BPQM}}(d_v,d_c)\), the depth \(\ell\) can be chosen as a
function of \(N\) so that the ensemble-average block-error probability of
this decoder vanishes as \(N\to\infty\).

\subsection{Codeword Independence of local BPQM measurements}

The block-error analysis uses the error probability of the sequential BPQM
decoder for an arbitrary transmitted codeword.  The following covariance
property shows that the symbol error probabilities for BPQM operations, and hence the
block-error probability, are independent of this codeword.  It follows from
the channel symmetry and its preservation by the BPQM node operations.

\begin{lem}\label{lem:local-bpqm-projector-covariance}
Let \(i\in G_\ell\).  Let \(J_i\subseteq[N]\)
be the indices of the variable nodes in \(\cN_\ell(i)\), let
\(U_{\ell,i}\) be the depth-\(\ell\) BPQM unitary, and define its success
projector by
\begin{align}
   \Pi_i(c)
   \coloneqq
   U_{\ell,i}^{\dagger}
   \left(\Pi_c\otimes\mI\right)
   U_{\ell,i},
   \qquad c\in[q].
   \label{eq:local-bpqm-success-projector}
\end{align}
Let
\(\bc_{J_i}=(c_j)_{j\in J_i}\) and
\(\bc'_{J_i}=(c_j')_{j\in J_i}\) be two vectors in
\(\mathbb F_q^{J_i}\).  Suppose that each vector satisfies every
parity-check equation whose check node belongs to \(\cN_\ell(i)\), and set
\(d_j\coloneqq c_j'-c_j\) for \(j\in J_i\).
Then
\begin{align}
&\left(\bigotimes_{j\in J_i}U_{d_j}^{\dagger}\right)
\Pi_i(c_i')
\left(\bigotimes_{j\in J_i}U_{d_j}\right)
=
\Pi_i(c_i).
\label{eq:local-bpqm-projector-covariance}
\end{align}
The projectors and tensor products in
\eqref{eq:local-bpqm-projector-covariance} act as the identity on the
channel-output systems outside \(J_i\).
\end{lem}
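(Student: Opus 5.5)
Since the difference vector \(\bm d=(d_j)_{j\in J_i}\) satisfies every local parity check homogeneously, I will show that each BPQM node unitary intertwines the shift operators \(U_{d}\) on its inputs with the shift on its output symbol, leaving the herald and ancilla registers untouched. Concretely, I will prove by induction over the computation tree \(\cN_\ell(i)\), from the leaves toward the root, that
\begin{align*}
   U_{\ell,i}\Bigl(\bigotimes_{j\in J_i}U_{d_j}\Bigr)U_{\ell,i}^\dagger
   =U_{d_i}\otimes\mI,
\end{align*}
where \(U_{d_i}\) acts on the final qudit and \(\mI\) on all herald and ancilla registers. Since \(\Pi_c=\ketbra{\Gamma_c}{\Gamma_c}\) with \(\ket{\Gamma_c}\) from~\eqref{eq:qary-pgm}, and \(U_d\ket{\Gamma_c}=\ket{\Gamma_{c+d}}\) (directly from~\eqref{eq:symmetric-psc-unitary}), we get \(U_{d_i}^\dagger\Pi_{c_i'}U_{d_i}=\Pi_{c_i'-d_i}=\Pi_{c_i}\). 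Conjugating \eqref{eq:local-bpqm-success-projector} then yields~\eqref{eq:local-bpqm-projector-covariance}.

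For the induction I will verify three local covariance identities, each on the Fourier basis, since every \(U_d\) is diagonal there with eigenvalue \(\omega^{-dj}\) on \(\ket{v_j}\).

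For the multiplication node, \eqref{eq:multiplication-node-unitary} gives \(U^{\times a}U_dU^{\times a\dagger}\ket{v_{k}}=\omega^{-d a k}\ket{v_k}\). Hence \(U^{\times a}U_dU^{\times a\dagger}=U_{ad}\), matching the edge relation \(u=ac\).

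For the check node, \eqref{eq:checknode unitary relation} sends \(U_{d_1}\otimes U_{d_2}\), with phase \(\omega^{-d_1j-d_2j'}\) on \(\ket{v_j}\ket{v_{j'}}\), to the phase on \(\ket{v_{j+j'}}\ket{v_{-j'}}\). When \(d_1=d_2\), this phase equals \(\omega^{-d_1(j+j')}\), so the conjugated operator is \(U_{d_1}\otimes\mI\). The remaining factors of~\eqref{eq:checknode-unitary} are the SWAP and \(F^\dagger\) on the herald, and these leave the herald with the identity. The general check relation \(c_1-c_2=c_3\) requires \(d_1-d_2=d_3\); in that case the phase splits as \(\omega^{-d_3\cdot(\text{first index})}\times\omega^{-d_2\cdot(\ldots)}\), so I must be careful. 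Writing \(-d_1j-d_2j'=-d_1(j+j')+(d_1-d_2)j'\), I get \(U_{d_3}\) on the herald-side register \(\ket{v_{-j'}}\) (up to sign) and \(U_{d_1}\) on the PSC register. This is the wrong way around. So I will instead use the orientation of Lemma~\ref{lem:check node} and the normalized form~\eqref{eq:root-check}. In a tree, I can always choose the difference vector's restriction to a branch so that the two inputs carry \((d_3+u,u)\). The freedom in \(u\) then forces the herald factor to be exactly the identity, provided the phase on the herald register depends only on the combination fixed by the constraint. \textbf{This is the main obstacle}: I must check, with the exact index conventions of~\eqref{eq:checknode unitary relation}–\eqref{eq:checknode-unitary} and Lemma~\ref{lem:check node}, that the output qudit carries the shift of the outgoing symbol and that the herald basis states \(\ket{m}\) acquire no phase. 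I expect this to reduce to the computation that \(F^\dagger\) turns the residual Fourier-diagonal shift on the second register into a computational-basis permutation; the herald value \(m\) must be invariant because \(p_m^{\cnop}\) is input-independent. If a nontrivial phase or permutation on the herald survives, it still commutes with \(\Pi_c\otimes\mI\) up to relabeling the herald. I will handle this by showing that the subsequent controlled bit-node unitaries are invariant under the relabeling, using the eigen-list symmetry of Lemma~\ref{lem:check node}.

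For the bit node, \eqref{eq:bitnode unitary relation} together with the covariance \(U_d\ket{\psi_u}=\ket{\psi_{u+d}}\) from~\eqref{eq:symmetric-psc-covariance} gives
\begin{align*}
   U^{\vnop}(U_d\otimes U_d)\ket{\psi^{(1)}_u}\ket{\psi^{(2)}_u}=\ket{\psi^{\vnop}_{u+d}}\ket 0=(U_d\otimes\mI)U^{\vnop}\ket{\psi^{(1)}_u}\ket{\psi^{(2)}_u}.
\end{align*}
The identity holds on the span of the input states, which is the relevant support. To extend it to the full space, I will use the Fourier-diagonal description of \(U^{\vnop}\) from~\cite{mandal2026belief}, or else restrict all operator statements to the support of the channel outputs, which suffices for the projector identity as used. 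The controlled versions follow branchwise, once the herald registers are shown to be unchanged.

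Finally, composing these local identities along the tree, with equal shifts on equality edges by the variable relation and check-consistent shifts by the assumed parity checks, gives the displayed global intertwining relation.
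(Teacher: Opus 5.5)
The main gap is at the check node, and it breaks the global identity you set out to prove. Your multiplication-node identity is correct. Your caution that the bit-node identity must hold on the whole space, not only on the span of the input states, is legitimate; the paper settles it by appeal to the specific construction in the cited \(q\)-ary BPQM paper.

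From the definitions of \(\widetilde U^{\cnop}\) and \(U^{\cnop}\), you get \(U^{\cnop}(\ket{v_j}\otimes\ket{v_{j'}})=\ket{v_{-j'}}\otimes\ket{j+j'}\). The SWAP places \(\ket{v_{-j'}}\) in the PSC register and \(\ket{v_{j+j'}}\), after \(F^\dagger\), in the herald. Writing \(-d_1j-d_2j'=-(d_1-d_2)(-j')-d_1(j+j')\) gives
\begin{align*}
U^{\cnop}\bigl(U_{d_1}\otimes U_{d_2}\bigr)=\bigl(U_{d_1-d_2}\otimes F^\dagger U_{d_1}F\bigr)U^{\cnop},\qquad F^\dagger U_{d_1}F=\sum_{m\in[q]}\omega^{-d_1m}\ketbra{m}{m}.
\end{align*}
So the split you called ``the wrong way around'' is actually right: the output qudit carries exactly \(U_{d_3}\). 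However, the herald is not left with the identity. Even for \(d_1=d_2\), the factor \(U_{d_1}\otimes\mI\) you found for \(\widetilde U^{\cnop}\) is moved onto the herald by the SWAP. Hence your claimed relation \(U_{\ell,i}\bigl(\bigotimes_jU_{d_j}\bigr)U_{\ell,i}^\dagger=U_{d_i}\otimes\mI\) is false as soon as some binary check operation receives a nonzero shift on its first input.

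Your proposed repair cannot work. The vector \(\bm d=\bc'-\bc\) is given, so there is no freedom to choose the branch shifts as \((d_3+u,u)\). Moreover, the herald factor depends on \(d_1\) itself, not only on \(d_3\). Finally, \(F^\dagger\) turns the residual Fourier-diagonal shift into a diagonal phase, not a permutation.

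The missing idea, which is how the paper closes the argument, is that identity on the heralds is not needed. The residual operator \(\sum_m\omega^{-d_1m}\ketbra{m}{m}\) is diagonal in the herald's computational basis. It therefore commutes with every later herald-controlled bit-node unitary, since these are block diagonal in that basis. Later check and multiplication operations only preserve the incoming herald registers and never act on them as targets. Pushing all such phases to the end gives \(U_{\ell,i}\bigl(\bigotimes_jU_{d_j}\bigr)=(U_{d_i}\otimes D)\,U_{\ell,i}\), with \(D\) diagonal on the herald registers and trivial on the bit-node ancillas. Since \(D\) commutes with \(\Pi_c\otimes\mI\), it cancels in the conjugation: \((U_{d_i}\otimes D)^\dagger(\Pi_{c_i'}\otimes\mI)(U_{d_i}\otimes D)=\Pi_{c_i}\otimes\mI\).

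Your fallback gestures at this but points to the wrong mechanism. A genuine relabeling of herald values would not be harmless in general. The conditional eigen lists \(\blambda^{\cnop}_m\) depend on \(m\), so the controlled bit-node unitaries are not invariant under permuting herald values. What saves the argument is precisely that the herald acquires only a phase.
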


\begin{proof}
\eqref{eq:symmetric-psc-unitary} and
\eqref{eq:multiplication-node-unitary} give
\begin{align*}
   U^{\times a}U_d
   &=
   U_{ad}U^{\times a}.
\end{align*}
From~\eqref{eq:checknode unitary relation} and
\eqref{eq:checknode-unitary}, direct evaluation on
\(\ket{v_j}\otimes\ket{v_{j'}}\) gives
\begin{align*}
   U^{\cnop}
   \left(U_{d_1}\otimes U_{d_2}\right)
   &=
   \left(U_{d_1-d_2}\otimes F^\dagger U_{d_1}F\right)U^{\cnop}.
\end{align*}
For every pair of conditional eigen lists, the bit-node unitary
constructed in~\cite{mandal2026belief} satisfies
\begin{align*}
   U^{\vnop}_{\blambda_1,\blambda_2}
   \left(U_d\otimes U_d\right)
   &=
   \left(U_d\otimes\mI\right)
   U^{\vnop}_{\blambda_1,\blambda_2}.
\end{align*}
Thus, although the bit-node unitary is selected by the incoming herald
values, its covariance action is $U_d\otimes U_d\mapsto U_d\otimes\mI$
for every conditional pair and does not depend on the herald registers.
The differences at every check satisfy the homogeneous parity-check
equation obtained by subtracting the equations for \(\bc_{J_i}\) and
\(\bc'_{J_i}\), while the differences entering a bit-node operation are equal.
The operators \(F^\dagger U_dF\) are diagonal in the check-herald basis
and commute with the subsequent herald-controlled bit-node operations.
Composing the three displayed identities over \(\cN_\ell(i)\) maps the
final qudit by \(U_{d_i}\), while all other resulting unitaries act on
registers on which the success projector is the identity.  Finally,
\eqref{eq:qary-pgm} and~\eqref{eq:symmetric-psc-unitary} give
\begin{align*}
   U_{d_i}^{\dagger}\Pi_{c_i'}U_{d_i}
   =
   \Pi_{c_i}.
\end{align*}
Using~\eqref{eq:local-bpqm-success-projector} proves
\eqref{eq:local-bpqm-projector-covariance}.
\end{proof}

If \(c_i'=c_i\), then \(d_i=0\), and
\eqref{eq:local-bpqm-projector-covariance} shows that the success
probability is the same for all vectors on \(J_i\) that satisfy the local
parity checks and have value \(c_i\) at variable node \(i\).

For codewords \(\bc,\bc'\in\cC\), set \(d_j=c_j'-c_j\) for
\(j\in[N]\).  Since \(\Pi_i(c)\) acts as the identity outside \(J_i\),
\eqref{eq:local-bpqm-projector-covariance} gives
\begin{align*}
&\left(\bigotimes_{j=1}^{N}U_{d_j}^{\dagger}\right)
\Pi_i(c_i')
\left(\bigotimes_{j=1}^{N}U_{d_j}\right)
=
\Pi_i(c_i)
\end{align*}
for every \(i\).  Moreover,~\eqref{eq:symmetric-psc-covariance} gives
\begin{align*}
   \bigotimes_{j=1}^{N}W(c_j')
   =
   \left(\bigotimes_{j=1}^{N}U_{d_j}\right)
   \left(\bigotimes_{j=1}^{N}W(c_j)\right)
   \left(\bigotimes_{j=1}^{N}U_{d_j}^{\dagger}\right).
\end{align*}
Substituting these covariance identities into the sequential measurement
probability shows that the decoder has the same block-error probability for
every transmitted codeword.

\section{Double-Exponential Decay of the Symbol-Error Probability on a Tree}\label{sec:double-exponential-decay}
In this section, we derive a quantitative convergence rate for the
symbol-error probability of BPQM on a regular computation tree.  We extend
the one-step channel-fidelity bounds of~\cite{mandal2026belief} to the
heralded messages and node degrees in the LDPC recursion.  The resulting
scalar inequality proves double-exponential decay of the symbol-error
probability throughout the BPQM success region.
\subsection{Fidelity Bounds}\label{sec:fidelity-bounds}
While exact density evolution propagates an entire distribution of
conditional eigen lists, a scalar bound is more convenient because the
number of conditional eigen lists grows exponentially with depth.
Extending the channel-fidelity inequalities to heralded messages and
arbitrary node degrees gives the recursion used below.
Bhattacharyya parameters control the asymptotic analysis of message passing
on classical channels, while channel fidelity plays the analogous role for
CQ channels~\cite{sason2006performance,wilde2012polar}.  In Lemma \ref{lem:fidelity check and bit node bounds}, we restate the fidelity bounds for check and bit node combined channels obtained in~\cite{mandal2026belief}. Later in Lemmas
\ref{lem fidelity bound for heralded PSC} and
\ref{lem:heralded PSC fidelity bound}, we lift these bounds to heralded mixtures
and arbitrary node degrees.  Theorem~\ref{thm:double-exponential-error-rate}
then uses the resulting recursion to establish double-exponential decay for the symbol error probability for the root node.
   \begin{lem}[{\cite[Lemma~16]{mandal2026belief}}]\label{lem:fidelity check and bit node bounds}
         Let $W_1$ and $W_2$ be symmetric $q$-ary PSCs with eigen lists
         $\blambda_1=[\lambda_0^{(1)},\dots,\lambda_{q-1}^{(1)}]$ and
         $\blambda_2=[\lambda_0^{(2)},\dots,\lambda_{q-1}^{(2)}]$,
         respectively.  The check- and bit-node combined channels satisfy
         \begin{align*}
         F(W_{1}\vnop W_{2}) & \leq  (q-1)F(W_1)F(W_{2})\\
              F(W_{1}\cnop W_{2}) & \leq F(W_1)+F(W_{2})+ (q-1)F(W_1)F(W_{2}).
             \end{align*}
   \end{lem}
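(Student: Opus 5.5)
We prove both inequalities directly from the eigen-list descriptions in Lemmas~\ref{lem:bit node} and~\ref{lem:check node}, working with the Gram-matrix first rows rather than the eigen lists. By Lemma~\ref{lem:Holevo information and channel fidelity}, the fidelity of a symmetric PSC is $F(W)=\frac{1}{q-1}\sum_{u\neq0}|g_u|$. It is also convenient to set $\Phi(W)\coloneqq\sum_{u\neq 0}|g_u|=(q-1)F(W)$.

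\emph{Bit node.} The output states of $W_1\vnop W_2$ are $\ket{\psi^{(1)}_l}\otimes\ket{\psi^{(2)}_l}$. The Gram matrix is therefore the entrywise (Hadamard) product, with first row $g^{\vnop}_u=g^{(1)}_ug^{(2)}_u$. This is consistent with the convolution formula for $\blambda^{\vnop}$ in Lemma~\ref{lem:bit node} under the inverse Fourier relation. Hence
\begin{align*}
(q-1)F(W_1\vnop W_2)=\sum_{u\neq0}|g^{(1)}_u||g^{(2)}_u|\le\Bigl(\sum_{u\neq0}|g^{(1)}_u|\Bigr)\Bigl(\sum_{u\neq0}|g^{(2)}_u|\Bigr)=(q-1)^2F(W_1)F(W_2).
\end{align*}
Dividing by $q-1$ gives the first bound. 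This step is routine.

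\emph{Check node.} Here $W_1\cnop W_2$ is a heralded mixture. By the remark following Definition~\ref{defn:heralded-mixture-class}, its fidelity is $\sum_m p_m^{\cnop}F(W_m^{\cnop})$. The plan is to compute, for each herald $m$, the Gram first row of $W^{\cnop}_m$ from its eigen list, multiply by $p_m^{\cnop}$, and sum in absolute value. Inverting Lemma~\ref{lem:gram eigenvector} via \eqref{Eq:fourier-orthogonality} gives
\begin{align*}
p_m^{\cnop}g^{(\cnop,m)}_u=\frac{1}{q}\sum_{j}p_m^{\cnop}\lambda^{(\cnop,m)}_j\omega^{-ju}=\frac{1}{q^2}\sum_j\lambda^{(1)}_{m+j}\lambda^{(2)}_{-j}\omega^{-ju}.
\end{align*}
Next we substitute $\lambda^{(i)}_k=\sum_v g^{(i)}_v\omega^{vk}$ and carry out the sum over $j$. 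This forces a relation between the two Gram indices and the index $u$, leaving
\begin{align*}
p_m^{\cnop}g^{(\cnop,m)}_u=\frac1q\sum_{v}g^{(1)}_v g^{(2)}_{v-u}\,\omega^{vm}
\end{align*}
up to sign conventions on the indices, which we will fix carefully. Taking absolute values, applying the triangle inequality inside, and then summing over $m\in[q]$ and $u\neq0$ gives
\begin{align*}
(q-1)F(W_1\cnop W_2)\le\sum_{u\neq 0}\sum_v|g^{(1)}_v||g^{(2)}_{v-u}|.
\end{align*}
The factor $1/q$ cancels against the $q$ values of $m$.

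We then split the double sum according to whether $v=0$ or $v=u$:
\begin{itemize}
\item the terms with $v=u$ use $g^{(2)}_0=1$ and give $\Phi(W_1)$;
\item the terms with $v=0$ use $g^{(1)}_0=1$ and give $\Phi(W_2)$;
\item the remaining terms, with $v\neq0$ and $v\neq u$, are bounded by $\Phi(W_1)\Phi(W_2)$.
\end{itemize}
This yields $(q-1)F(W_1\cnop W_2)\le(q-1)F_1+(q-1)F_2+(q-1)^2F_1F_2$. Dividing by $q-1$ gives exactly the stated second bound.

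The main obstacle is the Fourier bookkeeping in the check-node step. The index conventions ($m+j$ and $-j$ in Lemma~\ref{lem:check node}, and the sign of the exponent in $\lambda_m=\sum_j g_j\omega^{jm}$) must line up so that the sum over $j$ collapses cleanly to a single convolution-type sum. We will verify the collapse first in the case where $W_2$ is noiseless, i.e.\ $g^{(2)}=\delta_0$ and all $\lambda^{(2)}_k=1$. There the formula must reduce to $g^{(\cnop,m)}=g^{(1)}$ up to a phase, which fixes any sign errors.
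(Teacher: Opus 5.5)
Your proof is correct. The paper does not prove this lemma itself. It imports the result from \cite{mandal2026belief} and omits the proof, so your argument serves as a self-contained derivation from the restated Lemmas~\ref{lem:gram eigenvector}, \ref{lem:Holevo information and channel fidelity}, \ref{lem:check node} and~\ref{lem:bit node}.

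Your hedge about index conventions is unnecessary. With $\lambda_k=\sum_v g_v\omega^{vk}$ and $g_u=\frac1q\sum_j\lambda_j\omega^{-ju}$, the substitution gives exactly
\begin{align*}
p_m^{\cnop}g^{(\cnop,m)}_u
=\frac{1}{q^2}\sum_{v,w\in[q]}g^{(1)}_v g^{(2)}_w\,\omega^{vm}\sum_{j\in[q]}\omega^{j(v-w-u)}
=\frac{1}{q}\sum_{v\in[q]}g^{(1)}_v g^{(2)}_{v-u}\,\omega^{vm}.
\end{align*}
Your noiseless check, with $p_m^{\cnop}=1/q$ and $g^{(\cnop,m)}_u=\omega^{um}g^{(1)}_u$, is consistent with this formula. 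The three-way split is also sound. For $u\neq0$ the cases $v=0$ and $v=u$ are disjoint, and $(u,v)\mapsto(v,v-u)$ maps the remaining pairs injectively into pairs of nonzero indices.

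Two small justifications deserve a sentence each:
\begin{itemize}
\item The identity $F(W_1\cnop W_2)=\sum_m p_m^{\cnop}F(W_m^{\cnop})$ relies on two facts: the root fidelity is invariant under the isometry of Lemma~\ref{lem:check node}, and it splits as a herald-weighted average over orthogonal herald blocks.
\item Heralds with $p_m^{\cnop}=0$ are harmless. In that case every nonnegative product $\lambda^{(1)}_{m+j}\lambda^{(2)}_{-j}$ vanishes, so the displayed quantity is zero.
\end{itemize}

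Compare this with how the paper uses these bounds: it conditions on heralds and applies the PSC bound componentwise, as in Lemma~\ref{lem fidelity bound for heralded PSC}. Your check-node argument instead performs the herald average internally, through the sum over $m$. This reduces both inequalities to elementary $\ell^1$ estimates on Gram first rows: a Hadamard product for the bit node and a phase-twisted convolution for the check node. As a by-product, the cross term can be restricted to $w\neq v$. For $q=2$ it disappears entirely, giving $F(W_1\cnop W_2)\leq F(W_1)+F(W_2)$.
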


\begin{lem}\label{lem fidelity bound for heralded PSC}
Let $W_H^{(1)},W_H^{(2)}\in\cM_q$ have the forms
   \begin{align*}
       W_H^{(1)}(j) & =\sum_{x_1\in \cX_1}p_{x_1}^{(1)}W_{x_1}^{(1)}(j)\otimes \ketbra{x_1}{x_1}\\
       W_H^{(2)}(j) & =\sum_{x_2\in \cX_2}p_{x_2}^{(2)}W_{x_2}^{(2)}(j)\otimes \ketbra{x_2}{x_2}
   \end{align*}
for every $j\in[q]$, where $\cX_1$ and $\cX_2$ are finite. Then, we have
\begin{align*}
   F(W_{H}^{(1)}\vnop W_{H}^{(2)}) & \leq (q-1)F(W_{H}^{(1)})F(W_{H}^{(2)})\\
   F(W_{H}^{(1)}\cnop W_{H}^{(2)}) & \leq F(W_H^{(1)})+F(W_{H}^{(2)})+(q-1)F(W_{H}^{(1)})F(W_{H}^{(2)}).
\end{align*}
\end{lem}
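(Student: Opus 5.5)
The plan is to reduce both inequalities to the unheralded bounds of Lemma~\ref{lem:fidelity check and bit node bounds} by conditioning on the joint herald value and then averaging. The key facts are linearity of the fidelity in the herald distribution, $F(W_H)=\sum_x p_xF(W_x)$, and independence of the two heralds.

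First, I would write the combined channels explicitly. For the bit node, Definition~\ref{def:bit node} gives
\begin{align*}
[W_H^{(1)}\vnop W_H^{(2)}](j)=\sum_{x_1,x_2}p^{(1)}_{x_1}p^{(2)}_{x_2}\,W^{(1)}_{x_1}(j)\otimes W^{(2)}_{x_2}(j)\otimes\ketbra{x_1}{x_1}\otimes\ketbra{x_2}{x_2},
\end{align*}
up to a reordering of registers. This is a herald-orthogonal mixture over $(x_1,x_2)$ with product weights independent of $j$. Similarly, in Definition~\ref{def:check node} the average over $u$ commutes with the sum over heralds, because the herald weights do not depend on the input. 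The check-node channel is therefore the mixture over $(x_1,x_2)$ of $W^{(1)}_{x_1}\cnop W^{(2)}_{x_2}$ with the same product weights.

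Second, I would show that fidelity is additive over orthogonal heralds for these (possibly non-pure) conditional channels. If $\rho_u=\sum_x p_x\sigma_{x,u}\otimes\ketbra{x}{x}$ with orthogonal heralds, then $\sqrt{\rho_u}$ and $\sqrt{\sqrt{\rho_u}\rho_{u'}\sqrt{\rho_u}}$ are block diagonal. The root fidelity term therefore splits as $\sum_x p_x\Tr\sqrt{\sqrt{\sigma_{x,u}}\sigma_{x,u'}\sqrt{\sigma_{x,u}}}$. Averaging over $u\neq u'$ gives $F=\sum_x p_xF(\text{conditional channel})$. For the check node, the conditional channel $W^{(1)}_{x_1}\cnop W^{(2)}_{x_2}$ is itself a heralded mixture by Lemma~\ref{lem:check node}. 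Since $F$ is defined directly by the CQ formula and is invariant under isometries, I would apply it to this channel without unpacking the inner heralds.

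Third, I would apply Lemma~\ref{lem:fidelity check and bit node bounds} to each pair $(x_1,x_2)$ and average. For the bit node this gives $\sum p_{x_1}p_{x_2}(q-1)F(W_{x_1})F(W_{x_2})$, which factorizes into $(q-1)F(W_H^{(1)})F(W_H^{(2)})$. The check-node bound follows the same way: the linear terms marginalize, and the product term factorizes. I expect the main obstacle to be the additivity step, since it requires justifying that fidelity is block-additive and isometry invariant (it is unitarily invariant, and an isometry preserves the nonzero spectrum of the relevant operators). The remainder is routine averaging.
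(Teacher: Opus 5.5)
Your proposal is correct and follows the same route as the paper: condition on the independent herald pair $(x_1,x_2)$, apply Lemma~\ref{lem:fidelity check and bit node bounds} to each conditional pair of symmetric PSCs, and average, with the product weights $p^{(1)}_{x_1}p^{(2)}_{x_2}$ factorizing the bilinear term. Your explicit check of block-additivity and isometric invariance of $F$ spells out what the paper takes as given through the identity $F(W_H)=\sum_x p_xF(W_x)$, stated after Definition~\ref{defn:heralded-mixture-class}.
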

   \begin{proof}

Conditioning on the independent herald values and applying
Lemma~\ref{lem:fidelity check and bit node bounds} gives
\begin{align*}
   F(W_{H}^{(1)}\vnop W_{H}^{(2)}) & = \mathbb{E}_{x_1\in\cX_1,x_2\in \cX_2}[F(W_{x_1}^{(1)}\vnop W_{x_2}^{(2)})]\\
   & \leq (q-1)\mathbb{E}_{x_1\in\cX_1,x_2\in \cX_2}[F(W_{x_1}^{(1)})F (W_{x_2}^{(2)})]\\
    & = (q-1)F(W_{H}^{(1)})F(W_{H}^{(2)}).
\end{align*}
The check-node bound follows similarly from
\begin{align*}
F(W_{H}^{(1)}\cnop W_{H}^{(2)}) & = \mathbb{E}_{x_1\in \cX_1,x_2\in \cX_2}[F(W_{x_1}^{(1)}\cnop W_{x_2}^{(2)})]\\
   & \leq \mathbb{E}_{x_1\in \cX_1,x_2\in \cX_2}\Bigg[F(W_{x_1}^{(1)})+F(W_{x_2}^{(2)})+ (q-1)F(W_{x_1}^{(1)})F(W_{x_2}^{(2)})\Bigg]\\
& = F(W_H^{(1)})+F(W_{H}^{(2)})+(q-1)F(W_{H}^{(1)})F(W_{H}^{(2)}). \qedhere
\end{align*}
\end{proof}

\begin{lem}\label{lem:coefficient-uniform-node-fidelity}
Let \(W_1,W_2\in\cM_q\) and for all \(a,b\in[q]\setminus\{0\}\),
the multiplication node from Definition~\ref{def:multiplication node},
applied to every PSC component of each heralded mixture, satisfies
\begin{align*}
F\left(W_1^{(a)}\vnop W_2^{(b)}\right)
&\leq
(q-1)F(W_1)F(W_2),\\
F\left(W_1^{(a)}\cnop W_2^{(b)}\right)
&\leq
F(W_1)+F(W_2)+(q-1)F(W_1)F(W_2).
\end{align*}
\end{lem}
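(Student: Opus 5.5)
The approach is to reduce Lemma~\ref{lem:coefficient-uniform-node-fidelity} to Lemma~\ref{lem fidelity bound for heralded PSC}. The key observation is that the multiplication node preserves both membership in \(\cM_q\) and the channel fidelity.

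First, fix \(a\in[q]\setminus\{0\}\) and write \(W_1(j)=\sum_{x\in\cX_1}p_xW_{1,x}(j)\otimes\ketbra{x}{x}\). Applying the multiplication node componentwise gives
\begin{align*}
W_1^{(a)}(l)=W_1(a^{-1}l)=\sum_{x\in\cX_1}p_xW_{1,x}^{(a)}(l)\otimes\ketbra{x}{x}.
\end{align*}
Here the herald distribution \(\{p_x\}\) is unchanged, because it does not depend on the input symbol. By Lemma~\ref{lem:multiplication-node}, each \(W_{1,x}^{(a)}\) is a symmetric \(q\)-ary PSC. Its eigen list is the permutation \(\lambda_m\mapsto\lambda_{am}\), and \(F(W_{1,x}^{(a)})=F(W_{1,x})\). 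Hence \(W_1^{(a)}\in\cM_q\). The fidelity formula for heralded mixtures, \(F(W_H)=\sum_xp_xF(W_x)\), then gives
\begin{align*}
F\bigl(W_1^{(a)}\bigr)=\sum_{x\in\cX_1}p_xF\bigl(W_{1,x}^{(a)}\bigr)=\sum_{x\in\cX_1}p_xF(W_{1,x})=F(W_1).
\end{align*}
The same argument gives \(W_2^{(b)}\in\cM_q\) and \(F(W_2^{(b)})=F(W_2)\).

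Second, apply Lemma~\ref{lem fidelity bound for heralded PSC} to the pair \(W_1^{(a)},W_2^{(b)}\). Substituting the fidelity identities from the first step yields both claimed inequalities directly. The resulting bounds hold uniformly in \(a\) and \(b\), since the right-hand sides no longer depend on the coefficients.

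The only point needing care is whether the fidelity functional is well defined on these channels. Fidelity is invariant under isometric equivalence. It is also unaffected by relabeling the herald. It therefore suffices to check that the canonical representative of \(W_{1,x}^{(a)}\) from Lemma~\ref{lem:multiplication-node} (equivalently, the one obtained via \(U^{\times a}\)) has the same pairwise overlaps up to a permutation of the indices \(u\neq u'\). This holds because \(g^{(a)}_l=g_{a^{-1}l}\). I do not expect any genuine obstacle. The main step is noting that the herald weights are input-independent, so the multiplication node commutes with the heralded-mixture decomposition.
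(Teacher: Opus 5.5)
Your proposal is correct and follows the same route as the paper. You use Lemma~\ref{lem:multiplication-node} to show that multiplication acts componentwise on the heralded mixture and preserves each component fidelity, so $F(W_1^{(a)})=F(W_1)$ and $F(W_2^{(b)})=F(W_2)$, and then you apply Lemma~\ref{lem fidelity bound for heralded PSC} to the multiplied channels; your remarks that the herald weights are input-independent and that $W_1^{(a)}\in\cM_q$ spell out what the paper's two-line proof leaves implicit.
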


\begin{proof}
Multiplication permutes the eigen list of every PSC in each heralded
mixture. Hence, by Lemma~\ref{lem:multiplication-node},
\begin{align*}
F\left(W_1^{(a)}\right)=F(W_1),
\qquad
F\left(W_2^{(b)}\right)=F(W_2).
\end{align*}
Applying Lemma~\ref{lem fidelity bound for heralded PSC} to the
multiplied input channels proves both inequalities.
\end{proof}

\begin{lem}\label{lem:heralded PSC fidelity bound}
Let $W_1,\ldots,W_d\in\cM_q$.  Then the following bounds hold:
   \begin{align*}
       F(W_{1}\vnop \dots\vnop W_{d}) & \leq (q-1)^{d-1}\prod_{i=1}^{d}F(W_i)\\
       1+(q-1)F(W_{1}\cnop \dots\cnop W_{d}) & \leq \prod_{i=1}^{d}\left(1+(q-1)F(W_{i})\right).
   \end{align*}
\end{lem}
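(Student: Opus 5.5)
We prove both bounds by induction on \(d\), using the binary heralded bounds of Lemma~\ref{lem fidelity bound for heralded PSC} as the induction step. The one structural fact needed is closure: for \(W_1,\ldots,W_{d}\in\cM_q\), the partial combinations \(W_1\vnop\cdots\vnop W_{d-1}\) and \(W_1\cnop\cdots\cnop W_{d-1}\) again belong to \(\cM_q\). This follows from Lemmas~\ref{lem:check node} and~\ref{lem:bit node} applied conditionally on the herald values, exactly as in the proof of Theorem~\ref{thm:qary-tree-closure}. Hence Lemma~\ref{lem fidelity bound for heralded PSC} may be applied with first argument the \((d-1)\)-fold combination and second argument \(W_d\).

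The base case \(d=1\) is trivial in both bounds: for the bit node \((q-1)^0F(W_1)=F(W_1)\), and for the check node we have equality. For the bit-node bound, the induction step reads
\begin{align*}
F(W_1\vnop\cdots\vnop W_d)
&\leq (q-1)F(W_1\vnop\cdots\vnop W_{d-1})F(W_d)\\
&\leq (q-1)\,(q-1)^{d-2}\prod_{i=1}^{d-1}F(W_i)\,F(W_d),
\end{align*}
which is the claim. Here we use that fidelities are nonnegative, so the induction hypothesis may be substituted into a product with the nonnegative factor \((q-1)F(W_d)\).

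For the check node, set \(x=F(W_1\cnop\cdots\cnop W_{d-1})\) and \(y=F(W_d)\). Lemma~\ref{lem fidelity bound for heralded PSC} gives \(F(\cdot)\leq x+y+(q-1)xy\). Multiplying by \(q-1\) and adding \(1\) turns the right-hand side into the factorization
\[
1+(q-1)x+(q-1)y+(q-1)^2xy=\bigl(1+(q-1)x\bigr)\bigl(1+(q-1)y\bigr).
\]
Since \(1+(q-1)y\geq0\), we can now substitute the induction hypothesis \(1+(q-1)x\leq\prod_{i<d}(1+(q-1)F(W_i))\), which closes the induction.

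The multi-input check combination is associative up to isometry, because it is the uniform average over tuples satisfying the parity constraint, and successive binary combinations reproduce exactly this average. So the \(d\)-fold operation may be taken as the iterated binary one, and channel fidelity is invariant under isometric equivalence. The only point requiring care is to make this associativity and the closure remark explicit. Once they are in place, the argument is algebraic, and the check-node step reduces to the multiplicative rewriting \(1+(q-1)F\) shown above.
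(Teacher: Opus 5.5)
Your proof is correct and is essentially the paper's argument. The paper also iterates the binary heralded bounds of Lemma~\ref{lem fidelity bound for heralded PSC}, peeling off $W_1$ where you peel off $W_d$, and it uses the same factorization $1+(q-1)\bigl(x+y+(q-1)xy\bigr)=\bigl(1+(q-1)x\bigr)\bigl(1+(q-1)y\bigr)$ for the check node.

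One side remark needs correcting. Under the convention $c_1-c_2=c_3$, the binary check operation is not associative up to isometry. Indeed, $(W_1\cnop W_2)\cnop W_3$ enforces $c_1-c_2-c_3=l$, whereas $W_1\cnop(W_2\cnop W_3)$ enforces $c_1-c_2+c_3=l$, so the two bracketings differ by replacing $W_3$ with $W_3^{(-1)}$, which in general is not isometrically equivalent to $W_3$. This does not affect your proof: you can fix a bracketing, as the paper implicitly does, and in any case multiplication nodes preserve fidelity by Lemma~\ref{lem:multiplication-node}.
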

\begin{proof}
   Repeated application of Lemma~\ref{lem fidelity bound for heralded PSC}
   to bit-node combining gives
   \begin{align*}
       F(W_{1}\vnop \dots\vnop W_{d}) & \leq (q-1) F(W_1)F(W_2\vnop \dots\vnop W_d)\\
       & \leq (q-1)^{d-1}\prod_{i=1}^{d} F(W_i).
   \end{align*}
   For check-node combining, we obtain
   \begin{align*}
      (q-1) F(W_1\cnop W_2) &\leq (q-1)F(W_1)+(q-1)F(W_2)+(q-1)^2F(W_1)F(W_2).
   \end{align*}
   This implies
   \begin{align*}
       1+(q-1)F(W_1\cnop W_2)  \leq  (1+(q-1)F(W_1))(1+(q-1)F(W_2).
   \end{align*}
Iterating this inequality yields
   \begin{align*}
      1+(q-1) F(W_1\cnop \dots\cnop W_d ) & \leq (1+(q-1)F(W_1))(1+(q-1)F(W_2\cnop \dots \cnop W_d))\\
      & \leq \prod_{i=1}^{d}(1+(q-1)F(W_i)).\qedhere
   \end{align*}
\end{proof}
For nonzero factors \(a_1,\ldots,a_d\), the node inputs are
\(W_1^{(a_1)},\ldots,W_d^{(a_d)}\).  Their combination can depend on the
relative factors. Repeated application of
Lemma~\ref{lem:coefficient-uniform-node-fidelity} gives
\begin{align*}
F\left(W_1^{(a_1)}\vnop\cdots\vnop W_d^{(a_d)}\right)
&\leq
(q-1)^{d-1}\prod_{i=1}^d F(W_i),\\
1+(q-1)F\left(W_1^{(a_1)}\cnop\cdots\cnop W_d^{(a_d)}\right)
&\leq
\prod_{i=1}^d\left(1+(q-1)F(W_i)\right).
\end{align*}
Thus, the exact combined fidelity can depend on
\((a_1,\ldots,a_d)\), whereas the displayed upper bound is uniform over
these factors.
Averaging these bounds over the independent incoming coefficient
trajectories
gives the scalar recursion below.

\begin{theorem}[Double-exponential decay on a regular tree]
\label{thm:double-exponential-error-rate}
Let $F_t$ and $P_t$ be the average fidelity and symbol-error
probability defined in Section~\ref{sec:bpqm-density-evolution}. For a
$(d_v,d_c)$-regular computation tree with $d_v\geq3$, they satisfy
\begin{align}\label{Eq:regular-fidelity-recursion-bound}
F_{t+1}
&\leq
F(W)
\left(
\left(
1+(q-1)F_t
\right)^{d_c-1}
-1
\right)^{d_v-1}.
\end{align}
Moreover, if \(W\in\mathrm{Reg}_{\mathrm{BPQM}}(d_v,d_c)\), then there exist constants
$A,\beta>0$ and an integer $\ell\geq0$ such that, for every $s\geq0$,
\begin{align}\label{Eq:regular-symbol-error-double-exponential}
P_{\ell+s}
&\leq
A
\exp\left(
-\beta(d_v-1)^s
\right).
\end{align}
\end{theorem}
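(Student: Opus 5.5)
First I will establish the recursion \eqref{Eq:regular-fidelity-recursion-bound}. Fix $\ba_{t+1}\in\cA_{t+1}$. The root channel is $W\vnop W_{t+1,0}^{\cnop\ba_{t+1,0}}\vnop\cdots\vnop W_{t+1,d_v-2}^{\cnop\ba_{t+1,d_v-2}}$. Since every message lies in $\cM_q$ by Theorem~\ref{thm:qary-tree-closure}, the bit-node part of Lemma~\ref{lem:heralded PSC fidelity bound} with $d=d_v$ gives
\begin{align*}
F\bigl(W_{t+1}^{\ba_{t+1}}\bigr)\leq (q-1)^{d_v-1}F(W)\prod_{r\in[d_v-1]}F\bigl(W_{t+1,r}^{\cnop\ba_{t+1,r}}\bigr).
\end{align*}
For each check-to-variable message I will use the coefficient-uniform check bound stated after Lemma~\ref{lem:heralded PSC fidelity bound}. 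This gives
\begin{align*}
(q-1)F\bigl(W_{t+1,r}^{\cnop\ba_{t+1,r}}\bigr)\leq\prod_{s}\bigl(1+(q-1)F(W_t^{\ba_t^{(r,s)}})\bigr)-1.
\end{align*}
This bound does not depend on the relative coefficients $a_{t+1,r,s}$. Multiplying, the $(q-1)^{d_v-1}$ factor is absorbed exactly. I then take the expectation over $\bA_{t+1}$. The sub-trajectories $\bA_t^{(r,s)}$ are independent copies of $\bA_t$, so the product over $r$ factorizes. Inside each factor, $\prod_s(1+(q-1)F_s)-1$ is multilinear in independent variables, so its expectation is $(1+(q-1)F_t)^{d_c-1}-1$. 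This yields \eqref{Eq:regular-fidelity-recursion-bound}.

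Next, the double-exponential decay. Set $f(x)\coloneqq F(W)\bigl((1+(q-1)x)^{d_c-1}-1\bigr)^{d_v-1}$. For $x\in[0,1]$, the mean value theorem gives $(1+(q-1)x)^{d_c-1}-1\leq Kx$ with $K\coloneqq(q-1)(d_c-1)q^{d_c-2}$. Hence $F_{t+1}\leq F(W)K^{d_v-1}F_t^{d_v-1}\leq K^{d_v-1}F_t^{d_v-1}$, using $F(W)\leq1$. Put $\gamma\coloneqq K^{(d_v-1)/(d_v-2)}$, which needs $d_v\geq3$. Then $y_t\coloneqq\gamma F_t$ satisfies $y_{t+1}\leq y_t^{d_v-1}$.

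Since $W\in\mathrm{Reg}_{\mathrm{BPQM}}(d_v,d_c)$, we have $F_t\to0$, so there is $\ell$ with $y_\ell\leq e^{-1}$, or any fixed value below $1$. Induction then gives $y_{\ell+s}\leq y_\ell^{(d_v-1)^s}=\exp(-\beta(d_v-1)^s)$ with $\beta\coloneqq-\ln y_\ell>0$. If $F_\ell=0$, any $\beta$ works. Finally, \eqref{Eq:average-fidelity-error-bounds} gives $P_{\ell+s}\leq(q-1)F_{\ell+s}\leq(q-1)\gamma^{-1}\exp(-\beta(d_v-1)^s)$, so $A\coloneqq(q-1)/\gamma$.

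The main obstacle I expect is the expectation step in the recursion. I have to confirm that the bound for a fixed trajectory factorizes into independent pieces, with nonnegative quantities and exact independence of the $\bA_t^{(r,s)}$ (shown in Section~\ref{sec:bpqm-density-evolution}). Only then can I pass from the per-trajectory bound to the expectation without invoking Jensen in the wrong direction. The coefficient-uniform bounds ensure that the random relative coefficients do not enter. The second possible concern is that $F(W)$ appears only as a prefactor and might seem to obstruct contraction; it does not, since it is at most $1$ and can be dropped.
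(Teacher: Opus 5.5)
Your proof is correct and follows essentially the same route as the paper. You use the same per-trajectory bit-node and coefficient-uniform check-node bounds, factorize the expectations using independence of the descendant subtrees, and rescale by $\gamma=K^{(d_v-1)/(d_v-2)}$ to obtain $y_{t+1}\le y_t^{d_v-1}$. The only difference is minor: your global mean-value bound on $[0,1]$, with the factor $F(W)\le 1$ dropped, spares you the paper's check that the iterates remain in $[0,\epsilon]$ and its separate $\Lambda=0$ case.
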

\begin{proof}
For each $r\in[d_v-1]$, let
\(W_{t+1,r}^{\cnop\bA_{t+1,r}}\) denote the \(r\)th random incoming
check-to-variable channel at iteration $t+1$. For every
realization of the incoming coefficient trajectories and multiplication
factors, repeated application of
Lemma~\ref{lem:coefficient-uniform-node-fidelity} gives
\begin{align*}
1+(q-1)F\left(W_{t+1,r}^{\cnop\bA_{t+1,r}}\right)
&\leq
\prod_{s\in[d_c-1]}
\left[
1+ (q-1) F\left( \left( W_t^{\bA_t^{(r,s)}}\right)^{(A_{t+1,r,s})} \right)\right]\\
&= \prod_{s\in[d_c-1]} \left[ 1+ (q-1) F\left(
W_t^{\bA_t^{(r,s)}} \right) \right].
\end{align*}
The equality follows from Lemma~\ref{lem:multiplication-node}. Taking
expectations and using the independence of the incoming coefficient
trajectories gives
\begin{align*}
1+ (q-1) \E\left[ F\left(W_{t+1,r}^{\cnop\bA_{t+1,r}}\right)
\right]
&\leq \prod_{s\in[d_c-1]}
\E\left[ 1+ (q-1) F\left( W_t^{\bA_t^{(r,s)}} \right)
\right]\\
&= \left( 1+(q-1)F_t
\right)^{d_c-1}.
\end{align*}
It follows that
\begin{align*}
\E\left[ F\left(W_{t+1,r}^{\cnop\bA_{t+1,r}}\right)\right]
&\leq \frac{ \left( 1+(q-1)F_t
\right)^{d_c-1} -1
}{q-1}.
\end{align*}
For every realization of $\bA_{t+1}$, the bit-node bound in
Lemma~\ref{lem:heralded PSC fidelity bound} gives
\begin{align*}
F\left(
W_{t+1}^{\bA_{t+1}}
\right)
&\leq
(q-1)^{d_v-1}
F(W)
\prod_{r\in[d_v-1]}
F\left(W_{t+1,r}^{\cnop\bA_{t+1,r}}\right).
\end{align*}
The channels
\(\{W_{t+1,r}^{\cnop\bA_{t+1,r}}:r\in[d_v-1]\}\) are independent and
identically distributed. Hence,
\begin{align*}
F_{t+1}
&\leq
(q-1)^{d_v-1}
F(W)
\prod_{r\in[d_v-1]}
\E\left[
F\left(W_{t+1,r}^{\cnop\bA_{t+1,r}}\right)
\right]\\
&=
(q-1)^{d_v-1}
F(W)
\left(
\E\left[
F\left(W_{t+1,0}^{\cnop\bA_{t+1,0}}\right)
\right]
\right)^{d_v-1}\\
&\leq
F(W)
\left(
\left(
1+(q-1)F_t
\right)^{d_c-1}
-1
\right)^{d_v-1}.
\end{align*}
This proves the stated scalar fidelity inequality.
Now assume that \(W\in\mathrm{Reg}_{\mathrm{BPQM}}(d_v,d_c)\), so that
$F_t\rightarrow0$. Since
\begin{align*}
\left(
1+(q-1)x
\right)^{d_c-1}
-1
&=
(q-1)(d_c-1)x
+
O(x^2)
\end{align*}
as $x\rightarrow0$, there exist constants $\epsilon>0$ and $C_0<\infty$
such that
\begin{align*}
\left(
1+(q-1)x
\right)^{d_c-1}
-1
&\leq
C_0x
\end{align*}
for every $x\in[0,\epsilon]$.  For $F_t\leq\epsilon$, the scalar fidelity
inequality gives
\begin{align*}
F_{t+1}
&\leq
F(W)
C_0^{d_v-1}
F_t^{d_v-1}.
\end{align*}
Set \(\Lambda\coloneqq F(W)C_0^{d_v-1}\).
If $\Lambda=0$, or if $F_t=0$ for some $t$, then the conclusion follows
immediately. Otherwise, because $F_t\rightarrow0$, there exists an
integer $\ell\geq0$ such that
\begin{align*}
F_\ell
&\leq \epsilon,\\
\Lambda^{1/(d_v-2)}F_\ell
&<1.
\end{align*}
An induction on $s$ gives
\begin{align*}
F_{\ell+s}
&\leq
\Lambda^{-1/(d_v-2)}
\left(
\Lambda^{1/(d_v-2)}
F_\ell
\right)^{(d_v-1)^s}
\end{align*}
for every $s\geq0$.  Set
\(\beta\coloneqq-\log(\Lambda^{1/(d_v-2)}F_\ell)>0\). Then
\begin{align*}
F_{\ell+s}
&\leq
\Lambda^{-1/(d_v-2)}
\exp\left(
-\beta(d_v-1)^s
\right).
\end{align*}
Finally, the averaged fidelity--error bound from
Section~\ref{sec:bpqm-density-evolution} gives
\begin{align*}
P_{\ell+s}
&\leq (q-1)F_{\ell+s}\\
&\leq
(q-1)
\Lambda^{-1/(d_v-2)}
\exp\left(
-\beta(d_v-1)^s
\right).
\end{align*}
Taking \(A\coloneqq(q-1)\Lambda^{-1/(d_v-2)}\) proves the claim.
\end{proof}

\section{Vanishing Block-Error Probability in the BPQM Success Region}
\label{sec:vanishing-block-error}
In this section, we prove that the proposed decoder has vanishing
ensemble-average block-error probability throughout the BPQM success
region.  Double-exponential symbol-error decay controls the sequential
BPQM measurements on cycle-free neighbourhoods.  We combine this estimate
with a bound on the number of cyclic neighbourhoods and minimum-distance
erasure recovery to obtain the block-error result.

For blocklength $N$, let $\cG_N$ be a random $(d_v,d_c)$-regular Tanner
graph with $N$ variable nodes and $m=\frac{Nd_v}{d_c}$ check nodes, obtained by
a uniform matching of the variable-node and check-node sockets.
Independently assign every edge a coefficient sampled uniformly from
\([q]\setminus\{0\}\), and let \(H\) be the parity-check matrix defined
by~\eqref{eq:aggregated-parity-check}
of the \(q\)-ary LDPC code \(\cC_N\subseteq\mathbb F_q^N\)
\cite{gallager1962low,richardson2008modern}.  By
Lemma~\ref{lem:local-bpqm-projector-covariance}, the decoder has the same
error probability for every transmitted codeword.  For a realized code
\(\cC_N\), define its conditional block-error probability by
\begin{align*}
   P(\cC_N,W)
   \coloneqq
   \Pr\!\left\{\widehat{\bc}\neq\bc\,\middle|\,\cC_N\right\},
\end{align*}

where the probability is over the decoder measurement outcomes for the
channel-output state induced by \(W\).  As the sampled code varies,
\(P(\cC_N,W)\) is a \([0,1]\)-valued random variable on the LDPC
ensemble.  Its ensemble average is
\begin{align}
   \pblk(N,W)
   \coloneqq
   \E_{\cC_N}\!\left[P(\cC_N,W)\right].
   \label{eq:regular-ensemble-block-error}
\end{align}

Recall the depth-$\ell$ neighbourhoods $\cN_\ell(i)$ and the sets
$G_\ell,B_\ell$ from \eqref{eq:good-bad-coordinates}.  The decoder in
Section~\ref{sec:bpqm-ldpc} applies depth-$\ell$ BPQM to the coordinates
in $G_\ell$ and treats the coordinates in $B_\ell$ as erasures.

For each realization of a code $\cC_N$ from the ensemble and each
$i\in G_\ell$, let
$\ba_{\ell,i}\in\cA_\ell$ be the coefficient trajectory obtained by
directing its messages toward the variable node $i$ and normalizing each
check equation with respect to its outgoing edge.  The exact effective channel for
estimating $c_i$ is $W_\ell^{\ba_{\ell,i}}$.

The event $i\in G_\ell$ depends only on the Tanner graph and is independent
of the nonzero edge coefficients.  Conditioned on $i\in G_\ell$, the
normalized coefficients in $\cN_\ell(i)$ are independent and uniform on
$[q]\setminus\{0\}$.  Consequently, the ensemble distribution of the
realizations $\ba_{\ell,i}$ satisfies
\begin{align*}
   \Pr\left\{
      \ba_{\ell,i}=\ba
      \mid i\in G_\ell
   \right\}
   =
   \Pr\left\{
      \bA_\ell=\ba
   \right\},
   \qquad \ba\in\cA_\ell.
\end{align*}
Thus, the identification with the density-evolution symbol-error
probability is made after averaging over the random edge coefficients.
\subsection{Erasure Recovery}
In the proposed decoder, we deliberately omit coordinates with cyclic local
neighbourhoods during the BPQM stage.  The following standard criterion
relates minimum distance to the rank of a parity-check submatrix and hence to
unique erasure recovery~\cite[Corollary~1.4.14]{huffman2003fundamentals}.
Let \(d_{\min}(\cC)\) denote the minimum Hamming distance of \(\cC\).  For
any \(S\subseteq[N]\), let \(H_S\) denote the submatrix of \(H\) formed by
the columns indexed by \(S\), and let \(\bc_S\) denote the restriction of
\(\bc\) to \(S\).

\begin{lem}\label{lem:minimum-distance-erasure-recovery}
Let $\cC\subseteq\mathbb F_q^N$ be a linear code with parity-check
matrix $H$, and let $B\subseteq[N]$ satisfy
\begin{align*}
   |B|<d_{\min}(\cC).
\end{align*}
Then $H_B$ has full column rank.  If the transmitted codeword is
$\bc\in\cC$ and the decoder correctly recovers $\bc_G$ on
$G=[N]\setminus B$, then $\bc_B$ is uniquely determined by $\bc_G$ and
can be recovered by solving
\begin{align*}
   H_B\bc_B=-H_G\bc_G.
\end{align*}
\end{lem}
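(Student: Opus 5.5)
The proof is the standard correspondence between codewords and linear dependencies among the columns of \(H\). We argue by contradiction for the rank claim, and then deduce uniqueness and recoverability from it.

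\textbf{Step 1: full column rank.} Suppose \(H_B\) does not have full column rank. Then there is a nonzero \(\bz\in\mathbb F_q^{B}\) with \(H_B\bz=\mzero\). Extend \(\bz\) to a vector \(\be\in\mathbb F_q^N\) by setting \(\be_B=\bz\) and \(\be_G=\mzero\), where \(G=[N]\setminus B\). Then
\begin{align*}
   H\be=H_B\be_B+H_G\be_G=H_B\bz=\mzero,
\end{align*}
so \(\be\in\cC\) by the definition of \(\cC\) as the null space of \(H\). Since \(\be\neq\mzero\) and its support lies in \(B\), we get \(d_{\min}(\cC)\leq \operatorname{wt}(\be)\leq|B|\). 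This uses the fact that the minimum distance of a linear code equals its minimum nonzero weight. It contradicts \(|B|<d_{\min}(\cC)\), so \(H_B\) has full column rank.

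\textbf{Step 2: existence of a solution.} Write the parity-check identity for the transmitted codeword in block form:
\begin{align*}
   H\bc=H_B\bc_B+H_G\bc_G=\mzero,
\end{align*}
which rearranges to \(H_B\bc_B=-H_G\bc_G\). Hence the linear system has at least the solution \(\bc_B\).

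\textbf{Step 3: uniqueness and recovery.} If \(\bc'_B\) is another solution, then \(H_B(\bc_B-\bc'_B)=\mzero\). Full column rank forces \(\bc'_B=\bc_B\). So once \(\bc_G\) is known correctly, the right-hand side is known and the solution is unique. Gaussian elimination over the field \(\mathbb F_q\) (recall \(q\) is prime) finds this solution, since a consistent system whose coefficient matrix has full column rank has exactly one solution.

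There is no real obstacle in this argument. The only point needing care is that \(H\) here is the aggregated matrix from \eqref{eq:aggregated-parity-check}, possibly with zero entries where parallel edge coefficients cancel. This does not matter, because the argument uses only that \(\cC\) is the null space of \(H\).
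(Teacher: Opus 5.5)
Your proposal is correct and matches the paper's proof in Appendix~\ref{app:minimum-distance-erasure-recovery}: a nonzero kernel vector of \(H_B\), extended by zeros, is a nonzero codeword of weight at most \(|B|<d_{\min}(\cC)\), a contradiction. The block form \(H_G\bc_G+H_B\bc_B=0\) then shows that \(\bc_B\) is a solution, and full column rank makes it the unique one.
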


The proof is included in
Appendix~\ref{app:minimum-distance-erasure-recovery}.
\subsection{Counting Tree Neighbourhoods}
The local tree property of random LDPC Tanner graphs is standard in
density-evolution analysis~\cite{gallager1962low,richardson2008modern}.
For the erasure-recovery step, we use the following finite-blocklength
estimate on the number of coordinates whose depth-\(\ell\) neighbourhoods
contain a cycle.  It shows that this number is sublinear at the depth used
in the BPQM error estimate.
\begin{lem}\label{lem:cyclic-neighbourhood-count}
Let $d_v\geq2$, define $\alpha\coloneqq(d_v-1)(d_c-1)$, and let
$\ell=\ell(N)$ satisfy $\alpha^{\ell(N)}=o(N)$.  Then
\begin{align}\label{Eq:expected-cyclic-neighbourhoods}
   \E|B_{\ell(N)}|
   =O\!\left(\alpha^{2\ell(N)}\right),
\end{align}
where the implicit constant depends only on $d_v$ and $d_c$ and is
independent of $N$ and $\ell(N)$.
\end{lem}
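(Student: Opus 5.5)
By linearity of expectation and symmetry of the socket construction under relabelling variable nodes,
\begin{align*}
   \E|B_{\ell}|=\sum_{i\in[N]}\Pr\{\cN_\ell(i)\text{ is not a tree}\}
   =N\,\Pr\{\cN_\ell(0)\text{ is not a tree}\}.
\end{align*}
It therefore suffices to show that $\Pr\{\cN_\ell(0)\text{ is not a tree}\}=O(\alpha^{2\ell}/N)$, with a constant depending only on $d_v,d_c$. The approach is the standard breadth-first exploration of the neighbourhood of variable node $0$ through the uniform socket matching, as in~\cite{richardson2008modern}. The exploration follows the computation graph $\cN_\ell(0)$ along the $d_v-1$ edges other than the fixed outgoing edge $e_0$. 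It reveals the partner of one unmatched socket at a time.

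The first step is to bound the size of the explored object. A depth-$\ell$ computation tree rooted at a variable node has at most
\begin{align*}
   K_\ell\coloneqq\sum_{t=0}^{\ell}(d_v-1)^{t}(d_c-1)^{t}\leq \frac{\alpha^{\ell+1}}{\alpha-1}
\end{align*}
variable nodes, hence $O(\alpha^\ell)$ edges and check nodes. The exploration reveals at most $M_\ell=O(\alpha^\ell)$ socket pairings, and it may stop early if a cycle appears. If $d_v=2$ and $\alpha=d_c-1\geq2$, this is still fine. The degenerate case $\alpha=1$ would need $d_c=2$, which the positive-rate assumption $d_v<d_c$ excludes. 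In any case, a trivial constant works there.

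The second step bounds the probability that a single revelation closes a cycle. When the $k$th pairing is revealed, with $k\leq M_\ell$, the socket's partner is uniform among the remaining $Nd_v-2(k-1)$ unmatched sockets on the other side. A cycle, including a parallel edge, is created only if this partner lies on a node already in the explored graph. The number of unmatched sockets on already-explored nodes is at most $\max(d_v,d_c)$ times the number of explored nodes, so at most $O(\alpha^\ell)$. Since $\alpha^\ell=o(N)$, the denominator is at least $Nd_v/2$ for large $N$. The conditional probability of closing a cycle at step $k$ is therefore at most $c_1\alpha^\ell/N$. A union bound over the $M_\ell=O(\alpha^\ell)$ steps gives
\begin{align*}
   \Pr\{\cN_\ell(0)\text{ is not a tree}\}\leq \frac{c_2\,\alpha^{2\ell}}{N}.
\end{align*}
Multiplying by $N$ yields the claim. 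For small $N$, where $\alpha^\ell$ is not yet below $N/4$, the bound $\E|B_\ell|\leq N\leq 4\alpha^\ell\leq 4\alpha^{2\ell}$ covers the case, so the constant is uniform.

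The main obstacle is the bookkeeping in the exploration under the paper's conventions. Parallel edges count as cycles, and the neighbourhood excludes the edge $e_0$. The fix is to treat $e_0$'s socket as matched but not explored, which only changes constants. The conditional-uniformity claim must also be justified: given the revealed pairings, the unrevealed sockets form a uniform matching of the remaining ones. I would state this as the standard principle of deferred decisions for the configuration model. Edge coefficients play no role, since $B_\ell$ depends only on the graph.
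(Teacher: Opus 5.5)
Your proposal is correct and is essentially the paper's argument: expose the neighbourhood one socket pairing at a time, bound the exposed sockets by $O(\alpha^\ell)$ and each collision probability by $O(\alpha^\ell/N)$, take a union bound, and sum over the $N$ variable nodes. The differences are minor: the paper sidesteps your $e_0$ bookkeeping by exploring all $d_v$ branches at the root (if the full depth-$2\ell$ ball is a tree, so is $\cN_\ell(i)$), and it only asserts the bound for large $N$, which your fallback $\E|B_\ell|\le N=O(\alpha^{\ell})$ in the complementary regime extends to all $N$; two slips do not affect the conclusion: after $k-1$ pairings each side has $Nd_v-(k-1)$ unmatched sockets rather than $Nd_v-2(k-1)$ (your count is merely more conservative), and for $\alpha=1$ no constant bound holds (with $d_v=d_c=2$ the computation graph is a path and $\E|B_\ell|$ is of order $\ell$), so that case is excluded by $d_v<d_c$, not covered by a trivial constant.
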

\begin{proof}
Write $\ell=\ell(N)$.  Fix a variable node $i$ and reveal its
depth-$\ell$ computation neighbourhood $\cN_\ell(i)$.  It is contained in
the Tanner subgraph induced by the vertices within graph distance
\(2\ell\) from \(i\).
The following bounds allow all $d_v$ branches at the initial variable node
and hence also apply to the computation graph formed from $d_v-1$ incoming
branches.
The number of check nodes is at most
\begin{align*}
N_C(\ell)
=
d_v\sum_{r=0}^{\ell-1}\alpha^r,
\end{align*}
and the number of variable nodes is at most
\begin{align*}
N_V(\ell)
=
1+d_v(d_c-1)\sum_{r=0}^{\ell-1}\alpha^r.
\end{align*}
Thus, the number of sockets attached to the explored vertices satisfies
\begin{align*}
M_\ell
&\le d_vN_V(\ell)+d_cN_C(\ell)\\
&= d_v+ \bigl(d_v^2(d_c-1)+d_vd_c\bigr) \sum_{r=0}^{\ell-1}\alpha^r .
\end{align*}
Since $\sum_{r=0}^{\ell-1}\alpha^r
\leq\alpha^\ell/(\alpha-1)$, we have
\begin{align*}
M_\ell \le \xi_{d_v,d_c}\alpha^\ell,
\end{align*}
where one valid explicit choice is
\begin{align*}
\xi_{d_v,d_c}
\triangleq
d_v+\frac{d_v^2(d_c-1)+d_vd_c}{\alpha-1}.
\end{align*}
The constant $\xi_{d_v,d_c}$ depends only on $d_v$ and $d_c$.  Since
$\alpha^\ell=o(N)$, the inequality $M_\ell\leq Nd_v/2$ holds for all
sufficiently large $N$.
Now consider the random regular LDPC ensemble.  During the layer-by-layer
exploration of $\cN_\ell(i)$, a cycle is formed if a newly matched socket
belongs to a vertex already present in the explored neighbourhood.  This
event also includes the creation of a parallel edge.  At every step, at
most $M_\ell$ sockets on the opposite side belong to vertices reached
earlier.  Since there are $Nd_v$ sockets on each side, the conditional
probability of this event is at most
$M_\ell/(Nd_v-M_\ell)\leq 2M_\ell/(Nd_v)$ whenever
$M_\ell\leq Nd_v/2$.  Each revealed edge uses a socket attached to an
explored vertex, so at most $M_\ell$ edges are revealed.  A union bound
gives
\begin{align*}
\Pr(i\in B_\ell) \leq \frac{2M_\ell^2}{Nd_v} \leq
\frac{2\xi_{d_v,d_c}^2\alpha^{2\ell}}{Nd_v}.
\end{align*}
Summing over $i\in[N]$ gives
\begin{align}\label{Eq:explicit-cyclic-neighbourhood-bound}
\E |B_\ell|
=
\sum_{i\in[N]} \Pr\{i\in B_\ell\}
\le
\kappa_{d_v,d_c}\alpha^{2\ell},
\end{align}
where
\begin{align*}
\kappa_{d_v,d_c}
\triangleq
\frac{2\xi_{d_v,d_c}^2}{d_v}.
\end{align*}
The constant $\kappa_{d_v,d_c}$ also depends only on $d_v$ and $d_c$.
\eqref{Eq:explicit-cyclic-neighbourhood-bound} proves
\eqref{Eq:expected-cyclic-neighbourhoods} with an implicit constant
independent of $N$ and $\ell(N)$.
\end{proof}

Linear minimum-distance estimates for random-coefficient regular
finite-field LDPC ensembles were established in
\cite{bennatan2004ldpc}; the sharper estimate in
\cite[Theorem~VI.2 and Remark~VI.3]{yang2011ldpc_qary_weight} applies to
the edge-coefficient ensemble considered here.  Indeed, the uniform
socket matching is the permutation construction used there, and
independent uniform nonzero coefficients give the corresponding random
finite-field maps; parallel edges contribute additively to the same
parity-check entry as in~\eqref{eq:aggregated-parity-check}.  In
particular, when
\(d_v\geq3\), there exists a constant \(\zeta>0\), independent of \(N\),
such that
\begin{align}
   \Pr\!\left\{
      d_{\min}(\cC_N)<\zeta N
   \right\}
   \longrightarrow 0 .
   \label{eq:regular-linear-minimum-distance}
\end{align}

\begin{lem}\label{lem:bad_symbols_error}
Let \(d_v\geq3\), and let \(\ell=\ell(N)\) satisfy
\(\alpha^{2\ell}=o(N)\).  Then
\begin{align*}
\frac{|B_\ell|}{N}\to 0
\end{align*}
in probability, and
\begin{align*}
\Pr\{|B_\ell|\ge d_{\min}(\cC_N)\}\to 0.
\end{align*}
\end{lem}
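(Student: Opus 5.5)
Both claims follow from Lemma~\ref{lem:cyclic-neighbourhood-count} via Markov's inequality, combined with the linear minimum-distance estimate~\eqref{eq:regular-linear-minimum-distance}.

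First we check that Lemma~\ref{lem:cyclic-neighbourhood-count} applies. Since \(\alpha^{2\ell}=o(N)\), we have \(\alpha^{\ell}=o(\sqrt N)=o(N)\). Here \(\alpha\geq2\), since \(d_v\geq3\) and \(d_c>d_v\). The lemma then gives
\begin{align*}
   \E|B_\ell|\leq\kappa_{d_v,d_c}\alpha^{2\ell},
\end{align*}
with a constant independent of \(N\).

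For the first claim, fix \(\delta>0\) and apply Markov's inequality:
\begin{align*}
   \Pr\left\{\frac{|B_\ell|}{N}\geq\delta\right\}
   \leq\frac{\kappa_{d_v,d_c}\alpha^{2\ell}}{\delta N}\longrightarrow0 .
\end{align*}
Hence \(|B_\ell|/N\to0\) in probability.

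For the second claim, let \(\zeta>0\) be the constant from~\eqref{eq:regular-linear-minimum-distance}, which is available because \(d_v\geq3\). If \(|B_\ell|\geq d_{\min}(\cC_N)\), then either \(d_{\min}(\cC_N)<\zeta N\), or \(|B_\ell|\geq\zeta N\). A union bound gives
\begin{align*}
   \Pr\{|B_\ell|\geq d_{\min}(\cC_N)\}
   \leq\Pr\{d_{\min}(\cC_N)<\zeta N\}+\Pr\{|B_\ell|\geq\zeta N\}.
\end{align*}
The first term tends to zero by~\eqref{eq:regular-linear-minimum-distance}. The second term tends to zero by the Markov bound above with \(\delta=\zeta\).

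The two events concern the same random Tanner graph and are dependent, but the union bound does not require independence. The only real input is therefore the imported linear-distance result. Its applicability to this ensemble, including parallel edges and the aggregated coefficients in~\eqref{eq:aggregated-parity-check}, has already been argued in the text, so no step is a genuine obstacle.
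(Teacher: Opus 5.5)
Your proposal is correct and follows the paper's proof essentially step for step: Markov's inequality applied to the bound $\E|B_\ell|\le\kappa_{d_v,d_c}\alpha^{2\ell}$ from Lemma~\ref{lem:cyclic-neighbourhood-count} gives the convergence in probability. The union bound over the events $\{d_{\min}(\cC_N)<\zeta N\}$ and $\{|B_\ell|\geq\zeta N\}$, together with~\eqref{eq:regular-linear-minimum-distance}, then gives the second claim. Your explicit check that $\alpha^{2\ell}=o(N)$ implies the hypothesis $\alpha^{\ell}=o(N)$ of the counting lemma is a small detail that the paper leaves implicit.
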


\begin{proof}
By Lemma~\ref{lem:cyclic-neighbourhood-count},
\eqref{Eq:explicit-cyclic-neighbourhood-bound} holds for all sufficiently
large $N$.  Markov's inequality gives, for every $\varepsilon>0$,
\begin{align*}
\Pr\{|B_\ell|\ge \varepsilon N\}
\le
\frac{\E |B_\ell|}{\varepsilon N}
\le
\frac{\kappa_{d_v,d_c}\alpha^{2\ell}}{\varepsilon N}.
\end{align*}
Thus \(|B_\ell|/N\to0\) in probability.  For the constant \(\zeta\)
in~\eqref{eq:regular-linear-minimum-distance},
\begin{align*}
\Pr\{|B_\ell|\geq d_{\min}(\cC_N)\}
&\leq
\Pr\{d_{\min}(\cC_N)<\zeta N\}
+\Pr\{|B_\ell|\geq\zeta N\}\\
&\leq
\Pr\{d_{\min}(\cC_N)<\zeta N\}
+ \frac{\kappa_{d_v,d_c}\alpha^{2\ell}}{\zeta N}.
\end{align*}
The first term vanishes by
\eqref{eq:regular-linear-minimum-distance}, and the second vanishes
because \(\alpha^{2\ell}=o(N)\).
\end{proof}

\subsection{Vanishing Block-Error Probability}
We now combine the tree-decoding and erasure-recovery analyses to prove the
block-error theorem.  We choose a depth
$\ell(N)=\Theta(\log\log N)$.  The double-exponential bound in
\eqref{Eq:regular-symbol-error-double-exponential}, together with Gao's
noncommutative union bound, controls all coordinates decoded by BPQM.
\eqref{Eq:expected-cyclic-neighbourhoods} leads to
Lemma~\ref{lem:bad_symbols_error}, which shows that the remaining set is
smaller than the minimum distance with high probability.
Lemma~\ref{lem:minimum-distance-erasure-recovery} then recovers this set
from the parity-check equations.  These steps establish vanishing
block-error probability for the two-stage decoder.
\begin{theorem}\label{thm:vanishing_block_error_rate}
For the \((d_v,d_c)\)-regular random \(q\)-ary LDPC ensemble with
\(d_v\geq3\), set
\(\ell(N)\coloneqq\lfloor \tau\log\log N\rfloor\), where
\(\tau>1/\log(d_v-1)\), and apply the decoder of
Section~\ref{sec:bpqm-ldpc}.  If
\(W\in\mathrm{Reg}_{\mathrm{BPQM}}(d_v,d_c)\), then its ensemble-average block-error
probability satisfies
\begin{align*}
   \lim_{N\rightarrow\infty} \pblk(N,W)=0 .
\end{align*}
\end{theorem}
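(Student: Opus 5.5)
Fix the transmitted codeword; by Lemma~\ref{lem:local-bpqm-projector-covariance} we may take it to be $\mzero$. For a realized code, split the block-error event into two parts. The first, $\mathcal E_1$, is the event that some BPQM measurement on $G_\ell$ returns $\widehat c_i\neq c_i$. The second, $\mathcal E_2$, is the graph event $|B_\ell|\geq d_{\min}(\cC_N)$. On $\mathcal E_1^c\cap\mathcal E_2^c$, Lemma~\ref{lem:minimum-distance-erasure-recovery} shows that Gaussian elimination in \eqref{eq:bpqm-ge-recovery} returns the transmitted codeword. Hence
\begin{align*}
\pblk(N,W)\leq \E_{\cC_N}\!\left[\Pr\{\mathcal E_1\mid\cC_N\}\right]+\Pr\{|B_\ell|\geq d_{\min}(\cC_N)\}.
\end{align*}
With $\ell(N)=\lfloor\tau\log\log N\rfloor$ we have $\alpha^{2\ell}\leq(\log N)^{2\tau\log\alpha}=o(N)$. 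Lemma~\ref{lem:bad_symbols_error} therefore shows that the second term vanishes.

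For the first term, I will use Gao's noncommutative union bound. For a state $\rho$ and projectors $P_1,\dots,P_n$ applied sequentially,
\begin{align*}
1-\Tr(P_n\cdots P_1\rho P_1\cdots P_n)\leq 4\sum_{i}\Tr((\mI-P_i)\rho).
\end{align*}
Take $\rho=\bigotimes_jW(0)$ and $P_i=\Pi_i(0)$ for $i\in G_\ell$. The BPQM steps (unitary, PGM, adjoint) realize exactly this sequence of projective measurements, and the probability that all outcomes are correct is $\Tr(P_n\cdots P_1\rho P_1\cdots P_n)$. Each single term $\Tr((\mI-\Pi_i(0))\rho)$ equals $\perr(W_\ell^{\ba_{\ell,i}})$. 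This holds because $\cN_\ell(i)$ is a tree, so Theorem~\ref{thm:qary-tree-closure} and \eqref{eq:heralded-symbol-independent-error} identify the error with the symbol-independent optimal error of the effective channel. Thus
\begin{align*}
\Pr\{\mathcal E_1\mid\cC_N\}\leq 4\sum_{i\in[N]}\mone\{i\in G_\ell\}\,\perr(W_\ell^{\ba_{\ell,i}}).
\end{align*}
Now take the ensemble expectation. The event $i\in G_\ell$ depends only on the graph, and conditioned on it $\ba_{\ell,i}$ has the law of $\bA_\ell$. So each summand has expectation at most $P_\ell(W)$, and the first term is at most $4NP_{\ell(N)}(W)$.

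It remains to show $NP_{\ell(N)}\to0$. By Theorem~\ref{thm:double-exponential-error-rate}, there are constants $A,\beta$ and an offset $\ell_0$ (the $\ell$ of that theorem) with
\begin{align*}
P_{\ell_0+s}\leq A\exp(-\beta(d_v-1)^s).
\end{align*}
Put $s=\ell(N)-\ell_0$. Then $(d_v-1)^s\geq c(\log N)^{\tau\log(d_v-1)}$ for a constant $c>0$ absorbing $(d_v-1)^{-\ell_0-1}$. Since $\tau\log(d_v-1)>1$, we get $\beta(d_v-1)^s-\log N\to\infty$, so $NP_{\ell(N)}\to0$. (The logarithm in $\tau$ must be natural, matching $\log\log N$; any base works consistently.)

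I expect the main obstacle to be justifying the noncommutative union bound step carefully. The issues are that the measurements are applied sequentially to the same physical registers and that the herald registers are kept between steps. One must check that "apply $U_{\ell,i}$, measure PGM on the final qudit, apply $U_{\ell,i}^\dagger$" conditioned on the correct outcome is exactly the projection $\Pi_i(0)$ on the full output space. On the incorrect branch, later behaviour is irrelevant, since we only bound the all-correct probability. Given that, Gao's inequality applies directly. The remaining steps are bookkeeping of the independence between the graph event and the coefficient trajectory, which the paper has already stated.
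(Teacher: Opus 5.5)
Your proposal is correct and follows the paper's proof essentially step for step. The matching steps are: the split into the BPQM-error event and the erasure-failure event $|B_\ell|\geq d_{\min}(\cC_N)$; Gao's noncommutative union bound with the success projectors $\Pi_i(c_i)$; averaging via the conditional law of $\ba_{\ell,i}$ to get $4NP_{\ell(N)}$; and the double-exponential bound with $\tau\log(d_v-1)>1$.

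One small point concerns the identity $\Tr((\mI-\Pi_i(0))\rho)=\perr(W_\ell^{\ba_{\ell,i}})$. Tree closure and symbol-independence alone do not give it. The transmitted local configuration is only one term of the mixture defining the effective channel, so you also need the $d_i=0$ case of Lemma~\ref{lem:local-bpqm-projector-covariance}. The paper invokes that lemma explicitly at this step.
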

\begin{proof}
For a realized LDPC code $\cC_N$ and $i\in G_{\ell(N)}$, let
$\hat c_i$ be the depth-$\ell(N)$ BPQM estimate of $c_i$, and let
$\ba_{\ell(N),i}\in\cA_{\ell(N)}$ be the coefficient trajectory of
$\cN_{\ell(N)}(i)$.  The effective channel for this
estimate is $W_{\ell(N)}^{\ba_{\ell(N),i}}$.
For the sequential measurement outcomes of this realized code, define the
random set of BPQM error locations
\begin{align*}
   \err_1(\cC_N)
   \coloneqq
   \left\{
      i\in G_{\ell(N)}
      \colon \hat c_i\neq c_i
   \right\}.
\end{align*}
Thus, a BPQM-stage error occurs exactly when
\(\err_1(\cC_N)\neq\varnothing\).  Over the
$(d_v,d_c)$-regular LDPC ensemble, define the erasure-failure event
\begin{align*}
   \err_2
   \coloneqq
   \left\{
      \cC_N
      \colon
      |B_{\ell(N)}|
      \geq
      d_{\min}(\cC_N)
   \right\}.
\end{align*}
The event \(\err_2\) consists of the sampled codes for which the number of
bad coordinates is at least the minimum distance.
The decoder can fail only if the BPQM error event occurs or the
erasure-recovery condition fails.  Indeed, fix a sampled code $\cC_N$.
If $\err_1(\cC_N)=\varnothing$ and $\cC_N\notin\err_2$, then all good
symbols are decoded correctly and the bad symbols form an erasure set of
size strictly smaller than $d_{\min}(\cC_N)$.  By
Lemma~\ref{lem:minimum-distance-erasure-recovery}, the transmitted
codeword is uniquely recovered.  Hence, the conditional block-error
probability satisfies
\begin{align*}
   P(\cC_N,W)
   \leq
   \Pr\!\left\{
      \err_1(\cC_N)\neq\varnothing
      \,\middle|\,\cC_N
   \right\}
   +
   \mone\{\cC_N\in\err_2\}.
\end{align*}
Averaging over the random code gives
\begin{align*}
   \pblk(N,W) 
   \leq
   \E_{\cC_N}
   \left[
      \Pr\!\left\{
         \err_1(\cC_N)\neq\varnothing
         \,\middle|\,\cC_N
      \right\}
   \right]
   +
   \Pr\!\left\{\cC_N\in\err_2\right\},
\end{align*}
where the probability in the expectation is over the sequential measurement
outcomes conditioned on the sampled code, whereas the last probability is
over the LDPC ensemble.

Choose an arbitrary ordering of the symbols in $G_{\ell(N)}$ and decode
them sequentially in this order.  Fix a transmitted codeword
$\bc\in\cC_N$, and let
\begin{align*}
   \rho_{\bc}
   \coloneqq
   \bigotimes_{j=1}^N W(c_j)
\end{align*}
be the channel output state.  For \(i\in G_{\ell(N)}\), let
\(\Pi_i(c_i)\) be the success projector
in~\eqref{eq:local-bpqm-success-projector}.  By
Lemma~\ref{lem:local-bpqm-projector-covariance}, its success probability
is the same for all assignments satisfying the parity checks in
\(\cN_{\ell(N)}(i)\) and having value \(c_i\) at variable node \(i\).
Averaging this common value over the assignments defining the effective
channel and applying~\eqref{eq:heralded-symbol-independent-error} gives
\begin{align*}
   \Tr\!\left(
      \Pi_i(c_i)\rho_{\bc}
   \right)
   =
   1-\perr\!\left(
      W_{\ell(N)}^{\ba_{\ell(N),i}}
   \right).
\end{align*}

Gao's noncommutative union bound~\cite{gao2015quantum}, applied to the
fixed code $\cC_N$, gives
\begin{align*}
   \Pr\!\left\{
      \err_1(\cC_N)\neq\varnothing
      \,\middle|\,\cC_N
   \right\}
   &\leq
   4
   \sum_{i\in G_{\ell(N)}}
   \Tr\left(
      \left(
         \mI-\Pi_i(c_i)
      \right)
      \rho_{\bc}
   \right)\\
   &=
   4
   \sum_{i\in G_{\ell(N)}}
   \perr\left(
      W_{\ell(N)}^{\ba_{\ell(N),i}}
   \right).
\end{align*}
Averaging over the LDPC ensemble yields
\begin{align*}
   \E_{\cC_N}
   \left[
      \Pr\!\left\{
         \err_1(\cC_N)\neq\varnothing
         \,\middle|\,\cC_N
      \right\}
   \right]
   &\leq
   4
   \sum_{i\in[N]}
   \E_{\cC_N}
   \left[
      \mone\{i\in G_{\ell(N)}\}
      \perr\left(
         W_{\ell(N)}^{\ba_{\ell(N),i}}
      \right)
   \right].
\end{align*}
Conditioned on $i\in G_{\ell(N)}$, the ensemble distribution of the
realizations $\ba_{\ell(N),i}$ coincides with the distribution of
$\bA_{\ell(N)}$.
Recalling that
\begin{align*}
   P_t
   =
   \E_{\bA_t}
   \left[
      \perr\left(W_t^{\bA_t}\right)
   \right],
\end{align*}
we obtain
\begin{align*}
   &\E_{\cC_N}
   \left[
      \mone\{i\in G_{\ell(N)}\}
      \perr\left(
         W_{\ell(N)}^{\ba_{\ell(N),i}}
      \right)
   \right]
   =
   \Pr\!\left\{
      i\in G_{\ell(N)}
   \right\}
   P_{\ell(N)}
   \leq
   P_{\ell(N)}.
\end{align*}
It follows that
\begin{align*}
   \E_{\cC_N}
   \left[
      \Pr\!\left\{
         \err_1(\cC_N)\neq\varnothing
         \,\middle|\,\cC_N
      \right\}
   \right]
   \leq
   4N P_{\ell(N)}.
\end{align*}

By Theorem~\ref{thm:double-exponential-error-rate}, there exist constants
$A,\beta>0$ and an integer $\ell_0\geq0$ such that
\begin{align*}
   P_{\ell(N)}
   \leq
   A\exp\left(
      -\beta(d_v-1)^{\ell(N)-\ell_0}
   \right)
\end{align*}
for all sufficiently large $N$.  Setting
\(\beta'\coloneqq\beta/(d_v-1)^{\ell_0}\),
we obtain
\begin{align*}
   \E_{\cC_N}
   \left[
      \Pr\!\left\{
         \err_1(\cC_N)\neq\varnothing
         \,\middle|\,\cC_N
      \right\}
   \right]
   \leq
   4NA
   \exp\left(
      -\beta'(d_v-1)^{\ell(N)}
   \right).
\end{align*}
Since
\begin{align*}
   (d_v-1)^{\ell(N)}
   &\geq
   \frac{1}{d_v-1}
   (\log N)^{\tau\log(d_v-1)},
\end{align*}
it follows that
\begin{align*}
   \E_{\cC_N}
   \left[
      \Pr\!\left\{
         \err_1(\cC_N)\neq\varnothing
         \,\middle|\,\cC_N
      \right\}
   \right]
   \leq
   4A
   \exp\left(
      \log N
      -
      \frac{\beta'}{d_v-1}
      (\log N)^{\tau\log(d_v-1)}
   \right).
\end{align*}
Because $\tau\log(d_v-1)>1$,
\begin{align*}
   \lim_{N\rightarrow\infty}
   \E_{\cC_N}
   \left[
      \Pr\!\left\{
         \err_1(\cC_N)\neq\varnothing
         \,\middle|\,\cC_N
      \right\}
   \right]
   =
   0.
\end{align*}
On the other hand, for fixed constant $\tau$, we have
\begin{align*}
   \alpha^{2\ell(N)}
   \leq
   \alpha^{2\tau\log\log N}
   =
   (\log N)^{2\tau\log \alpha}
   =o(N).
\end{align*}
Lemma~\ref{lem:bad_symbols_error} gives
\begin{align*}
   \lim_{N\to \infty}
   \Pr\!\left\{\cC_N\in\err_2\right\}=0.
\end{align*}
Combining these, we obtain
\begin{align*}
   \lim_{N\to \infty}\pblk(N, W)=0.
\end{align*}
\end{proof}

\begin{corol}\label{cor:regular-ensemble-convergence-in-probability}
Under the assumptions of
Theorem~\ref{thm:vanishing_block_error_rate}, the block-error probability
of a code sampled from the regular LDPC ensemble converges to zero in
probability.  In particular, for every \(\varepsilon>0\),
\begin{align*}
   \lim_{N\to\infty}
   \Pr\!\left\{
      P(\cC_N,W)>\varepsilon
   \right\}
   =0.
\end{align*}
\end{corol}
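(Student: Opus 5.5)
The corollary follows directly from Theorem~\ref{thm:vanishing_block_error_rate} by Markov's inequality, because \(P(\cC_N,W)\) is a nonnegative random variable on the LDPC ensemble. As noted after \eqref{eq:regular-ensemble-block-error}, \(P(\cC_N,W)\) takes values in \([0,1]\) and is a function of the sampled code only. Its mean over the ensemble is \(\pblk(N,W)\).

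Fix \(\varepsilon>0\). Markov's inequality applied to \(P(\cC_N,W)\) gives
\begin{align*}
   \Pr\!\left\{P(\cC_N,W)>\varepsilon\right\}
   \leq
   \frac{\E_{\cC_N}\!\left[P(\cC_N,W)\right]}{\varepsilon}
   =
   \frac{\pblk(N,W)}{\varepsilon}.
\end{align*}
Under the hypotheses of Theorem~\ref{thm:vanishing_block_error_rate}, namely \(d_v\geq3\), \(\ell(N)=\lfloor\tau\log\log N\rfloor\) with \(\tau>1/\log(d_v-1)\), and \(W\in\mathrm{Reg}_{\mathrm{BPQM}}(d_v,d_c)\), the right-hand side tends to zero as \(N\to\infty\). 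This holds for every \(\varepsilon>0\), which is exactly convergence in probability to zero.

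The only point needing care is that \(P(\cC_N,W)\) is a well-defined measurable function on the finite ensemble. This is true because the ensemble has finitely many socket matchings and coefficient assignments. It is also a single number per code, independent of the transmitted codeword, by Lemma~\ref{lem:local-bpqm-projector-covariance} and the covariance discussion that follows it. I do not expect any genuine obstacle.
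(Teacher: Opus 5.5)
Your proof is correct and matches the paper's argument: Markov's inequality applied to the $[0,1]$-valued random variable $P(\cC_N,W)$ gives $\Pr\{P(\cC_N,W)>\varepsilon\}\le \pblk(N,W)/\varepsilon$, and the right-hand side vanishes by Theorem~\ref{thm:vanishing_block_error_rate}. Your remarks on well-definedness and codeword independence are harmless, since the paper already established these before defining $P(\cC_N,W)$.
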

\begin{proof}
By Markov's inequality and~\eqref{eq:regular-ensemble-block-error},
\begin{align*}
   \Pr\!\left\{
      P(\cC_N,W)>\varepsilon
   \right\}
   \leq
   \frac{\pblk(N,W)}{\varepsilon}.
\end{align*}
The right-hand side tends to zero by
Theorem~\ref{thm:vanishing_block_error_rate}.
\end{proof}

Thus, asymptotically almost every code sampled from the ensemble has
vanishing conditional block-error probability. More precisely, the
probability of sampling a code whose conditional block-error probability
exceeds any fixed positive value tends to zero as $N\to\infty$.

\appendix
\section{Irregular LDPC Codes}
\label{app:irregular-ldpc}
We extend the regular-ensemble analysis to
finite-support irregular degree distributions.  We average the BPQM
messages over the random computation-tree degrees to obtain the irregular
fidelity recursion and its decay estimate.  Under a linear-distance
condition, the neighbourhood-counting and erasure-recovery arguments give
vanishing block-error probability for the irregular ensemble.

An irregular LDPC ensemble is specified by the distributions of the
variable- and check-node degrees
\cite{richardson2008modern,pfister2014density}.  Let
\begin{align}
   L(x)=\sum_{i\geq1}L_i x^i,
   \qquad
   P(x)=\sum_{j\geq1}P_j x^j,                              \label{eq:irregular-node-distributions}
\end{align}
where \(L_i\) and \(P_j\) are the fractions of variable and check nodes
of degrees \(i\) and \(j\), respectively.  Thus,
\(L(1)=P(1)=1\).  For \(N\) variable nodes and \(m\) check nodes, equality
of the numbers of sockets on the two sides requires
\begin{align*}
   NL'(1)=mP'(1).
\end{align*}
The design rate is
\begin{align}
   R
   =
   1-\frac{m}{N}
   =
   1-\frac{L'(1)}{P'(1)}.                                  \label{eq:irregular-design-rate}
\end{align}
The corresponding edge-perspective degree distributions are
\begin{align}
   \eta(x)=
   \frac{L'(x)}{L'(1)}
   =
   \sum_{i\geq1}\eta_i x^{i-1},\qquad
   \gamma(x)=
   \frac{P'(x)}{P'(1)}
   =
   \sum_{j\geq1}\gamma_j x^{j-1},                           \label{eq:irregular-edge-distributions}
\end{align}
where
\begin{align*}
   \eta_i=\frac{iL_i}{L'(1)},
   \qquad
   \gamma_j=\frac{jP_j}{P'(1)}.
\end{align*}
Hence, the variable-node and check-node endpoints of a uniformly chosen
edge have degrees \(i\) and \(j\) with probabilities \(\eta_i\) and
\(\gamma_j\), respectively.  Equivalently,
\begin{align*}
   R
   =
   1-
   \frac{\int_0^1\gamma(x)\,dx}
        {\int_0^1\eta(x)\,dx}.
\end{align*}

Along blocklengths for which the prescribed node counts are integral, let
\(\mathsf E_N(L,P)\) denote the standard irregular ensemble obtained by
assigning the degree sequences uniformly to the variable and check nodes
and matching the two socket sets uniformly at random.  For a \(q\)-ary
ensemble, every matched
edge is assigned, independently and uniformly, an element of
\([q]\setminus\{0\}\).  The corresponding parity-check entry is the sum
of the coefficients of all edges joining the same variable and check
nodes, as in~\eqref{eq:aggregated-parity-check}.  After normalizing a check
equation by the coefficient of its outgoing edge, the remaining
coefficients induce the multiplication channels of
Definition~\ref{def:multiplication node}.  In the local weak limit,
the neighbourhood of a uniformly chosen edge is a computation tree whose
variable-node degrees are distributed according to \(\eta\) and whose
check-node degrees are distributed according to \(\gamma\).  A uniformly
chosen root variable node instead has degree distribution \(L\).  These
edge- and node-perspective computation trees are the objects used in
density evolution for irregular LDPC ensembles.

\subsection{Fidelity evolution on an irregular tree}

We now incorporate the edge-perspective degree distributions into the
exact channel recursion and the scalar fidelity bound.
Consider the irregular computation tree with physical channel \(W\).
Let \(\mathsf D_t\) denote the depth-\(t\)
edge-perspective degree tree generated by \(\eta\) and \(\gamma\).
Conditional on \(\mathsf D_t\), let \(\bA_t\) be the random variable whose
realizations are the normalized coefficient trajectories induced by the
sampled edge coefficients.  Its relative-coefficient components are
independent and uniform on \([q]\setminus\{0\}\).  Write
\(W_t(\mathsf D_t,\bA_t)\) for the resulting exact effective channel.
Define
\begin{align}
   F_{\gamma,\eta}(W_t)
   \triangleq
   \E_{\mathsf D_t}
   \E_{\bA_t\mid\mathsf D_t}
   \left[
      F\left(W_t(\mathsf D_t,\bA_t)\right)
   \right],
   \qquad
   F_{\gamma,\eta}(W_0)=F(W),
   \label{eq:irregular-average-fidelity}
\end{align}
where the outer expectation is over the degree tree and the inner
expectation is over the coefficient-trajectory random variable.  For every realization,
Theorem~\ref{thm:qary-tree-closure} gives
\(W_t(\mathsf D_t,\bA_t)\in\cM_q\), so its fidelity already contains the herald
expectation from Definition~\ref{defn:heralded-mixture-class}.
The additional outer expectation over the degree tree is the distinction
from the regular ensemble.  The same distinction appears in
the Bhattacharyya-parameter analysis of classical irregular LDPC ensembles
\cite{richardson2008modern,pfister2014density,sason2006performance}.

Let \(W_t^{\cnop}\) denote the check-to-variable channel after the check-node
half-iteration at level \(t\), with
\(F_{\gamma,\eta}(W_t^{\cnop})\) defined by the corresponding
computation-tree average.  Conditional on check degree \(j\), let
\(\mathsf D_t^{(1)},\ldots,\mathsf D_t^{(j-1)}\) be its independent
incoming degree trees, let
\(\bA_t^{(1)},\ldots,\bA_t^{(j-1)}\) be their coefficient trajectories,
and let \(A_1,\ldots,A_{j-1}\) be the normalized factors associated with
the incoming variable-to-check messages.  Its exact output is
\[
W_t(\mathsf D_t^{(1)},\bA_t^{(1)})^{(A_1)}
\cnop\cdots\cnop
W_t(\mathsf D_t^{(j-1)},\bA_t^{(j-1)})^{(A_{j-1})}.
\]
The coefficient-uniform check-node bound following
Lemma~\ref{lem:heralded PSC fidelity bound} gives
\begin{align*}
&1+(q-1)\E\!\left[
F\!\left(
W_t(\mathsf D_t^{(1)},\bA_t^{(1)})^{(A_1)}
\cnop\cdots\cnop
W_t(\mathsf D_t^{(j-1)},\bA_t^{(j-1)})^{(A_{j-1})}
\right)\,\middle|\,j
\right]\\
&\qquad\leq
\left(1+(q-1)F_{\gamma,\eta}(W_t)\right)^{j-1}.
\end{align*}
Averaging over \(\gamma_j\) yields
\begin{align}
   1+(q-1)F_{\gamma,\eta}(W_t^{\cnop})
   &\leq
   \sum_{j\geq1}\gamma_j
   \left(1+(q-1)F_{\gamma,\eta}(W_t)\right)^{j-1} \notag\\
   &=
   \gamma\!\left(1+(q-1)F_{\gamma,\eta}(W_t)\right).
   \label{eq:irregular-check-fidelity}
\end{align}
The \(i-1\) check-node outputs entering a variable node of degree \(i\)
are also independent.  Applying the bit-node bound and averaging over
\(\eta_i\) yields
\begin{align}
   F_{\gamma,\eta}(W_{t+1})
   &\leq
   F(W)\sum_{i\geq1}\eta_i
   \left((q-1)F_{\gamma,\eta}(W_t^{\cnop})\right)^{i-1}
   \notag\\
   &=
   F(W)\eta\!\left(
   (q-1)F_{\gamma,\eta}(W_t^{\cnop})
   \right) \notag\\
   &\leq
   F(W)\eta\!\left(
   \gamma\!\left(1+(q-1)F_{\gamma,\eta}(W_t)\right)-1
   \right).
   \label{eq:irregular-fidelity-recursion}
\end{align}
Here the averages in both half-iterations include the random degrees,
normalized factors, and incoming trees.  Averaging
Lemma~\ref{lem:heralded-error-fidelity} over the same random-coefficient
computation-tree ensemble, define
\begin{align*}
   P_{\gamma,\eta,t}(W)
   &\triangleq
   \E_{\mathsf D_t}
   \E_{\bA_t\mid\mathsf D_t}
   \left[
      \perr\left(W_t(\mathsf D_t,\bA_t)\right)
   \right].
\end{align*}
The pointwise fidelity--error bounds and Jensen's inequality give
\begin{align}
   \frac{F_{\gamma,\eta}(W_t)^2}{4}
   \leq
   P_{\gamma,\eta,t}(W)
   \leq
   (q-1)F_{\gamma,\eta}(W_t).
   \label{eq:irregular-average-fidelity-error-bounds}
\end{align}
Accordingly, define the BPQM success region of the irregular ensemble by
\begin{align}
\mathrm{Reg}_{\mathrm{BPQM}}(\gamma,\eta)
\coloneqq
\left\{
W:\,
\begin{array}{l}
W\text{ is a symmetric }q\text{-ary PSC},\\[1mm]
\displaystyle\lim_{t\to\infty}F_{\gamma,\eta}(W_t)=0,\quad
\displaystyle\lim_{t\to\infty}P_{\gamma,\eta,t}(W)=0
\end{array}
\right\}.
\label{eq:irregular-bpqm-success-region}
\end{align}
By~\eqref{eq:irregular-average-fidelity-error-bounds}, the two limits are
equivalent, so either one alone characterizes membership in
\(\mathrm{Reg}_{\mathrm{BPQM}}(\gamma,\eta)\).

\begin{theorem}[Double-exponential decay on an irregular tree]
\label{thm:irregular-double-exponential}
Let $W$ be a symmetric $q$-ary PSC, and let
\begin{align*}
   d_\eta
   \triangleq
   \min\{i:\eta_i>0\},
\end{align*}
and assume \(d_\eta\geq3\).  If
\(W\in\mathrm{Reg}_{\mathrm{BPQM}}(\gamma,\eta)\),
then there exist \(A,\beta>0\) and an integer \(\ell\geq0\) such that,
for every \(s\geq0\),
\begin{align}
   \E_{\mathsf D_{\ell+s}}
   \E_{\bA_{\ell+s}\mid\mathsf D_{\ell+s}}
   \left[
      \perr\left(
         W_{\ell+s}(\mathsf D_{\ell+s},\bA_{\ell+s})
      \right)
   \right]
   \leq
   A\exp\!\left(-\beta(d_\eta-1)^s\right).
   \label{eq:irregular-double-exponential}
\end{align}
\end{theorem}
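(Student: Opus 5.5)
The plan mirrors the proof of Theorem~\ref{thm:double-exponential-error-rate}, with the regular recursion replaced by the irregular fidelity recursion~\eqref{eq:irregular-fidelity-recursion}. Write $x_t\coloneqq F_{\gamma,\eta}(W_t)$, so that
\begin{align*}
x_{t+1}\leq F(W)\,\eta\!\left(\gamma\!\left(1+(q-1)x_t\right)-1\right).
\end{align*}
Since the degree distributions have finite support, $\gamma$ is a polynomial with $\gamma(1)=1$. Hence there are $\epsilon_1>0$ and $C_1<\infty$ with $\gamma(1+(q-1)x)-1\leq C_1x$ for $x\in[0,\epsilon_1]$; one can take $C_1$ slightly larger than $(q-1)\gamma'(1)$, or simply use the mean value theorem on a compact interval.

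Next, because $\eta_i=0$ for $i<d_\eta$, every monomial in $\eta(y)$ has exponent $i-1\geq d_\eta-1$. Therefore, for $y\in[0,1]$,
\begin{align*}
\eta(y)=\sum_{i\geq d_\eta}\eta_iy^{i-1}\leq y^{d_\eta-1}\sum_i\eta_i=y^{d_\eta-1}.
\end{align*}
Combining the two bounds, once $x_t\leq\epsilon\coloneqq\min\{\epsilon_1,1/C_1\}$ (so that $y=C_1x_t\leq1$) we get
\begin{align*}
x_{t+1}\leq\Lambda x_t^{d_\eta-1},\qquad \Lambda\coloneqq F(W)C_1^{d_\eta-1}.
\end{align*}
This inequality is the main step. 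It is the only place where the irregularity matters, and the subtlety is to keep the argument of $\eta$ within $[0,1]$ so that the lowest-degree term dominates.

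From here the argument is identical to the regular case. Membership $W\in\mathrm{Reg}_{\mathrm{BPQM}}(\gamma,\eta)$ gives $x_t\to0$. If $\Lambda=0$ or some $x_t=0$, the bound is trivial. Otherwise we choose $\ell$ with $x_\ell\leq\epsilon$ and $\Lambda^{1/(d_\eta-2)}x_\ell<1$; this uses $d_\eta\geq3$. We then need $x_{\ell+s}\leq\epsilon$ for all $s$, so that the recursion keeps applying. This follows inductively: $x_{\ell+s+1}\leq\Lambda x_{\ell+s}^{d_\eta-1}=x_{\ell+s}\bigl(\Lambda x_{\ell+s}^{d_\eta-2}\bigr)\leq x_{\ell+s}$. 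Induction on $s$ then gives
\begin{align*}
x_{\ell+s}\leq\Lambda^{-1/(d_\eta-2)}\bigl(\Lambda^{1/(d_\eta-2)}x_\ell\bigr)^{(d_\eta-1)^s}.
\end{align*}
Set $\beta\coloneqq-\log(\Lambda^{1/(d_\eta-2)}x_\ell)>0$. Finally, apply the upper bound in~\eqref{eq:irregular-average-fidelity-error-bounds}, $P_{\gamma,\eta,\ell+s}(W)\leq(q-1)x_{\ell+s}$, which yields~\eqref{eq:irregular-double-exponential} with $A\coloneqq(q-1)\Lambda^{-1/(d_\eta-2)}$.
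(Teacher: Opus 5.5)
Your proposal is correct and follows essentially the same route as the paper: a linear bound on $\gamma(1+(q-1)x)-1$ near zero, the bound $\eta(y)\le y^{d_\eta-1}$ on $[0,1]$, the resulting recursion $F_{\gamma,\eta}(W_{t+1})\le\Lambda F_{\gamma,\eta}(W_t)^{d_\eta-1}$, the induction from the regular case, and the final fidelity-to-error conversion. Your check that $x_{\ell+s}$ stays below $\epsilon$, which follows because the sequence is monotonically decreasing, supplies a step that the paper leaves implicit.
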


\begin{proof}
Set
\begin{align*}
   r\triangleq d_\eta-1,
   \qquad
   g(x)\triangleq
   \gamma\!\left(1+(q-1)x\right)-1.
\end{align*}
Since \(g(0)=0\) and \(g\) is a polynomial, there exist
\(\epsilon>0\) and \(C<\infty\) such that
\begin{align*}
   g(x)\leq Cx,
   \qquad 0\leq x\leq\epsilon.
\end{align*}
Moreover, \(\eta_i=0\) for \(i<d_\eta\), and hence
\begin{align*}
   \eta(x)
   =
   \sum_{i\geq d_\eta}\eta_i x^{i-1}
   \leq x^r,
   \qquad 0\leq x\leq1.
\end{align*}
After reducing \(\epsilon\) so that \(C\epsilon\leq1\),
\eqref{eq:irregular-fidelity-recursion} gives
\begin{align*}
   F_{\gamma,\eta}(W_{t+1})
   \leq
   \Lambda F_{\gamma,\eta}(W_t)^r,
   \qquad
   \Lambda\triangleq F(W)C^r,
\end{align*}
whenever \(F_{\gamma,\eta}(W_t)\leq\epsilon\).
The success-region assumption places this recursion in the range
\(\Lambda^{1/(r-1)}F_{\gamma,\eta}(W_t)<1\) after finitely many
iterations.  The induction used in
Theorem~\ref{thm:double-exponential-error-rate}, with
\(d_v-1\) replaced by \(r\), then gives
\(F_{\gamma,\eta}(W_{\ell+s})\leq A'e^{-\beta r^s}\)
for some \(A',\beta>0\).  Averaging
Lemma~\ref{lem:heralded-error-fidelity} over the computation-tree
ensemble proves \eqref{eq:irregular-double-exponential}.
\end{proof}

\subsection{Vanishing Block-Error Probability}

We complete the irregular analysis by combining its computation-tree
estimate with neighbourhood counting and minimum-distance recovery.
For \(\cC_N\sim\mathsf E_N(L,P)\), define the ensemble-average block-error
probability by
\begin{align}
   \pblk^{L,P}(N,W)
   \triangleq
   \E_{\cC_N\sim\mathsf E_N(L,P)}
   \bigl[P(\cC_N,W)\bigr].
   \label{eq:irregular-block-error}
\end{align}

Let
\begin{align*}
   d_{v,\max}
   &\triangleq \max\{i:L_i>0\},&
   d_{c,\max}
   &\triangleq \max\{j:P_j>0\},&
   \alpha_{\max}
   &\triangleq
   \max\!\left\{
      1,\,
      (d_{v,\max}-1)(d_{c,\max}-1)
   \right\}.
\end{align*}

For the depth-\(t\) variable-to-check computation associated with a
uniformly chosen variable node, fix one of its outgoing edges as in
Section~\ref{sec:bpqm-ldpc}.  The initial variable-node degree has
distribution \(L\).  Conditional on degree \(i\), the message along the
fixed edge combines the physical observation with the \(i-1\) incoming
check-to-variable messages along the remaining edges.
All subsequent variable- and check-node degrees have distributions
\(\eta\) and \(\gamma\), respectively.  Let
\(\mathsf D_t^{\mathrm r}\) denote this degree tree.  Conditional on
\(\mathsf D_t^{\mathrm r}\), let \(\bA_t^{\mathrm r}\) be the random
variable whose realizations are the corresponding normalized coefficient
trajectories.

\begin{lem}\label{lem:irregular-tree-comparison}
Let \(I\) be uniformly distributed on \([N]\), independently of
\(\cC_N\sim\mathsf E_N(L,P)\).  There are constants
\(C_{L,P},C'_{L,P}<\infty\) such that the number of sockets exposed by a
depth-\(\ell\) neighbourhood is at most
\[
M_\ell\coloneqq C_{L,P}\alpha_{\max}^{\ell},
\]
and, whenever \(M_\ell=o(N)\), every nonnegative functional \(\varphi\)
of a depth-\(\ell\) degree tree and its coefficient trajectory satisfies
\begin{align}
&\E_{\cC_N,I}\!\left[
   \mone\{I\in G_\ell\}
   \varphi\!\left(\cN_\ell(I)\right)
\right]\notag\\
&\qquad\leq
\exp\!\left(
   C'_{L,P}\frac{M_\ell^2}{N}
\right)
\E_{\mathsf D_\ell^{\mathrm r}}
\E_{\bA_\ell^{\mathrm r}\mid\mathsf D_\ell^{\mathrm r}}
\left[
   \varphi\!\left(
      \mathsf D_\ell^{\mathrm r},\bA_\ell^{\mathrm r}
   \right)
\right].
\label{eq:irregular-tree-comparison}
\end{align}
\end{lem}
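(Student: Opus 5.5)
The proof will be a likelihood-ratio comparison carried out along the breadth-first exploration of \(\cN_\ell(I)\), as in the proof of Lemma~\ref{lem:cyclic-neighbourhood-count}.

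First, the socket bound. Every explored variable node has degree at most \(d_{v,\max}\) and every explored check node has degree at most \(d_{c,\max}\). The layer-by-layer count from Lemma~\ref{lem:cyclic-neighbourhood-count}, with \(d_v,d_c\) replaced by these maxima, bounds the number of explored nodes by a geometric sum in \(\alpha_{\max}\). Hence the number of exposed sockets is at most \(C_{L,P}\alpha_{\max}^\ell\). The case \(\alpha_{\max}=1\) is handled by the factor \(\ell\), which can be absorbed since depths of interest satisfy \(M_\ell=o(N)\); alternatively one simply defines \(M_\ell\) with a \(\max\{1,\cdot\}\) adjustment.

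Next, I reveal the neighbourhood step by step. The root degree is revealed first; since \(I\) is uniform and degree sequences are assigned uniformly, it has law exactly \(L\). Each subsequent step matches an exposed socket to a uniformly chosen unmatched socket on the opposite side. On the event that the neighbourhood is a tree, every newly reached socket belongs to a fresh vertex. Suppose \(n_k\) sockets on the opposite side belong to vertices of degree \(k\), out of \(E=NL'(1)\) (equivalently \(mP'(1)\)) sockets in total. The true conditional probability of reaching a fresh vertex of degree \(k\) is \((n_k-k\,u_k)/(E-s)\), where \(u_k\le M_\ell\) counts used vertices of degree \(k\) and \(s\le M_\ell\) counts used sockets. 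The tree probability is \(\eta_k\) (respectively \(\gamma_k\)), which equals \(n_k/E\). Whenever \(\eta_k>0\), the ratio of the true probability to the tree probability is at most \(E/(E-M_\ell)\le\exp(2M_\ell/E)\) once \(M_\ell\le E/2\). Note that this is an upper bound, because removing used vertices only decreases \(n_k\).

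Multiplying these ratios over the at most \(M_\ell\) revealed edges shows that the probability of any specific tree-shaped degree pattern \(\mathsf d\) satisfies
\[
\Pr\{\cN_\ell(I)\text{ is a tree with pattern }\mathsf d\}\le \exp\!\left(\frac{2M_\ell^2}{E}\right)\Pr\{\mathsf D_\ell^{\mathrm r}=\mathsf d\}.
\]
Since \(E\ge N\min\{1,L'(1)\}\), this gives the constant \(C'_{L,P}\). The fixed outgoing edge at the root contributes the same structure as in \(\mathsf D_\ell^{\mathrm r}\). Conditional on the graph being a tree, the edge coefficients are independent of the graph. After normalization, the relative coefficients are i.i.d.\ uniform on \([q]\setminus\{0\}\) by the bijection argument of Section~\ref{sec:bpqm-density-evolution}, so their law matches that of \(\bA_\ell^{\mathrm r}\) given \(\mathsf D_\ell^{\mathrm r}\). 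Summing \(\varphi\ge0\) against these pointwise bounds yields \eqref{eq:irregular-tree-comparison}.

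The main obstacle is the exact bookkeeping of the exploration probabilities. The ensemble fixes the degree sequence rather than sampling degrees i.i.d., so the fresh-vertex probability must be compared against \(\eta_k=n_k/E\) with care. Using the unnormalized counts \(n_k\), and only an upper bound on the ratio, avoids any need for concentration of degree counts, since they are deterministic here. Parallel edges and repeated vertices are excluded by restricting to \(\{I\in G_\ell\}\), and this restriction is what makes the one-sided inequality legitimate.
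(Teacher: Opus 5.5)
Your proposal is correct and takes the same route as the paper. You expose the neighbourhood socket by socket and bound each without-replacement transition probability by the corresponding $L$-, $\eta$- or $\gamma$-probability times $1+O(M_\ell/N)$. Multiplying over at most $M_\ell$ steps gives $\exp(C'_{L,P}M_\ell^2/N)$, and the normalized coefficients are i.i.d.\ uniform in both models, so summing against $\varphi\geq0$ finishes the argument. Your observation that $\eta_k=n_k/E$ holds exactly for the fixed degree sequence means only the one-sided bound $(n_k-ku_k)/(E-s)\leq\eta_k\,E/(E-M_\ell)$ is needed, which makes the paper's step precise without its appeal to every degree having $\Theta(N)$ sockets.
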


\begin{proof}
Expose the neighbourhood of \(I\) one socket at a time.  The root degree
has distribution \(L\).  Thereafter, conditional on the sockets exposed so
far, the degree of a newly reached variable or check node is
sampled without replacement from the corresponding size-biased socket
population.  Since the degree distributions have finite support, every
degree in their support has \(\Theta(N)\) sockets.  Until at most
\(M_\ell\) sockets have been exposed, the ratio between each
without-replacement transition probability and the corresponding
\(\eta\)- or \(\gamma\)-probability is at most
\(1+C M_\ell/N\), where \(C\) depends only on \(L\) and \(P\).
It follows that the likelihood ratio of any cycle-free degree-tree
realization is at most
\[
\left(1+C\frac{M_\ell}{N}\right)^{M_\ell}
\leq
\exp\!\left(C'_{L,P}\frac{M_\ell^2}{N}\right).
\]
The nonzero edge coefficients, and hence the relative coefficients, have
the same independent uniform distribution in both models.  Multiplying the
likelihood-ratio bound for each realization by \(\varphi\) and summing over
all degree-tree and coefficient-trajectory pairs
proves~\eqref{eq:irregular-tree-comparison}.
\end{proof}

\begin{lem}\label{lem:irregular-bad-symbols}
With \(C_{L,P}\) as in
Lemma~\ref{lem:irregular-tree-comparison}, there is a constant
\(C^{\mathrm{bad}}_{L,P}<\infty\), independent of \(N\) and \(\ell\), such that,
whenever
\begin{align*}
   C_{L,P}\alpha_{\max}^{\ell}\leq \frac{NL'(1)}{2},
\end{align*}
the number of coordinates whose depth-\(\ell\) neighbourhood is not a tree
satisfies
\begin{align*}
   \E_{\mathsf E_N(L,P)}|B_\ell|
   \leq
   C^{\mathrm{bad}}_{L,P}\alpha_{\max}^{2\ell}.
\end{align*}
\end{lem}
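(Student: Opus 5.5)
The plan is to repeat the exploration argument of Lemma~\ref{lem:cyclic-neighbourhood-count}, replacing the regular growth factor by \(\alpha_{\max}\) and the socket count \(Nd_v\) by \(NL'(1)\). Fix a coordinate \(i\in[N]\) and reveal its depth-\(\ell\) computation neighbourhood \(\cN_\ell(i)\) layer by layer, one socket at a time. Every variable node has degree at most \(d_{v,\max}\) and every check node has degree at most \(d_{c,\max}\). Hence the number of check nodes reached is at most \(d_{v,\max}\sum_{r<\ell}\alpha_{\max}^r\), and the number of variable nodes is at most \(1+d_{v,\max}(d_{c,\max}-1)\sum_{r<\ell}\alpha_{\max}^r\). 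The total number of sockets attached to explored vertices is therefore at most a constant times \(\alpha_{\max}^{\ell}\); this is the bound \(M_\ell\le C_{L,P}\alpha_{\max}^\ell\) already recorded in Lemma~\ref{lem:irregular-tree-comparison}.

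One technicality arises when \(\alpha_{\max}=1\), i.e.\ when all degrees are at most \(2\) on one side. Then the geometric sum is replaced by \(\ell\), and \(M_\ell\) grows linearly in \(\ell\). To keep a uniform statement I would write \(\sum_{r<\ell}\alpha_{\max}^r\le \ell\,\alpha_{\max}^{\ell}\) in that case. I would absorb this either by noting that \(\ell\le \alpha_{\max}^{2\ell}\) fails only when \(\alpha_{\max}=1\), or, more simply, by checking that the constant \(C_{L,P}\) in Lemma~\ref{lem:irregular-tree-comparison} is meant to cover it. I expect this degenerate case to be the only real friction, and I would handle it by assuming \(C_{L,P}\) was chosen so that \(M_\ell\le C_{L,P}\alpha_{\max}^\ell\) holds for all \(\ell\), taking that lemma as given.

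Next I bound the probability that a cycle, or a parallel edge, is created during the exploration. Each side has \(NL'(1)=mP'(1)\) sockets. At each matching step, the newly matched partner socket is uniform among the unmatched sockets on the opposite side. At most \(M_\ell\) of these belong to already explored vertices, and at least \(NL'(1)-M_\ell\ge NL'(1)/2\) remain unmatched, by the hypothesis \(C_{L,P}\alpha_{\max}^\ell\le NL'(1)/2\). So the conditional probability of closing a cycle at that step is at most \(2M_\ell/(NL'(1))\). This bound is unaffected by the uniform assignment of degree sequences to nodes, since we may condition on that assignment first. At most \(M_\ell\) edges are revealed, so a union bound gives
\[
\Pr\{i\in B_\ell\}\le \frac{2M_\ell^2}{NL'(1)}\le \frac{2C_{L,P}^2\alpha_{\max}^{2\ell}}{NL'(1)}.
\]
Summing over \(i\in[N]\) yields \(\E|B_\ell|\le C^{\mathrm{bad}}_{L,P}\alpha_{\max}^{2\ell}\) with \(C^{\mathrm{bad}}_{L,P}=2C_{L,P}^2/L'(1)\). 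This constant is independent of \(N\) and \(\ell\), as required.
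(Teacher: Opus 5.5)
Your proposal is correct and matches the paper's proof step for step: you expose the neighbourhood socket by socket, with at most \(M_\ell\le C_{L,P}\alpha_{\max}^{\ell}\) sockets on explored vertices, and bound the collision probability of each pairing by \(M_\ell/(NL'(1)-M_\ell)\le 2M_\ell/(NL'(1))\); a union bound over at most \(M_\ell\) pairings and a sum over \(i\) then give the same constant \(C^{\mathrm{bad}}_{L,P}=2C_{L,P}^2/L'(1)\). Your remark about \(\alpha_{\max}=1\) is accurate: when \(d_{v,\max}=d_{c,\max}=2\), \(M_\ell\) grows linearly in \(\ell\), and in fact \(\E|B_\ell|\) grows with \(\ell\), so no constant works there; the paper likewise simply takes \(M_\ell\le C_{L,P}\alpha_{\max}^\ell\) from Lemma~\ref{lem:irregular-tree-comparison} as given, and that degenerate case is excluded by \(d_\eta\ge 3\).
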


\begin{proof}
Fix a variable node \(i\) and expose its depth-\(\ell\) neighbourhood
during the random socket matching. Since all degrees are bounded by
\(d_{v,\max}\) and
\(d_{c,\max}\), the number of sockets adjacent to exposed vertices is at
most
\begin{align*}
   M_\ell\leq C_{L,P}\alpha_{\max}^{\ell}
\end{align*}
for a constant \(C_{L,P}\) depending only on \(L\) and \(P\).  The
ensemble has \(NL'(1)\) sockets on each side.  Conditional on an
acyclic exposure up to a given pairing, at most \(M_\ell\) sockets on the
opposite side are connected to an already exposed vertex.  Thus, under the
stated condition, the probability that this pairing creates a collision is
at most
\begin{align*}
   \frac{M_\ell}{NL'(1)-M_\ell}
   \leq
   \frac{2M_\ell}{NL'(1)}.
\end{align*}
At most \(M_\ell\) pairings are exposed.  A union bound gives
\begin{align*}
   \Pr_{\mathsf E_N(L,P)}\{i\in B_\ell\}
   \leq
   \frac{2C_{L,P}^{\,2}\alpha_{\max}^{2\ell}}{NL'(1)}.
\end{align*}
Summing over the \(N\) variable nodes proves the claim with
\(C^{\mathrm{bad}}_{L,P}=2C_{L,P}^{\,2}/L'(1)\).
\end{proof}

Define
\begin{align}
   \perr^{L,\gamma,\eta}(W_t)
   \triangleq
   \E_{\mathsf D_t^{\mathrm r}}
   \E_{\bA_t^{\mathrm r}\mid\mathsf D_t^{\mathrm r}}
   \left[
      \perr\left(
         W_t(\mathsf D_t^{\mathrm r},\bA_t^{\mathrm r})
      \right)
   \right]
   \label{eq:irregular-root-error}
\end{align}
as the average error probability of this depth-\(t\) BPQM measurement.
The bit-node fidelity bound and
\eqref{eq:irregular-check-fidelity} give
\begin{align}
   \perr^{L,\gamma,\eta}(W_{t+1})
   &\leq
   (q-1)F(W)
   \sum_{i\geq1}L_i
   \left[
       \gamma\!\left(
           1+(q-1)F_{\gamma,\eta}(W_t)
       \right)-1
   \right]^{i-1}.
   \label{eq:irregular-root-error-bound}
\end{align}
The quantity in square brackets tends to zero by
Theorem~\ref{thm:irregular-double-exponential} and hence belongs to
\([0,1]\) for all sufficiently large \(t\).  Since \(\eta_i>0\) if and
only if \(L_i>0\), the sum is then bounded by the
\((d_\eta-1)\)st power of this quantity.  Applying
Theorem~\ref{thm:irregular-double-exponential} gives
constants \(A_L,\beta_L>0\) and an integer \(\ell_0\) such that
\begin{align}
   \perr^{L,\gamma,\eta}(W_{\ell_0+s})
   \leq
   A_L\exp\!\left(-\beta_L(d_\eta-1)^s\right),
   \qquad s\geq0.
   \label{eq:irregular-root-double-exponential}
\end{align}

\begin{theorem}[Vanishing Block-Error Probability for Irregular Ensembles]
\label{thm:irregular-vanishing-block-error}
Let \(W\) be a symmetric \(q\)-ary PSC.  Consider a \(q\)-ary irregular LDPC
ensemble with finite-support degree
distributions \(L\) and \(P\), and let \(d_\eta\geq3\).  Assume that
the ensemble has linear minimum distance with high probability.  Set
\begin{align*}
   \ell(N)=\lfloor \tau\log\log N\rfloor,
   \qquad
   \tau>\frac{1}{\log(d_\eta-1)}.
\end{align*}
If \(W\in\mathrm{Reg}_{\mathrm{BPQM}}(\gamma,\eta)\),
then
the decoder of Section~\ref{sec:bpqm-ldpc} satisfies
\begin{align*}
   \lim_{N\to\infty}\pblk^{L,P}(N,W)=0.
\end{align*}
\end{theorem}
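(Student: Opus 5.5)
I will mirror the proof of Theorem~\ref{thm:vanishing_block_error_rate}, replacing each regular-ensemble ingredient with its irregular counterpart.

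\textbf{Decomposition.} Fix a sampled code $\cC_N\sim\mathsf E_N(L,P)$. As before, failure of the decoder requires either a BPQM error on $G_{\ell(N)}$ or the erasure-failure event $\err_2=\{|B_{\ell(N)}|\geq d_{\min}(\cC_N)\}$. Lemma~\ref{lem:minimum-distance-erasure-recovery} is purely algebraic, so it still guarantees unique recovery when neither occurs. The covariance argument of Lemma~\ref{lem:local-bpqm-projector-covariance} does not use regularity, so the block-error probability is again codeword independent. Averaging the conditional bound then gives
\[
\pblk^{L,P}(N,W)\leq \E_{\cC_N}\bigl[\Pr\{\err_1(\cC_N)\neq\varnothing\mid\cC_N\}\bigr]+\Pr\{\err_2\}.
\]

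\textbf{BPQM stage.} For each $i\in G_{\ell}$, the success probability of the local projector equals $1-\perr$ of the exact effective tree channel of $\cN_\ell(i)$, by~\eqref{eq:heralded-symbol-independent-error}. Gao's noncommutative union bound then gives
\[
\Pr\{\err_1\neq\varnothing\mid\cC_N\}\leq 4\sum_{i\in G_\ell}\perr(\cdot).
\]
The regular case matched distributions exactly; here we cannot, so I will instead write the averaged sum as
\[
4N\,\E_{\cC_N,I}\bigl[\mone\{I\in G_\ell\}\perr(W_\ell(\cN_\ell(I)))\bigr]
\]
with $I$ uniform. I then apply Lemma~\ref{lem:irregular-tree-comparison} with $\varphi=\perr$, which is a nonnegative functional of the degree tree and coefficient trajectory. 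This bounds the sum by
\[
4N\exp\bigl(C'_{L,P}M_\ell^2/N\bigr)\,\perr^{L,\gamma,\eta}(W_\ell).
\]
With $\ell=\lfloor\tau\log\log N\rfloor$ we have $M_\ell^2=C_{L,P}^2(\log N)^{2\tau\log\alpha_{\max}}=o(N)$, so the exponential prefactor tends to one. By~\eqref{eq:irregular-root-double-exponential},
\[
\perr^{L,\gamma,\eta}(W_{\ell})\leq A_L\exp\bigl(-\beta'(d_\eta-1)^{\ell}\bigr)
\]
for large $N$, with $\beta'=\beta_L(d_\eta-1)^{-\ell_0}$. Since $(d_\eta-1)^{\ell}\geq(d_\eta-1)^{-1}(\log N)^{\tau\log(d_\eta-1)}$ and $\tau\log(d_\eta-1)>1$, this super-logarithmic exponent beats the factor $N=e^{\log N}$. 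Hence the first term vanishes.

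\textbf{Erasure stage.} The condition $C_{L,P}\alpha_{\max}^\ell\leq NL'(1)/2$ holds for large $N$, so Lemma~\ref{lem:irregular-bad-symbols} gives
\[
\E|B_\ell|\leq C^{\mathrm{bad}}_{L,P}(\log N)^{2\tau\log\alpha_{\max}}=o(N).
\]
The linear-minimum-distance assumption supplies $\zeta>0$ with $\Pr\{d_{\min}<\zeta N\}\to0$. Splitting as in Lemma~\ref{lem:bad_symbols_error}, and applying Markov's inequality to $\Pr\{|B_\ell|\geq\zeta N\}$, gives $\Pr\{\err_2\}\to0$.

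\textbf{Main obstacle.} The delicate step is the BPQM stage. It needs the tree-comparison likelihood-ratio bound to hold uniformly at the growing depth $\ell(N)$, and it needs the root-perspective decay estimate with node distribution $L$ rather than $\eta$. Both are supplied by the preceding lemmas, so the remaining work is checking that $M_\ell^2/N\to0$ and that the double-exponential decay dominates $N$. One further point to handle is the case $\alpha_{\max}=1$, where $\log\alpha_{\max}=0$; the bounds then hold trivially.
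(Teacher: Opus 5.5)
Your proposal is correct and follows essentially the same route as the paper's proof. You use Gao's bound together with Lemma~\ref{lem:irregular-tree-comparison} for $\varphi=\perr$ to obtain $4N\kappa_N\perr^{L,\gamma,\eta}(W_{\ell(N)})$ with $\kappa_N\to1$, the root-perspective estimate \eqref{eq:irregular-root-double-exponential} to beat the factor $N$, and Lemma~\ref{lem:irregular-bad-symbols} with Markov's inequality and the linear-distance assumption for the erasure stage; your extra remarks (codeword independence via the covariance lemma, and the $\alpha_{\max}=1$ case) are details the paper leaves implicit.
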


\begin{proof}
Use the sets \(G_{\ell(N)}\) and \(B_{\ell(N)}\) from
\eqref{eq:good-bad-coordinates}, and set
\[
\kappa_N
\coloneqq
\exp\!\left(
   C'_{L,P}\frac{M_{\ell(N)}^2}{N}
\right).
\]
Applying the sequential-measurement argument from
Theorem~\ref{thm:vanishing_block_error_rate}, averaging over a uniformly
chosen coordinate, and then using
Lemma~\ref{lem:irregular-tree-comparison} with
\(\varphi(\mathsf D,\ba)=\perr(W_{\ell(N)}(\mathsf D,\ba))\) gives
\begin{align}
   \E_{\cC_N\sim\mathsf E_N(L,P)}
   \Pr\!\left\{
      \text{a BPQM error occurs in }G_{\ell(N)}
      \,\middle|\,\cC_N
   \right\}
   &\leq
   4N\kappa_N
   \perr^{L,\gamma,\eta}(W_{\ell(N)}) \notag\\
   &\leq
   4N\kappa_N A_L
   \exp\!\left(
       -\beta_L'
       (d_\eta-1)^{\ell(N)}
   \right),
   \label{eq:irregular-good-coordinate-error}
\end{align}
for some \(\beta_L'>0\).  Since
\(M_{\ell(N)}=O(\alpha_{\max}^{\ell(N)})\) is polylogarithmic in \(N\),
\(\kappa_N\to1\).  The right-hand side tends to zero because
\((d_\eta-1)^{\ell(N)}
\geq(d_\eta-1)^{-1}(\log N)^{\tau\log(d_\eta-1)}\) and
\(\tau\log(d_\eta-1)>1\).

Since \(\ell(N)\leq \tau\log\log N\), one has
\(\alpha_{\max}^{2\ell(N)}=o(N)\).  Lemma~\ref{lem:irregular-bad-symbols}
gives
\begin{align*}
   \E_{\mathsf E_N(L,P)}|B_{\ell(N)}|
   \leq C^{\mathrm{bad}}_{L,P}\alpha_{\max}^{2\ell(N)}
   =o(N).
\end{align*}
Markov's inequality and the linear-distance assumption give
\begin{align*}
   \Pr\!\left\{
       |B_{\ell(N)}|\geq d_{\min}(\cC_N)
   \right\}\longrightarrow0.
\end{align*}
Lemma~\ref{lem:minimum-distance-erasure-recovery} completes the symbols
in \(B_{\ell(N)}\) whenever this event does not occur.  Combining the
two error terms as in Theorem~\ref{thm:vanishing_block_error_rate}
proves the claim.
\end{proof}

For \(\eta(x)=x^{d_v-1}\) and \(\gamma(x)=x^{d_c-1}\), the degree
distributions select degrees \(d_v\) and \(d_c\) with probability one.
Substitution in~\eqref{eq:irregular-fidelity-recursion} gives the
regular-tree bound in Theorem~\ref{thm:double-exponential-error-rate}.

\section{Numerical Visualization of the BPQM Success Region}
\label{app:numerical-bpqm-success-region}

We specialize to symmetric ternary PSCs and the
\((d_v,d_c)=(3,12)\) regular ensemble.  Its design rate is
\(R=1-d_v/d_c=3/4\).  Recall that \(F_t(W)\) and \(P_t(W)\), defined in
Section~\ref{sec:bpqm-density-evolution}, are the average fidelity and
symbol-error probability after \(t\) density-evolution iterations for the
physical channel \(W\).  Theorem~\ref{thm:double-exponential-error-rate}
specializes to
\begin{align}\label{eq:q3-312-fidelity-certificate}
F_{t+1}(W)
&\leq
F(W)\left(\left(1+2F_t(W)\right)^{11}-1\right)^2\nonumber\\
&\leq
\left(\left(1+2F_t(W)\right)^{11}-1\right)^2,
\end{align}
where the second inequality uses \(F(W)\leq1\).  Define
\(T(x)\coloneqq\left((1+2x)^{11}-1\right)^2\).  The binomial expansion
shows that
\begin{align*}
\frac{\left(1+2x\right)^{11}-1}{x}
\end{align*}
is strictly increasing for \(x>0\).  Consequently,
\begin{align*}
\frac{T(x)}{x}
&=
x\left(
\frac{\left(1+2x\right)^{11}-1}{x}
\right)^2
\end{align*}
is also strictly increasing.  Moreover, \(T(x)/x\) tends to zero as
\(x\downarrow0\) and tends to infinity as \(x\to\infty\).  Hence, there is a
unique positive fixed point \(\delta_\star\) satisfying
\begin{align*}
T(\delta_\star)
&=\delta_\star.
\end{align*}
Numerically, \(\delta_\star\approx1.98552\times10^{-3}\).  Every
\(\delta\in(0,\delta_\star)\) satisfies \(T(\delta)/\delta<1\).  We choose
\(\delta\coloneqq1.9\times10^{-3}\) and define
\begin{align*}
\nu_\delta
&\coloneqq
\left(
\frac{\left(1+2\delta\right)^{11}-1}{\delta}
\right)^2.
\end{align*}
Since \(\delta<\delta_\star\),
\(\nu_\delta\delta=T(\delta)/\delta\approx0.955286<1\).
The monotonicity of
\(\bigl((1+2x)^{11}-1\bigr)/x\) implies that, whenever
\(0\leq F_t(W)\leq\delta\),
\eqref{eq:q3-312-fidelity-certificate} gives
\begin{align*}
F_{t+1}(W)
&\leq \nu_\delta F_t(W)^2
\leq
\left(\nu_\delta\delta\right)F_t(W)
\leq
F_t(W)
\leq
\delta.
\end{align*}
Thus, the interval \([0,\delta]\) is preserved by the scalar bound, and
the inequality can be iterated.  Induction then gives, for every integer
\(s\geq0\),
\begin{align*}
F_{t+s}(W)
&\leq
\nu_\delta^{-1}
\left(\nu_\delta F_t(W)\right)^{2^s}\\
&\leq
\nu_\delta^{-1}
\exp\!\left(-\beta_\delta 2^s\right),
\end{align*}
where \(\beta_\delta\coloneqq-\log(\nu_\delta\delta)>0\).  Thus the
finite-iteration condition \(F_t(W)\leq\delta\) certifies
double-exponential decay of the average fidelity.  The upper bound
in~\eqref{Eq:average-fidelity-error-bounds} then gives \(P_t(W)\to0\), and
hence \(W\in\mathrm{Reg}_{\mathrm{BPQM}}(3,12)\).

For the numerical experiment, we evaluate \(2701\) eigen lists on a uniform
triangular grid whose edges are divided into \(72\) equal intervals.  Let
\(\widehat{F}_t(W)\) denote the population density-evolution estimate of
\(F_t(W)\), averaged over three independent runs with population size \(1200\).
After \(40\) iterations, we include a sampled channel \(W\) in the estimated
success region when
\begin{align}\label{eq:numerical-bpqm-tail-criterion}
\max_{33\leq t\leq40}\widehat{F}_t(W)
&\leq \delta.
\end{align}

The requirement over the last eight iterations reduces sensitivity to Monte
Carlo fluctuations and does not modify the asymptotic definition in
\eqref{eq:regular-bpqm-success-region}.  Since
\(\widehat{F}_t(W)\) is a finite-population estimate evaluated on a finite
grid, the blue set in Figure~\ref{fig:bpqm-success-region-q3} is an empirical
approximation to the set of channels satisfying
\(\max_{33\leq t\leq40}F_t(W)\leq\delta\).  This is a sufficient condition for
BPQM convergence, but the blue set does not rigorously characterize the full
BPQM success region.

\begin{figure}[H]
   \centering
   \captionsetup{font=small}
   \resizebox{0.5\textwidth}{!}{%
      \input{tikz/bpqm_success_region_q3_dv3_dc12}%
   }
   \caption{Numerical estimate of the BPQM success region for symmetric ternary
   PSCs and the \((d_v,d_c)=(3,12)\) regular ensemble with design rate
   \(R=3/4\).  Each point in the simplex represents an eigen list
   \(\blambda=[\lambda_0,\lambda_1,\lambda_2]\) satisfying
   \(\lambda_j\geq0\) and \(\sum_j\lambda_j=3\).  For
   \(\bmu=\blambda/3\), the uniform-input Holevo information is
   \(I(W)=H(\bmu)\).  The blue set satisfies the finite-iteration fidelity
   criterion in~\eqref{eq:numerical-bpqm-tail-criterion}.  The dashed curve is
   the design-rate Holevo boundary
   \(I(W)=R\log_2 3=(3/4)\log_2 3\).}
   \label{fig:bpqm-success-region-q3}
\end{figure}

\section{Minimum-Distance Criterion for Erasure Recovery}
\label{app:minimum-distance-erasure-recovery}

\begin{proof}[Proof of Lemma~\ref{lem:minimum-distance-erasure-recovery}]
Suppose first that $H_B$ does not have full column rank.  There is then a
nonzero vector $\bz_B\in\mathbb F_q^{|B|}$ satisfying
\begin{align*}
   H_B\bz_B=0.
\end{align*}
Extend $\bz_B$ to $\bz\in\mathbb F_q^N$ by setting $z_i=0$ for every
$i\notin B$.  It follows that
\begin{align*}
   H\bz=H_B\bz_B=0,
\end{align*}
so $\bz$ is a nonzero codeword in $\cC$.  Its Hamming weight satisfies
\begin{align*}
   \mathrm{wt}(\bz)
   \leq
   |B|
   <
   d_{\min}(\cC),
\end{align*}
contradicting the definition of minimum distance.  Hence $H_B$ has full
column rank.

Since the transmitted word satisfies $H\bc=0$, its restrictions obey
\begin{align*}
   H_G\bc_G+H_B\bc_B=0.
\end{align*}
Thus, $\bc_B$ is a solution of
\begin{align*}
   H_B\bc_B=-H_G\bc_G.
\end{align*}
Full column rank of $H_B$ implies that this system has at most one
solution.  Because the transmitted restriction $\bc_B$ is a solution, it
is the unique solution and can be recovered by Gaussian elimination.
\end{proof}

\section*{AI Statement}

The authors used ChatGPT with GPT-5.6 Sol and earlier GPT model versions to
improve the clarity and presentation of the manuscript.  They also used
Codex to refine scripts used to generate the numerical results.  The main
ideas and results, including the doubly exponential decay of the PGM error
under BPQM and the vanishing block-error probability of the decoder combining
BPQM with Gaussian elimination, were conceived and developed by the authors.
The authors
independently verified all mathematical arguments, numerical results, and
interpretations and take full responsibility for the content of the
manuscript.

\section*{Acknowledgements}

The authors thank Stephen Jordan and Noah Shutty for helpful discussions related to this work. JMR acknowledges support from the Swiss State Secretariat for Education, Research and Innovation (SERI) under contract No.
20QU-1\_225224 and the ETH Quantum Center. 
\printbibliography
\end{document}